\newcommand{\reffig}[1]{Fig. \ref{#1}}
\newcommand{\refeq}[1]{(\ref{#1})}
\renewcommand\arraystretch{1.5}
\newcounter{ct}
\newtheorem{proposition}{Proposition}
\begin{document}
	\title{{
	Resilient UAV Swarm Communications with Graph Convolutional Neural Network
	}}
	\author{
		Zhiyu Mou, Feifei Gao, Jun Liu, and Qihui Wu
		
			\thanks{Manuscript received June 15, 2021; revised September 10, 2021; accepted October
			 18, 2021. 
			 This work was supported in part by National Key Research and Development Program of China (2018AAA0102401), by the National Natural Science Foundation of China under Grant 61831013, by Beijing Municipal Natural Science Foundation
under Grant \{L182042, 4212002\}, and by Tsinghua University-China Mobile Communications Group Co.,Ltd. Joint Institute.
This work was also supported in part by the National Natural Science Foundation of China under Grant 61902214.
			The work of Qihui Wu was supported by Project of Major Scientific Instrument, Natural Science Foundation of China (NSFC) under Grant 61827801.
			 \emph{(Corresponding author: Jun Liu, Feifei Gao.)}}
		
	\thanks{
 		{Z. Mou, F. Gao are with the Institute for Artificial Intelligence, Tsinghua University (THUAI), State Key Lab of Intelligence Technologies and Systems, Tsinghua University, Beijing National Research Center for Information Science and Technology (BNRist), Department of Automation, School of Information Science and Technology, Tsinghua University, Beijing 100084, China (email: mouzy20@mails.tsinghua.edu.cn, feifeigao@ieee.org). }}
 		\thanks{{J. Liu is with the Institute of Network Sciences and Cyberspace, Tsinghua University, Beijing 100084, China, and also with the Beijing National Research Center for Information Science and Technology (BNRist), Tsinghua University, Beijing 100084, China (email: juneliu@tsinghua.edu.cn).}}
 		\thanks{{Q. Wu is with the College of Electronic and Information Engineering, Nanjing University of Aeronautics and Astronautics, Nanjing 210016, China (email: wuqihui2014@sina.com).}}
	}
	
	\maketitle	
	\date{}
\pagestyle{empty}  
\thispagestyle{empty} 
\begin{abstract}  
		In this paper, we study the self-healing problem of unmanned aerial vehicle (UAV) swarm network (USNET) that is required to quickly rebuild the communication connectivity under unpredictable external {destructions} (UEDs). Firstly, to cope with the \emph{one-off UEDs}, we propose a graph convolutional neural network (GCN) that can find the recovery topology of the USNET in an on-line manner. Secondly, to cope with \emph{general UEDs}, we develop a GCN based trajectory planning algorithm that can make UAVs rebuild the communication connectivity during the self-healing process. We also design a meta learning scheme to facilitate the on-line executions of the GCN. Numerical results show that the proposed algorithms can rebuild the communication connectivity of the USNET more quickly than the existing algorithms under both one-off UEDs and general UEDs.
		The simulation results also show that the meta learning scheme can not only enhance the performance of the GCN but also reduce the time complexity of the on-line executions.
	\end{abstract}
	\begin{IEEEkeywords}
		Resilient communication, self-healing, UAV swarm, graph convolutional network, meta learning 
	\end{IEEEkeywords}
\section{Introduction}
Unmanned aerial vehicle (UAV) swarm network (USNET) that contains hundreds or even thousands UAVs usually works in open, sometimes even harsh environments and is susceptible to external disruptions \cite{book}. Since the failure of any part of UAVs could be a fatal blow to
the entire USNET, the resilient USNETs with the self-healing capacity are urgently demanded in various applications, such as data collections \cite{yuzhang,zhiyumou}, rescue \cite{search_and_rescue}, security and surveillance \cite{security, jianweizhao}, etc. Researchers have studied the self-healing mechanisms for USNETs in multiple tasks.
For example, the authors of \cite{re_1} developed a real-time resilient method based on the communication connectivity of multi-UAV systems. The authors of \cite{re_2} proposed an intrusion detection scheme based on data exchanging through communication connections to improve the security resilience of the UAV network. Moreover, the authors of \cite{re_3} developed the resilient algorithms for localization, gathering, and network configurations that highly depend on the communication connectivity of the USNET. Obviously, the \emph{communication connectivity} plays an important role in different kinds of self-healing mechanisms, and thus the  \emph{self-healing of the communication connectivity} (SCC) becomes a basic requirement for various resilient USNETs.  

Many algorithms have been developed to deal with the SCC problem for the wireless sensor networks (WSNs) \cite{MCA,LeDiR,AuR,DARA,DORMs,RIM,PCR,HLNF,csds,hero,net}, and were later extended to the USNETs  \cite{neural_model,sidr}. However, there still remain several challenges to the SCC problem in USNET. Firstly, many existing algorithms \cite{LeDiR,AuR,DARA,RIM,PCR,net} are heuristic  and may not be able to guarantee the communication connectivity of the USNET. For example, these algorithms could not work when the number of UAVs is large, especially under massive {destructions}. Other algorithms \cite{DORMs,HLNF,csds,hero,sidr,neural_model} could make sure that the UAVs rebuild the communication connectivity but at the cost of lots of resources, such as self-healing time and communication overheads. 
The second challenge lies in the high time complexities during real-time executions. It is worth noting that the real-time execution time complexity is an important indicator to evaluate the resilience of the USNET, since it relates to the self-healing time and even the degree of {destructions} \cite{survey}. For example, the algorithm in \cite{MCA} needs to find the global cut vertexes of the WSN during the self-healing process, which makes its on-line execution time complexity increase with the size of the WSN. The algorithm in \cite{csds} needs to calculate the optimal critical sensors for WSNs during on-line executions, which may consume a lot of time. 

The third challenge is the difficulty in dealing with complex destructions. The external {destructions} can be divided into \emph{predictable external {destructions}} (PEDs) and \emph{unpredictable external {destructions}} (UEDs). PEDs can be mitigated or even avoided  by finding the pattern of {destructions}, while UEDs could have serious impacts on the USNETs and should be carefully handled \cite{book}. UEDs can be further divided into \emph{one-off UEDs} and \emph{general UEDs}. One-off UEDs happen only once and can destruct a random number of UAVs simultaneously. Almost all the existing UED algorithms \cite{MCA,LeDiR,AuR,DARA,DORMs,RIM,PCR,HLNF,csds,hero,net, neural_model,sidr} are proposed for one-off UEDs. Moreover, many of the UED algorithms
\cite{MCA,LeDiR,DARA,RIM,PCR} were designed regarding to the failure of only one UAV {in one-off UEDs}, which is relatively basic and simple. Other UED algorithms \cite{AuR, HLNF,DORMs} were developed for the failure of multiple UAVs {in one-off UEDs}, but exclusively focused on the scenarios where a small number of UAVs were destructed. In fact, a general UED\footnote{{A general UED can also be regarded as a sequence of one-off UEDs happened at different time steps.}} can destruct any number of UAVs at random time steps, which is more common in practice but more difficult to handle. However, to the best of our knowledge, the general UEDs have not been considered in literatures, yet.

In this paper, we study the SCC problem of the USNET under two types of UEDs, separately. To cope with one-off UEDs, we propose a graph convolutional operation (GCO) that can theoretically guarantee the SCC of the USNET. We then extend the GCO to a graph convolutional neural network (GCN) to minimize the SCC time of the USNET. Moreover, we design a meta learning scheme for the GCN to reduce the time complexity of on-line executions. To cope with general UEDs, we develop a monitoring mechanism that can detect UEDs for UAVs and design a self-healing trajectory planning algorithm based on the GCN and the monitoring mechanism. The numerical results show that the proposed algorithms can rebuild the communication connectivity of the USNET much faster than the existing algorithms under both one-off and general UEDs. The simulation results also show that the meta learning scheme can make the GCN converge faster and reduce the time of on-line executions under both types of UEDs.

\begin{table*}[t]
	\centering
	\caption{{The Summarization of Abbreviations}}
	\label{table:abbreviations}
	\begin{tabular}{cc|cc}
		\specialrule{0em}{1pt}{1pt}
		\hline
		\rowcolor[gray]{0.9}
		\small\textbf{{Abbreviations}}&\small \textbf{{Full Name}}&	\small\textbf{{Abbreviations}}&\small \textbf{{Full Name}}\\
		\hline
		\small\makecell[c]{{UAV}}&\small\makecell[c]{{unmanned aerial vehicle}}&\small\makecell[c]{{USNET}}&\small\makecell[c]{{unmanned aerial vehicle}\\ {swarm network}}\\
		\hline
		\small\makecell[c]{{SCC}}&\small\makecell[c]{{self-healing of the} \\{communication connectivity}}&\small\makecell[c]{{WSN}}&\small\makecell[c]{{wireless sensor network}}\\
		\hline
		\small\makecell[c]{{PED}}&\small\makecell[c]{{predictable external {destruction}}}&\small\makecell[c]{{UED}}&\small\makecell[c]{{unpredictable external {destruction}}}\\
		\hline
		\small\makecell[c]{{GCO}}&\small\makecell[c]{{graph convolutional operation}}&\small\makecell[c]{{GCN}}&\small\makecell[c]{{graph convolutional network}}\\
		\hline
		\small\makecell[c]{{CCN}}&\small\makecell[c]{{connected communication network}}&\small\makecell[c]{{MCL}}&\small\makecell[c]{{multi-hop of communication link}}\\
		\hline
		\small\makecell[c]{{RUAV}}&\small\makecell[c]{{remaining UAV}}&\small\makecell[c]{{A2A}}&\small\makecell[c]{{air-to-air}}\\
		\hline
		\small\makecell[c]{{CLEC}}&\small\makecell[c]{{communication link establish}\\{condition}}&\small\makecell[c]{{FT}}&\small\makecell[c]{{Fourier transform}}\\
		\hline
		\small\makecell[c]{{CR-MGC}}&\small\makecell[c]{{communication-relaxed meta}\\{graph convolution}\\{(dealing with one-off UEDs)}}&\small\makecell[c]{{VRG}}&\small\makecell[c]{{virtual RUAV graph}}\\
		\hline
		\small\makecell[c]{{GCL}}&\small\makecell[c]{{graph convolutional layer}}&\small\makecell[c]{{mGCN}}&\small\makecell[c]{{meta GCN}}\\
		\hline
		\small\makecell[c]{{IDB}}&\small\makecell[c]{{individual data base}}&\small\makecell[c]{{IISR}}&\small\makecell[c]{{individual index set of RUAVs}}\\
		\hline
		\small\makecell[c]{{CR-MGCM}}&\small\makecell[c]{{communication-relaxed meta}\\ {graph convolution method}\\ {(dealing with general UEDs}\\{using monitoring mechanisms)}}&\small\makecell[c]{{CR-MGCM$_{glob}$}}&\small\makecell[c]{{communication-relaxed meta}\\{graph convolution method}\\{(dealing with general UEDs} \\{using global information)}}\\
		\hline	
	\end{tabular}
\vspace{-0.1cm}
\end{table*}

The rest parts of this paper are organized as follows.  Section \ref{system_model} presents the system models of the SCC problem for USNET. Section \ref{CR-MGC} describes the proposed GCN and meta learning scheme under one-off UEDs. Section \ref{distributed_MGC} focuses on the monitoring mechanisms and trajectory planning algorithm of UAVs under the general UEDs. Simulation results and analysis are provided in Section \ref{section:simulations}, and conclusions are made in Section \ref{section:conclusions}. {The abbreviations are summarized in Table \ref{table:abbreviations}.}
 
\emph{Notations}: $x$, $\mathbf{x}$, $\mathbf{X}$ represent a scalar $x$, a vector $\mathbf x$ and a matrix $\mathbf X$, respectively; $\sum$, $\min$, $\max$ and $\nabla$ denote the sum, minimum, maximum and vector differential operator, respectively; $(x_{ij})$ represents a matrix with element $x_{ij}$ in the $i$-th row and the $j$-th column, and $(\mathbf{X})_{ij}$ represents the element of row $i$ and column $j$ in matrix $\mathbf{X}$; $\left\|\cdot\right\|_2$ and $\left\|\cdot\right\|_\infty$ denote the 2-norm and infinite norm of matrices, respectively; $\cup$, $\cap$ and $\backslash$ represent the union operator, the intersection operator and the difference operator between sets; $|\mathcal{S}|$ represents the number of elements in set $\mathcal{S}$; $\mathbb{R}^{N\times M}$, $\mathbb{S}^N$ and $\mathbb{S}^N_{+}$ represent the $N$-by-$M$ real matrix space, the $N$-by-$N$ symmetric matrix space and the $N$-by-$N$ positive semi-definite matrix space; $\mathbb{N}_{+}$ represents the set of positive integers; $\mathbf{1}_n$ represents an $n$-dimensional vector where the components are all $1$'s; $\mathbbm{1}\{\cdot\}$ represents the indicative function with range $\{0,1\}$, $\leftarrow$ denotes the assignment from right to left, while $\rightarrow$ represents the approximation of the right term by the left term; $\triangleq$ defines the symbol on the left by the equation on the right.
\section{System Model}
\label{system_model}
 \begin{figure*}[t]
	\centering
	\includegraphics[width=180mm]{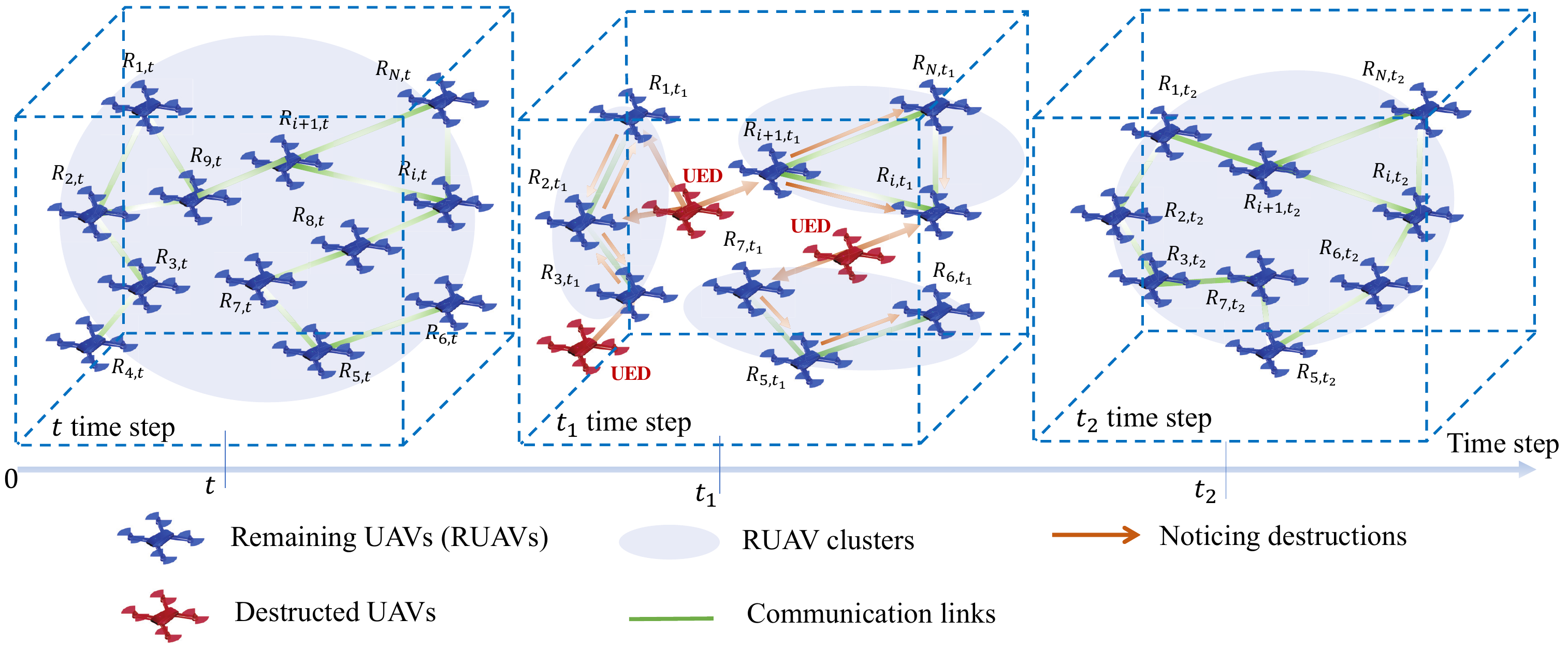}
	\caption{The USNET rebuild its communication connectivity under UEDs.}
	\label{fig:scene}
\end{figure*}
We consider a USNET with $N$ identical UAVs\footnote{The UAVs distribute sparsely to reduce the impacts of the UEDs. Dense gathering can increase the risk of losing more UAVs.}, where each UAV is endowed with a fixed index $i\in\mathcal{N}\triangleq\{1,2,...,N\}$, 
as shown in \reffig{fig:scene}. 
Establish an $X$-$Y$-$Z$ Cartesian coordinate for the USNET, and let the position of the $i$-th UAV at time step $t$ be $\mathbf{p}_{i,t}=[x_{i,t},y_{i,t},z_{i,t}]^T,\;i\in\mathcal{N}$, where $x_{i,t}$, $y_{i,t}$ and $z_{i,t}$ represent the $X$, $Y$ and $Z$ axis components, respectively. Each UAV can transmit signals to other UAVs with constant power $P$. The $i$-th and the $i'$-th UAVs can establish a communication link $e_{ii',t}$ (or $e_{i'i,t}$) at time step $t$ when the powers of the signals received by the $i$-th and the $i'$-th UAV from each other, denoted as $P_b(i,i')$ and $P_b(i',i)$, both exceed a threshold $P_0$, i.e.,
\begin{align}
\label{cl_1}
P_b(i,i')\ge P_0\;\;\text{and}\;\;P_b(i',i)\ge P_0.
\end{align}
Any two UAVs with a valid communication link are marked as \emph{neighbors} of each other. The initial USNET forms a \emph{connected communication network} (CCN), where each UAV can transmit data to any other UAVs in the USNET through \emph{multi-hop of communication links} (MCLs). 

Due to the hash environments, the UEDs can destruct random UAVs at any time step and thus destroy the CCN. The destructed UAVs are forced to detach from the USNET, which can be sensed by their neighbors. The remaining UAVs (RUAVs) react against the destructions and try to restore the CCN by adjusting their positions. Once RUAVs rebuild CCN, they stop flying immediately to avoid gathering denser such that the impact of the next UEDs can be reduced. 
We denote the index set of RUAVs at time step $t$ as $\mathcal{I}_{t}\triangleq\{i|\text{the } i\text{-th UAV remains at time step }t,i\in\mathcal{N}\}$. Let us sort the elements in $\mathcal{I}_t$ in an ascending order, and re-represent it as $\mathcal{I}_t=\{r_1,r_2,...,r_{|\mathcal{I}_t|}\}$, where $r_j$ represents the $j$-th smallest element in $\mathcal{I}_t$, or equivalently, the $j$-th smallest index among all RUAVs, 
$j\in\{1,2,...,|\mathcal{I}_t|\}$. 
Denote the RUAV with index $i$ at time step $t$ as RUAV$_{i,t}$, and then the set of RUAVs at time step $t$ can be defined as $\mathcal{R}_t\triangleq\{\text{RUAV}_{i,t}|i\in\mathcal{I}_t\}=\{\text{RUAV}_{r_1,t},\text{RUAV}_{r_2,t},...,\text{RUAV}_{r_{|\mathcal{I}_t|},t}\}$. 
Assume the magnitude of the flying speed of each UAV is a constant $v_0>0$. The speed of RUAV$_{i,t}$ at time step $t$ can thus be represented as $\mathbf{v}_{i,t}=\check{\mathbf{v}}_{i,t}v$, where $\check{\mathbf{v}}_{i,t}$ is the unit vector of the flying direction, i.e., $\left\|\check{\mathbf{v}}_{i,t}\right\|_2=1$, and $v\in\{v_0,0\}$. 

\subsection{Communication Link Between UAVs}
 We model the communication channels between UAVs as \emph{ air-to-air} (A2A) communication links \cite{A2A, yuzhang_2}. At each time step $t$, the power of the received signals of the $i$-th UAV from the $i'$-th UAV is calculated as\footnote{Note that the units of the variables in \refeq{equation:communication_power} are all dBs.}
\begin{align}
\label{equation:communication_power}
P_b(i,i')=P+G_1+G_2-L(\mathbf{p}_{i,t},\mathbf{p}_{i',t})-p_{\xi}(\mathbf{p}_{i,t},\mathbf{p}_{i',t}),
\end{align}
where $G_1$ and $G_2$ represent the constant antenna gains of the receiving and transmitting UAVs, respectively,  $L(\mathbf{p}_{i,t},\mathbf{p}_{i',t})$ is the large-scale fading effect, and $p_{\xi}(\mathbf{p}_{i,t},\mathbf{p}_{i',t})$ is the small-scale fading effect. Since there is no ground obstacle for USNET, the large-scale effect $L(\mathbf{p}_{i,t},\mathbf{p}_{i',t})$ can be expressed as 
\begin{align}
\label{cl_3}
L(\mathbf{p}_{i,t},\mathbf{p}_{i',t})=10\alpha\log_{10}\bigg(\frac{4\pi \left\|\mathbf{p}_{i,t}-\mathbf{p}_{i',t}\right\|_2f_c}{v_c}\bigg),
\end{align}
where $\alpha>0$ is the path loss exponent, $f_c$ is the electromagnetic wave frequency, and $v_c$ is the speed of light. The small-scale fading effect $p_{\xi}(\mathbf{p}_{i,t},\mathbf{p}_{i',t})$ is usually modeled as the Rice function \cite{rice}, i.e.,
\begin{align}
\label{cl_4}
p_{\xi}(\mathbf{p}_{i,t},\mathbf{p}_{i',t})&=\frac{\left\|\mathbf{p}_{i,t}-\mathbf{p}_{i',t}\right\|_2}{\sigma_0^2}\exp\bigg(\frac{-\left\|\mathbf{p}_{i,t}-\mathbf{p}_{i',t}\right\|_2^2-\rho^2}{2\sigma_0^2}\bigg)\notag\\&I_0(2K\left\|\mathbf{p}_{i,t}-\mathbf{p}_{i',t}\right\|_2),
\end{align}
where $\rho$ and $\sigma_0$ represent the strength of the dominant
and scattered (non-dominant) paths, respectively, $I_0$ is the $0$-th order modified Bessel function of the first kind, and $K=\frac{\rho^2}{2\sigma_0^2}$ is the Rice factor. Since the received signal power $P_b(i,i')$ only relates to the relative distance between the $i$-th and the $i'$-th UAVs, i.e., $l_{ii',t}= l_{i'i,t}\triangleq\left\|\mathbf{p}_{i,t}-\mathbf{p}_{i',t}\right\|_2$, {the received signal power $P_b(i',i)$ equals to $P_b(i,i')$, i.e.,}
\begin{align}
\label{cl_5}
P_b(i',i)=P_b(i,i').
\end{align}
From \eqref{cl_1}, \eqref{equation:communication_power}, \eqref{cl_3}, \eqref{cl_4} and \eqref{cl_5}, we know that any two distinct UAVs with index $i$ and $i'$ can establish a communication link if their distance $l_{ii',t}$ satisfies:
\begin{align}
\label{CL}
10\alpha&\log_{10}\bigg(\frac{4\pi l_{ii',t}f_c}{v_c}\bigg)+\frac{l_{ii',t}}{\sigma_0^2}\exp\bigg(\frac{-l_{ii',t}^2-\rho^2}{2\sigma_0^2}\bigg)I_0(2Kl_{ii',t})\notag\\&\le P+G_1+G_2-P_0.
\end{align} 
Equation \refeq{CL} is called as the \emph{communication link establish condition} (CLEC). 

\subsection{RUAV Graph}
RUAVs at each time step $t$ can be viewed as an undirected graph $\mathcal{G}_{t}=\{\mathcal{R}_{t}, \mathcal{E}_{t},\mathbf{X}_t\}$ \cite{sidr}, named as \emph{RUAV graph}, where $\mathcal{R}_{t}$ acts as the \emph{node set}, and $\mathcal{E}_{t}=\{e_{ii',t}|i,i'\in\mathcal{I}_t \}$ is the \emph{edge set} containing all the communication links of RUAVs. The third term $\mathbf{X}_t\in\mathbb{R}^{|\mathcal{I}_t|\times 3}$ is the \emph{topology matrix} that concatenates the positions of RUAVs, i.e., $\mathbf{X}_t=[\mathbf{p}_{r_1,t},\mathbf{p}_{r_2,t},...,\mathbf{p}_{r_{|\mathcal{I}_t|},t}]^T$. 
We define an \emph{RUAV cluster} as a subset of $\mathcal{R}_t$, where RUAVs in an RUAV cluster form a local CCN but with no communication links to other RUAV clusters. Denote $C_t\in\mathbb{N}_{+}$ as the number of RUAV clusters at time step $t$. 
Due to the UEDs, the RUAV graph $\mathcal{G}_t$ contains at least one {RUAV cluster} at each time step $t$, i.e., $C_t\ge 1$. 
For example, as shown in \reffig{fig:scene}, the RUAV graph $\mathcal{G}_t$ has $3$ RUAV clusters at time step $t_1$, while emerges to one RUAV cluster and forms a CCN at time step $t_2$.

Define the \emph{adjacency matrix} of the RUAV graph $\mathcal{G}_t$ as $\mathbf{A}_t=(a_{jj',t})\in\mathbb{S}^{|\mathcal{I}_t|}$, where $a_{jj',t}\in\{0,1\}, j,j'\in\{1,2,...,|\mathcal{I}_t|\}$. Note that if $j\ne j'$ and the communication link $e_{r_jr_{j'},t}$ exists between RUAV$_{r_j,t}$ and RUAV$_{r_{j'},t}$, then $a_{jj',t}=a_{j'j,t}=1$; otherwise $a_{jj',t}=a_{j'j,t}=0$. The \emph{degree matrix} of the RUAV graph $\mathcal{G}_t$ is defined as a diagonal matrix $\mathbf{D}_t=\text{diag}(d_{1,t}, d_{2,t},...,d_{{|\mathcal{I}_t|},t})\in\mathbb{S}^{|\mathcal{I}_t|}$,where $d_{j,t}=\sum_{j'=1}^{|\mathcal{I}_t|}a_{jj',t}$ is the number of the neighbors of RUAV$_{r_j,t}$. The \emph{Laplace matrix} of the RUAV graph $\mathcal{G}_t$ is defined as the difference between  $\mathbf{D}_t$ and $\mathbf{A}_t$, i.e.,
\begin{align}
\mathbf{L}_t=\mathbf{D}_t-\mathbf{A}_t.
\end{align}
As the Laplace matrix $\mathbf{L}_t$ is a positive semi-definite matrix \cite{laplace}, we can perform eigenvalue decomposition,
\begin{align}
\label{eigen_factor}
\mathbf{L}_t=\mathbf{U}_t\mathbf{\Lambda}_t\mathbf{U}^T_t,
\end{align}
where $\mathbf{U}_t=[\mathbf{u}_{1,t},\mathbf{u}_{2,t},...,\mathbf{u}_{|\mathcal{I}_t|,t}]$ is a unitary matrix composed of $|\mathcal{I}_t|$ mutually orthogonal eigenvectors, and $\mathbf{\Lambda}_t=\text{diag}(\lambda_{1,t},\lambda_{2,t},...,\lambda_{|\mathcal{I}_t|,t})$  is a diagonal matrix with non-negative eigenvalues.
Notice that 0 must be one of the eigenvalues of $\mathbf{L}_t$, since 
\begin{align}
\renewcommand{\arraystretch}{1}
\mathbf{L}_t\mathbf{1}_{|\mathcal{I}_t|}&=(\mathbf{D}_t-\mathbf{A}_t)\mathbf{1}_{|\mathcal{I}_t|}=
\begin{pmatrix}
d_{1,t}-\sum_{j=1}^{|\mathcal{I}_t|}a_{1j,t} \\
\vdots\\
d_{{|\mathcal{I}_t|},t}-\sum_{j=1}^{|\mathcal{I}_t|}a_{{|\mathcal{I}_t|}j,t}
\end{pmatrix}\notag\\
&=\mathbf{0}=0\mathbf{1}_{|\mathcal{I}_t|},
\end{align}
and $\mathbf{1}_{|\mathcal{I}_t|}$ is one possible corresponding eigenvector. The algebraic multiplicity of the zero eigenvalue $\Omega(\lambda=0|\mathbf{L}_t)$ equals to the number of RUAV clusters $C_t$ of the RUAV set $\mathcal{R}_t$ at each time step $t$, i.e., $\Omega(\lambda=0|\mathbf{L}_t)=C_t$ \cite{laplace}. Hence, if $\Omega(\lambda=0|\mathbf{L}_t)=C_t=1$, then RUAVs form a CCN, while if $\Omega(\lambda=0|\mathbf{L}_t)=C_t>1$ otherwise. 
\subsection{Problem Formulation}
The goal of the SCC problem of resilient USNET is that RUAVs should try to reform CCNs as quickly as possible after UEDs. We first study the SCC problem under one-off UEDs, where the initial USNET is {destructed} by a random UED only once at time step $t$ and self-heal afterwards. For a USNET with $N$ UAVs, there are $2^N$ {case}s of {one-off} UEDs, where different {case}s of {one-off} UEDs {destruct} different number of UAVs with different indexes. Note that not all {case}s of {one-off} UEDs can destroy the communication connectivity of the USNET, and we only consider the {one-off} UEDs that can break up the USNET into more than one RUAV clusters {(see Appendix \ref{appendix:type-one-off-UEDs})}. Denote the flying time of RUAV$_{i,t}$ during the self-healing process as $\phi[i]$. Then the total self-healing time steps can be expressed as $\max_{i\in\mathcal{I}_t}\phi[i]$. Since RUAV$_{i,t}$ should fly in a straight line to reduce $\phi[i]$ and since the {magnitude} of the flying speed is a constant $v_0$, the self-healing time steps $\phi[i]$ is proportional to the flying distance of RUAV$_{i,t}$. Hence, the SCC problem under one-off UEDs is equivalent to finding a topology matrix  $\widetilde{\mathbf{{X}}}_t=[\widetilde{\mathbf{p}}_{r_1,t},\widetilde{\mathbf{p}}_{r_2,t},...,\widetilde{\mathbf{p}}_{r_{|\mathcal{I}_{r,0}|},t}]^T$ that can minimize the largest displacement among all RUAVs, i.e.,
\begin{align}
(\mathbf{P1}):\;\min_{\widetilde{\mathbf{X}}_t}\quad&J_s=\max_{i\in\mathcal{I}_t}v_0\phi[i]=\max_{i\in\mathcal{I}_t}\left\|\widetilde{\mathbf{p}}_{i,t}-\mathbf{p}_{i,t}\right\|_2\label{optimization_single}\\
\operatorname{s.t.}\quad&\widetilde{\mathcal{G}}_t=\{\mathcal{R}_t,\widetilde{\mathcal{E}}_t,\widetilde{\mathbf{X}}_t\}\text{ forms a CCN under CLEC}, \tag{{\ref{optimization_single}{a}}}\label{constraint_1_1_}
\end{align}
where $\mathbf{p}_{i,t}=\mathbf{p}_{i,0}$, and $\widetilde{\mathcal{E}}_t\triangleq\{e_{ii',t}| \widetilde{l}_{ii',t}=\left\|\widetilde{\mathbf{p}}_{i,t}-\widetilde{\mathbf{p}}_{i',t}\right\|_2 \text{satisfies CLEC}, \forall i\neq i', i,i'\in\mathcal{I}_t\}$. 

We next study the SCC problem under the general UEDs, where 
the USNET needs to quickly rebuild its communication connectivity under the general UEDs. {Under the circumstances}, RUAVs can only obtain \emph{partial information} from each other and need to adjust their flying directions continuously during the self-healing process. 
We consider a period of $T$ time steps. Define the \emph{connected time step ratio} $J_c=\frac{1}{T}\sum_{t=1}^T\mathbbm{1}\{C_t=1\}$ as the ratio between the number of time steps when the USNET forms a CCN and the total time steps $T$. Let $J_c$ be the performance indicator of the USNET. Then the SCC problem under the general UEDs can be formulated as a functional optimization problem
\begin{align}
(\mathbf{P2}):\;&\mathop{\rm{max}}_{{\mathbf{v}}_{1,t},{\mathbf{v}}_{2,t},...,{\mathbf{v}}_{N,t}\atop t\in\{1,2,...,T\}}\quad J_c=\frac{1}{T}\sum_{t=1}^T\mathbbm{1}\{C_t=1\}\label{optimization}\\
\operatorname{ s.t.} \;\; &\mathbf{p}_{i,t}=\mathbf{p}_{i,t-1}+{\mathbf{v}}_{i,t}, \;\forall i\in\mathcal{I}_t,t\in\{1,2,...,T\}\tag{{\ref{optimization}{a}}}\label{constraint_1}\\
&\mathcal{I}_T\subseteq\mathcal{I}_{T-1}\subseteq...\subseteq\mathcal{I}_{0},\tag{{\ref{optimization}{c}}}\label{constraint_3}\\
&\refeq{CL},\tag{{\ref{optimization}{d}}}\label{constraint_4}
\end{align}
where \refeq{constraint_1} is the dynamic model of RUAVs, \refeq{constraint_3} represents the general UEDs to the USNET, and \refeq{constraint_4} is the CLEC.

\section{SCC Algorithm for One-off UEDs}
\label{CR-MGC}
Let us consider the SCC problem under one-off UEDs $(\mathbf{P1})$. Inspired from the existing swarm algorithms \cite{fish_swarm,ant_swarm}, one RUAV should pay more attention on the positions of its neighbors during the self-healing process. Since graph neural networks (GNNs) \cite{ICA,deepwalk,gcn_2017, GCN} can efficiently gather the neighbor information for each RUAV, we develop a GNN-based algorithm for $(\mathbf{P1})$.

Analogous to the Fourier transform (FT) in the time domain, we can define the FT of the RUAV graph $\mathcal{G}_t$ by the eigen-decomposition of the Laplace matrix $\mathbf{L}_t$ in \refeq{eigen_factor}, where the eigenvectors $\mathbf{u}_{j,t}$ denote the Fourier modes and the eigenvalues $\lambda_{j,t}$ denote the frequency of the RUAV graph $\mathcal{G}_t$ \cite{GCN}. 
Regarding the topology matrix $\mathbf{X}_t$ as a signal of the RUAV graph $\mathcal{G}_t$, we can define the FT of $\mathbf{X}_t$ as $\breve{\mathbf{X}}_t=\mathbf{U}^T_t\mathbf{X}_t$. Hence, the GCO between $\mathbf{X}_t$ and the convolutional kernel $\mathbf{g}\in\mathbb{R}^{|\mathcal{I}_t|\times 3}$ can be expressed as \cite{GCN_survey}
\begin{align}
\label{gcn_original}
\mathbf{g}\circ\mathbf{X}_t=\mathbf{U}_t[(\mathbf{U}^T_t\mathbf{g})\odot(\mathbf{U}^T_t\mathbf{X}_t)],
\end{align} 
where $\circ$ represents the convolutional operator, and $\odot$ is the Hadamard product.
To decrease the computation complexity of the convolutional kernel, we approximate \refeq{gcn_original} by truncated Chebyshev polynomials of the first class \cite{gcn_2017}, and the GCO can be expressed as
\begin{align}
\mathbf{U}_t[(\mathbf{U}^T_t\mathbf{g})\odot(\mathbf{U}^T_t\mathbf{X}_t)]&=\mathbf{U}_t\bigg(\sum_{s=0}^1\theta_sF_s({\mathbf{\Delta}}_t)\bigg)\mathbf{U}_t\mathbf{X}_t\notag\\
&=\theta_0\mathbf{X}_t+\theta_1(\mathbf{L}_t-\mathbf{I}_{t})\mathbf{X}_t,
\end{align}
where $F_s$ represents the $s$-th term in the Chebyshev polynomials, $\mathbf{I}_{t}\in\mathbb{S}^{|\mathcal{R}_t|}$ is the identity matrix, $\theta_0$ and $\theta_1$ are two constant parameters, and ${\mathbf{\Delta}}_t=\frac{2\mathbf{\Lambda}_t}{\mathbf{\lambda}_{1,t}}-\mathbf{I}_t$. Particularly, we define a hyperparameter $H_t\triangleq-\theta_1>0$ and let $\theta_0-\theta_1=1$. Then we can define a GCO on the RUAV graph $\mathcal{G}_t$ as 
\begin{align}
\label{gcn_equ}
\mathbf{g}\circ\mathbf{X}_t=(\mathbf{I}_t-H_t\mathbf{L}_t)\mathbf{X}_t.
\end{align} 
Based on \refeq{gcn_equ}, we propose a \emph{communication-relaxed meta graph convolution} (CR-MGC) algorithm for the SCC problem under one-off UEDs $(\mathbf{P1})$. The CR-MGC includes the virtual communication relaxing part and the meta graph convolutional network part, as will be stated as follows.
\subsection{Virtual Communication Relaxing}
After UED at time step $t$, the RUAV graph  $\mathcal{G}_t$ cannot form a CCN under the CLEC \refeq{CL}. 
Nevertheless, we here build a \emph{virtual RUAV graph} (VRG), denoted as $\mathcal{G}^v_t=\{\mathcal{R}^v_t,\mathcal{E}^v_t,\mathbf{X}^v_t\}$, that has the same node set and topology matrix with the RUAV graph $\mathcal{G}_t$, but has the different edge set, i.e., $\mathcal{R}^v_t=\mathcal{R}_t$, $\mathbf{X}^v_t=\mathbf{X}_t$, but  $\mathcal{E}^v_t\neq\mathcal{E}_t$. We want the VRG to form a CCN. To this end, we 
design the edge set as $\mathcal{E}^v_t=\{e^v_{ii',t}|l_{ii',t}\le d_t^v,\forall i\ne i',i,i'\in\mathcal{I}_t\}$, where $d_t^v>0$ is a hyperparameter, named as the \emph{virtual distance}. 
This indicates that any two distinct RUAV$_{i,t}$ and RUAV$_{i',t}$ can establish a communication link $e^v_{ii',t}$ in the VRG if their distance is within the range of $d_t^v$.
Since the VRG is expected to form a CCN, the virtual distance $d_t^v$ should be large enough to make RUAVs establish sufficient communication links in the VRG. Obviously, there must exist a minimum threshold $d_{min,t}^v$ that can just guarantee the VRG to form a CCN. We propose 
an algorithm to find such $d_{min,t}^v$ in Algorithm \ref{algorithm:find_d_min}.
\begin{algorithm}[t]
	\normalsize\caption{{Find the Minimum Threshold $d_{min,t}^v$ for the Virtual Distance $d_t^v$} }
	\label{algorithm:find_d_min}
	\setstretch{1} 
	{\bf Inputs:} The topology matrix $\mathbf{X}_t$, the index set of RUAVs $\mathcal{I}_t$.\\
	{\bf Outputs:} The minimum threshold $d_{min,t}^v$.\\
	{\bf Initialize:} An empty set $\mathcal{M}_t$ to store the pair-wise distance.
	\begin{algorithmic}[1]
		\normalsize 
		\State Calculate the distance between each pair of RUAVs and store them in $\mathcal{M}_t$, i.e., $\mathcal{M}_t=\{\left\|\mathbf{p}_{i,t}-\mathbf{p}_{i',t}\right\|_2\;|\;\forall i\ne i', i,i'\in\mathcal{I}_t\}$, the size of $\mathcal{M}_t$ is $|\mathcal{M}_t|=\frac{|\mathcal{I}_t|(|\mathcal{I}_t|-1)}{2}$;
		\State Sort the elements in $\mathcal{M}_t$ in ascending order, i.e., $\mathcal{M}_t=\{m_{\zeta,t}|\zeta\in\{1,2,...,\frac{|\mathcal{I}_t|(|\mathcal{I}_t|-1)}{2}\}\}$, where $m_{1,t}\le m_{2,t}\le ...\le m_{\frac{|\mathcal{I}_t|(|\mathcal{I}_t|-1)}{2},t}$;
		\For{$\zeta=1$ to $\frac{|\mathcal{I}_t|(|\mathcal{I}_t|-1)}{2}$}
		\State $d_{min,t}^v\leftarrow m_{\zeta,t}$;
		\State Calculate the Laplace matrix $\mathbf{L}_{t}$ of the VRG based on $m_{\zeta,t}$; 
		\If{the algebraic multiplicity of the zero eigenvalue of  $\mathbf{L}_{t}$ is 1}
		\State Break;
		\EndIf
		\EndFor
	\end{algorithmic}
\end{algorithm}
In addition, a meaningful $d_t^v$ should not be larger than a maximum threshold $d_{max,t}^v$, by which any two RUAVs in the VRG can establish a communication link. The maximum threshold $d_{max,t}^v$ can be calculated as $d_{max,t}^v=\max_{i,i'\in\mathcal{I}_t}\{\left\|\mathbf{p}_{i,t}-\mathbf{p}_{i',t}\right\|_2\}$.
Hence, we let the virtual distance $d_t^v$ be in the range of $[d_{min,t}^v,d_{max,t}^v]$, or equivalently, we let
\begin{align}
\label{d_range}
d_t^v=\eta d_{min,t}^v+(1-\eta) d_{max,t}^v,
\end{align}
where $\eta\in[0,1]$ is a hyperparameter. Then the VRG can form a CCN. 
The best choice of $\eta$, denoted as $\eta^\star$, will be illustrated in Section \ref{section:hyperparameter}.

We can derive the adjacency matrix of VRG as $\mathbf{A}_t^v=(a_{jj',t}^v)\in\mathbb{S}^{|\mathcal{I}_t|}$, where $a_{jj',t}\in\{0,1\},\forall j,j'\in\{1,2,...,|\mathcal{I}_t|\}$. Note that if $j\ne j'$ and the communication link $e^v_{jj',t}$ exists, then $a_{jj',t}^v=a_{j'j,t}^v=1$, otherwise $a_{jj',t}^v=a_{j'j,t}^v=0$; The degree matrix of VRG is $\mathbf{D}_t^v=\text{diag}(d_{1,t}^v, d_{2,t}^v,...,d_{{|\mathcal{I}_t|},t}^v)\in\mathbb{S}^{|\mathcal{I}_t|}$, where $d_{j,t}^v=\sum_{j'=1}^{|\mathcal{I}_t|}a^v_{jj',t}$; The Laplace matrix of VRG is $\mathbf{L}^v_t=\mathbf{D}^v_t-\mathbf{A}^v_t$.
\subsection{Meta Graph Convolutional Network}
With the Laplace matrix $\mathbf{L}^v_t$ of VRG, we can define a GCO $G(\cdot)$ as
\begin{align}
\label{virtual_gcn_equ}
\mathbf{g}\circ\mathbf{X}_t=G(\mathbf{X}_t)=(\mathbf{I}_t-H_t\mathbf{L}_t^v)\mathbf{X}_t.
\end{align}
We then apply the GCO $G(\cdot)$ to the RUAV graph.

\subsubsection{Theoretical guarantee of GCOs in finding CCNs}
\label{section:gco}
The topology matrix {in the $k$-th} iteration of GCO $G(\cdot)$ is calculated as
\begin{align}
	\mathbf{X}^{k}_t&=(\mathbf{I}_t-H_t\mathbf{L}_t^v)\mathbf{X}^{k-1}_t=...=(\mathbf{I}_t-H_t\mathbf{L}_t^v)^{k-1}\mathbf{X}^{1}_t\notag\\
	&=(\mathbf{I}_t-H_t\mathbf{L}_t^v)^{k}\mathbf{X}_t,
\end{align}
or equivalently
\begin{align}
	\mathbf{X}^{k}_t=G(\mathbf{X}^{k-1}_t)=...=G^{k-1}(\mathbf{X}^{1}_t)=G^{k}(\mathbf{X}_t),
\end{align}
where $k\in\mathbb{N}_+$. {We can prove the following proposition on the GCO $G(\cdot)$. }
\begin{proposition}
Let $\mathbf{c}\in\mathbb{R}^3$ be an arbitrary constant vector. 
	In the metric space $\{\mathbf{X}_t\mid \frac{1}{|\mathcal{I}_t|}\sum_{i\in\mathcal{I}_t}\mathbf{p}_{i,t}=\mathbf{c}\}\subset \mathbb{R}^{|\mathcal{I}_t|\times 3}$, the GCO $G(\cdot)$ is a contraction mapping \cite{cm} when $0< H_t\le\frac{1}{\left\|\mathbf{A}^v_t\right\|_\infty}$. There exists and only exists one topology matrix $\overline{\mathbf{X}}_t\triangleq[\overline{\mathbf{p}}_{r_1,t},\overline{\mathbf{p}}_{r_2,t},...,\overline{\mathbf{p}}_{r_{|\mathcal{I}_t|},t}]^T$ such that
\begin{align}
	\overline{\mathbf{X}}_t=G(\overline{\mathbf{X}}_t)=\lim_{k\rightarrow\infty}G^k(\mathbf{X}_t),
\end{align}
where the positions of RUAV in $\overline{\mathbf{X}}_t$ all have the same value $\mathbf{c}$, i.e., $\overline{\mathbf{X}}_t=[\mathbf{c},\mathbf{c},...,\mathbf{c},]^T$.
\end{proposition}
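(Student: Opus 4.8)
The plan is to recognize $G$ as an affine map driven by the symmetric matrix $\mathbf{M}_t\triangleq\mathbf{I}_t-H_t\mathbf{L}_t^v$ and to verify the two hypotheses of the Banach fixed-point theorem \cite{cm}: that $G$ maps the affine set $\mathcal{C}\triangleq\{\mathbf{X}_t\mid\frac{1}{|\mathcal{I}_t|}\sum_{i\in\mathcal{I}_t}\mathbf{p}_{i,t}=\mathbf{c}\}$ into itself, and that it is a strict contraction there. For the first point I would left-multiply $G(\mathbf{X}_t)=\mathbf{M}_t\mathbf{X}_t$ by $\mathbf{1}_{|\mathcal{I}_t|}^T$ and use $\mathbf{1}^T\mathbf{L}_t^v=\mathbf{0}^T$ (which holds because $\mathbf{L}_t^v$ is symmetric with $\mathbf{L}_t^v\mathbf{1}=\mathbf{0}$, exactly as shown for $\mathbf{L}_t$ in the excerpt) to conclude that the column sums of $\mathbf{X}_t$, hence the mean of its rows, are preserved, so $G(\mathcal{C})\subseteq\mathcal{C}$; since $\mathcal{C}$ is a closed affine subspace of $\mathbb{R}^{|\mathcal{I}_t|\times3}$, it is complete, which is what Banach requires.

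For the contraction, note that for $\mathbf{X}_t,\mathbf{Y}_t\in\mathcal{C}$ the difference $\mathbf{Z}=\mathbf{X}_t-\mathbf{Y}_t$ has zero row-mean, i.e. every column of $\mathbf{Z}$ lies in $\mathbf{1}^\perp$, and $G(\mathbf{X}_t)-G(\mathbf{Y}_t)=\mathbf{M}_t\mathbf{Z}$. The key step is to control $\mathbf{M}_t$ on $\mathbf{1}^\perp$. Diagonalizing the symmetric positive semi-definite matrix $\mathbf{L}_t^v$ as in \refeq{eigen_factor} with eigenvalues $0=\lambda_1<\lambda_2\le\cdots\le\lambda_{|\mathcal{I}_t|}$, where $\lambda_2>0$ precisely because the VRG is connected (the zero eigenvalue has multiplicity $C_t=1$), the eigenvalues of $\mathbf{M}_t$ are $1-H_t\lambda_j$, and $\mathbf{1}^\perp$ is exactly the span of the eigenvectors for $j\ge2$. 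Hence the restriction of $\mathbf{M}_t$ to $\mathbf{1}^\perp$ has operator norm $\gamma=\max_{j\ge2}|1-H_t\lambda_j|$, so $\|\mathbf{M}_t\mathbf{Z}\|\le\gamma\|\mathbf{Z}\|$ (applying the bound columnwise, e.g. in the Frobenius norm). It then remains to show $\gamma<1$: the upper bound $1-H_t\lambda_j<1$ is immediate from $\lambda_j>0$, while $1-H_t\lambda_j>-1$ follows from $H_t\lambda_{|\mathcal{I}_t|}<2$, which I would obtain from the Gershgorin estimate $\lambda_{|\mathcal{I}_t|}\le2\max_j d_{j,t}^v=2\|\mathbf{A}_t^v\|_\infty$ combined with $H_t\le1/\|\mathbf{A}_t^v\|_\infty$.

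Finally, Banach's theorem yields a unique $\overline{\mathbf{X}}_t\in\mathcal{C}$ with $\overline{\mathbf{X}}_t=G(\overline{\mathbf{X}}_t)=\lim_{k\rightarrow\infty}G^k(\mathbf{X}_t)$. To identify it, I would read $\overline{\mathbf{X}}_t=\mathbf{M}_t\overline{\mathbf{X}}_t$ as $H_t\mathbf{L}_t^v\overline{\mathbf{X}}_t=\mathbf{0}$, so each column of $\overline{\mathbf{X}}_t$ lies in $\ker\mathbf{L}_t^v=\operatorname{span}\{\mathbf{1}\}$ (again using connectivity), forcing every row of $\overline{\mathbf{X}}_t$ to equal one common vector; the mean constraint defining $\mathcal{C}$ then pins that vector to $\mathbf{c}$, giving $\overline{\mathbf{X}}_t=[\mathbf{c},\mathbf{c},\ldots,\mathbf{c}]^T$.

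The main obstacle I anticipate is the strictness of $\gamma<1$ at the endpoint $H_t=1/\|\mathbf{A}_t^v\|_\infty$: the Gershgorin estimate only gives $\lambda_{|\mathcal{I}_t|}\le2\|\mathbf{A}_t^v\|_\infty$, and equality does occur for regular bipartite graphs, where $1-H_t\lambda_{|\mathcal{I}_t|}=-1$ and the map is merely non-expansive rather than contractive. I would handle this either by arguing that the distance-threshold VRG is not regular-bipartite in generic configurations, so the inequality is strict, or by reinterpreting $\mathbf{M}_t$ as a row-stochastic matrix — nonnegative with unit row sums exactly under $H_t\le1/\|\mathbf{A}_t^v\|_\infty$, since the diagonal entries are $1-H_t d_{j,t}^v\ge0$ — and invoking primitivity (irreducibility from connectivity together with aperiodicity from a positive diagonal entry) to push the sub-dominant eigenvalue strictly inside the unit disc.
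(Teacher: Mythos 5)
Your proof is correct on the open range $0<H_t<\frac{1}{\left\|\mathbf{A}^v_t\right\|_\infty}$ and takes a genuinely different route from the paper's. The paper works with the metric $d(\mathbf{X}_t',\mathbf{X}_t'')=\left\|\mathbf{X}_t'-\mathbf{X}_t''\right\|_\infty$, proves sub-multiplicativity of $\left\|\cdot\right\|_\infty$, and computes $\left\|\mathbf{I}_t-H_t\mathbf{L}_t^v\right\|_\infty=1$; since this only gives non-expansiveness, it supplements the bound with an equality-case analysis of the triangle inequality (its ``inference 1/2''), arguing that equality forces $\mathbf{X}_t'=\mathbf{X}_t''$ on the zero-mean affine set. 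You instead diagonalize the symmetric matrix $\mathbf{M}_t=\mathbf{I}_t-H_t\mathbf{L}_t^v$, identify $\mathbf{1}^\perp$ with the span of the non-kernel eigenvectors (using $\lambda_2>0$ from connectivity of the VRG), and bound the restricted operator norm by $\gamma=\max_{j\ge2}|1-H_t\lambda_j|$. What your route buys is decisive: for $H_t$ strictly inside the range it produces a \emph{uniform} contraction constant $\gamma<1$, which is exactly what Banach's theorem requires, whereas the paper's route only establishes pointwise strictness (either the distance strictly decreases or the two points coincide), and the jump from that to a uniform $\delta\in(0,1)$ is never justified. What the paper's route buys is elementariness: it needs no spectral graph theory (no Gershgorin bound, no $\lambda_2>0$), only norm manipulations. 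The closure argument and the identification of the fixed point via $\ker\mathbf{L}_t^v=\operatorname{span}\{\mathbf{1}\}$ are essentially identical in both proofs.

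The endpoint obstruction you flag is genuine, and it is not an artifact of your method: at $H_t=\frac{1}{\left\|\mathbf{A}^v_t\right\|_\infty}$ with a regular bipartite VRG the proposition itself fails. The simplest instance is $|\mathcal{I}_t|=2$: the VRG is necessarily $K_2$, so $\left\|\mathbf{A}^v_t\right\|_\infty=1$ and $H_t=1$ is admitted by the hypothesis, yet then $G$ swaps the two rows of $\mathbf{X}_t$, so $G^k(\mathbf{X}_t)$ oscillates with period two and $\lim_{k\rightarrow\infty}G^k(\mathbf{X}_t)$ does not exist unless $\mathbf{X}_t=\overline{\mathbf{X}}_t$. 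The paper's equality-case analysis overlooks this because its ``inference 1'' takes all terms in the triangle-inequality equality to be nonnegative, whereas equality holds just as well when they are all nonpositive --- precisely what happens in the bipartite example, where $\mathbf{M}_t(\mathbf{X}_t'-\mathbf{X}_t'')=-(\mathbf{X}_t'-\mathbf{X}_t'')\neq\mathbf{0}$. Your Perron--Frobenius repair (row-stochasticity of $\mathbf{M}_t$, irreducibility from connectivity, aperiodicity from a positive diagonal entry or an odd cycle) is the right one and covers every endpoint case except connected regular bipartite VRGs, where no proof can succeed; your first suggested fix (genericity of the configuration) should be dropped, and the clean resolution is to state the result for $0<H_t<\frac{1}{\left\|\mathbf{A}^v_t\right\|_\infty}$ or to exclude regular bipartite VRGs at the endpoint.
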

\begin{proof}
{See Appendix \ref{appendix:proposition_1}.}
\end{proof} 

Therefore, there must exist a $k^*\in\mathbb{N}_+$, at which the obtained topology matrix
\begin{align}
	\label{q*}
	\mathbf{X}^{k^*}_t&=G^{k^*}(\mathbf{X}_t)=(\mathbf{I}_t-H_t\mathbf{L}_t^v)^{k^*}\mathbf{X}_t\notag\\
	&=[\mathbf{p}^{k^*}_{r_1,t},\mathbf{p}^{k^*}_{r_2,t},...,\mathbf{p}^{k^*}_{r_{|\mathcal{I}_t|},t},]^T,
\end{align}
will make the RUAV graph $\mathcal{G}_t$ a CCN under CLEC, where $\mathbf{p}^{k^*}_{r_i,t}$ is the target position for RUAV $R_{r_i,t}$ to move to.

Express $H_t$ as $H_t=\frac{\epsilon}{\left\|\mathbf{A}^v_{t}\right\|_\infty}$, where $\epsilon$ acts as a
hyperparameter with theoretical convergence range $(0,1]$. The best choice of $\epsilon$, denoted as $\epsilon^\star$, is illustrated as follows.
\subsubsection{Choice of $\eta^\star$ and $\epsilon^\star$}
\label{section:hyperparameter}
The performance of the GCO $G(\cdot)$ can be evaluated by two indicators. The first indicator is the number of iterations $k^*$ needed by the GCO $G(\cdot)$ to obtain $\mathbf{X}^{k^*}_t$. The smaller $k^*$ is, the better performance the GCO $G(\cdot)$ will be. The second indicator is the maximum movement distance among all the RUAVs, i.e., $L_{max}=\max_{i\in\mathcal{I}_t}\left\|\mathbf{p}^{k^*}_{i,t}-\mathbf{p}_{i,t}\right\|_2$. The smaller $L_{max}$ is, the better performance the GCO $G(\cdot)$ will be. 
As $\eta$ determines the edge set of VRG $\mathcal{E}^v_t$, $\eta$ will determines $\mathbf{L}^v_t$ and further influence the performance of GCO $G(\cdot)$. In addition, since $\epsilon$ determines $H_t$, $\epsilon$ will also influence the performance of GCO $G(\cdot)$. Hence, we conduct numerical experiments in Section \ref{subsection_simulation:find_best_hyperparameter} to find $\eta^\star$ and $\epsilon^\star$ that can make the GCO $G(\cdot)$ achieve better performance on both indicators $k^*$ and $L_{max}$.


\subsubsection{Backbones of the GCN}
\begin{figure*}
	\centering
	\includegraphics[width=180mm]{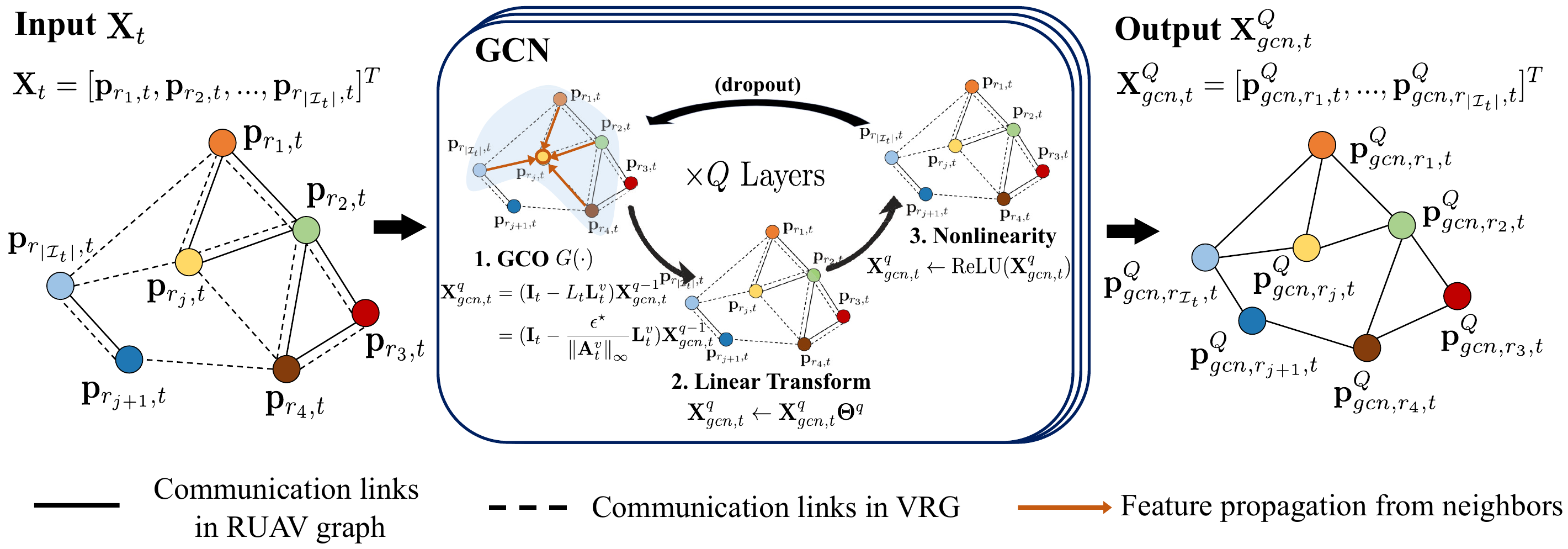}
	\caption{The structure of GCN.}
	\label{fig_GCN}
\end{figure*}

The topology matrix $\mathbf{X}^{k^*}_t$ in \eqref{q*} only satisfies the constraint \refeq{constraint_1_1_}, while does not minimize the objective function ${J}_s$. Therefore, to minimize ${J}_s$, 
we further extend the GCO $G(\cdot)$ to a {graph convolutional network} (GCN). As shown in \reffig{fig_GCN}, the GCN is composed of $Q$ graph convolutional layers (GCLs), where $Q\in\mathbb{N}_+$ is a hyperparameter. 
The $q$-th GCL receives a topology matrix $\mathbf{X}^{q-1}_{gcn,t}$ from the $(q-1)$-th GCL\footnote{The first GCL takes the topology matrix $\mathbf{X}_t$ as the input.} and outputs a topology matrix $\mathbf{X}^q_{gcn,t}$ to the next GCL, $q\in\{1,2,...,Q\}$. Specifically, in the $q$-th GCL, $\mathbf{X}^{q-1}_{gcn,t}$ is processed by the GCO $G(\cdot)$ as
\begin{align}
\mathbf{X}^{q}_{gcn,t}=(\mathbf{I}_t-H_t\mathbf{L}^v_t)\mathbf{X}^{q-1}_{gcn,t}=\big(\mathbf{I}_t-\frac{\epsilon^*}{\left\|\mathbf{A}^v_t\right\|_\infty}\mathbf{L}^v_t\big)\mathbf{X}^{q-1}_{gcn,t}.
\end{align} 
Then the $\mathbf{X}^q_{gcn,t}$ is linearly transformed as
\begin{align}
\mathbf{X}^{q}_{gcn,t}\leftarrow\mathbf{X}^{q}_{gcn,t}\mathbf{\Theta}^q,
\end{align}
where $\mathbf{\Theta}^q$ is the trainable parameter of the $q$-th GCL. In addition, nonlinearities are introduced to the $q$-th GCL by applying the ReLU activation function to $\mathbf{X}^q_{gcn,t}$ as
\begin{align}
\mathbf{X}^{q}_{gcn,t}\leftarrow\text{ReLU}(\mathbf{X}^{q}_{gcn,t}).
\end{align}
Hence, the relationship between $\mathbf{X}^q_{gcn,t}$ and $\mathbf{X}^{q-1}_{gcn,t}$ can be expressed as 
\begin{align}
\mathbf{X}^q_{gcn,t}=\underbrace{\text{ReLU}\bigg(}_{{\text{nonlinearity}}}\underbrace{\big(\mathbf{I}_t-\frac{\epsilon^*}{\left\|\mathbf{A}^v_t\right\|_\infty}\mathbf{L}^v_t\big)}_{{\text{GCO }G(\cdot)}}\underbrace{\mathbf{X}^{q-1}_{gcn,t}\mathbf{\Theta}^q\bigg)}_{{\text{linear transformation}}}.
\end{align}
Note that dropouts\cite{dropout} can be added between two GCLs to increase the generalization ability of the GCN. The output topology matrix  $\mathbf{X}^Q_{gcn,t}=[\mathbf{p}^Q_{gcn,r_1,t},\mathbf{p}^Q_{gcn,r_2,t},...,\mathbf{p}^Q_{gcn,r_{|\mathcal{I}_t|},t}]^T$ of the GCN can form a new RUAV graph $\mathcal{G}^Q_t=\{\mathcal{R}_t, \mathcal{E}^Q_{gcn,t},\mathbf{X}^Q_{gcn,t}\}$, where the edge set  ${\mathcal{E}}_{gcn,t}^Q=\{e_{ii',t}|{l}_{ii',t}=\left\|\mathbf{p}^Q_{gcn,i,t}-\mathbf{p}^Q_{gcn,i',t}\right\|_2 \text{satisfies CLEC}, \forall i\neq i', i,i'\in\mathcal{I}_t\}$. Denote the number of RUAV clusters of the RUAV graph $\mathcal{G}^Q_t$ as $C^Q_t$.
\subsubsection{Loss function design of the GCN}
Denote the loss function of the GCN as $\mathcal{L}(\mathbf{\Theta},\mathbf{X}_t)$, where $\mathbf{\Theta}=\{\mathbf{\Theta}^1,\mathbf{\Theta}^2,...,\mathbf{\Theta}^Q\}$, and $\mathbf{X}_t$ is the input topology matrix to the GCN.
The design of $\mathcal{L}(\mathbf{\Theta},\mathbf{X}_t)$ should be consistent with $(\mathbf{P1})$. 
Specifically, we rewrite $(\mathbf{P1})$ as 
\begin{align}
(\mathbf{P1^\dagger}):\quad\;\min_{\mathbf{X}^Q_{gcn,t}}\quad&{J}_s=\max_{i\in\mathcal{I}_t}\left\|\mathbf{p}^Q_{gcn,i,t}-\mathbf{p}_{i,t}\right\|_2\label{optimization_2}\\
\operatorname{s.t.}\quad&C^Q_t-1\le 0, \tag{{\ref{optimization_2}{a}}}\label{constraint_1_1_1}
\end{align}
where $\widetilde{\mathbf{X}}_t$ in $(\mathbf{P1})$ is substituted by the output of the GCN $\mathbf{X}^Q_{gcn,t}$, and the constraint \refeq{constraint_1_1_} is represented by $C^Q_t-1\le 0$.
Then, we design $\mathcal{L}(\mathbf{\Theta}, \mathbf{X}_t)$ as the Lagrange function of $(\mathbf{P1^\dagger})$ as
\begin{align}
\mathcal{L}(\mathbf{\Theta},\mathbf{X}_t)=\underbrace{\tau(C^Q_t-1)}_{{\text{guarantee the CCN}}}+\underbrace{\max_{i\in\mathcal{I}_t}\left\|\mathbf{p}^Q_{gcn,i,t}-\mathbf{p}_{i,t}\right\|_2}_{{\text{minimize the largest displacement}}},
\end{align}
where the Lagrange multiplier $\tau$ is set as a positive constant. After training the GCN with the designed loss function $\mathcal{L}(\mathbf{\Theta},\mathbf{X}_t)$, the output topology matrix $\mathbf{X}^Q_{gcn,t}$ of the GCN can approximate the solution to $(\mathbf{P1})$, i.e.,  $\mathbf{X}^Q_{gcn,t}\rightarrow\widetilde{\mathbf{X}}_t$.
\subsubsection{Off-line meta learning scheme}
\begin{figure*}
	\centering
	\includegraphics[width=130mm]{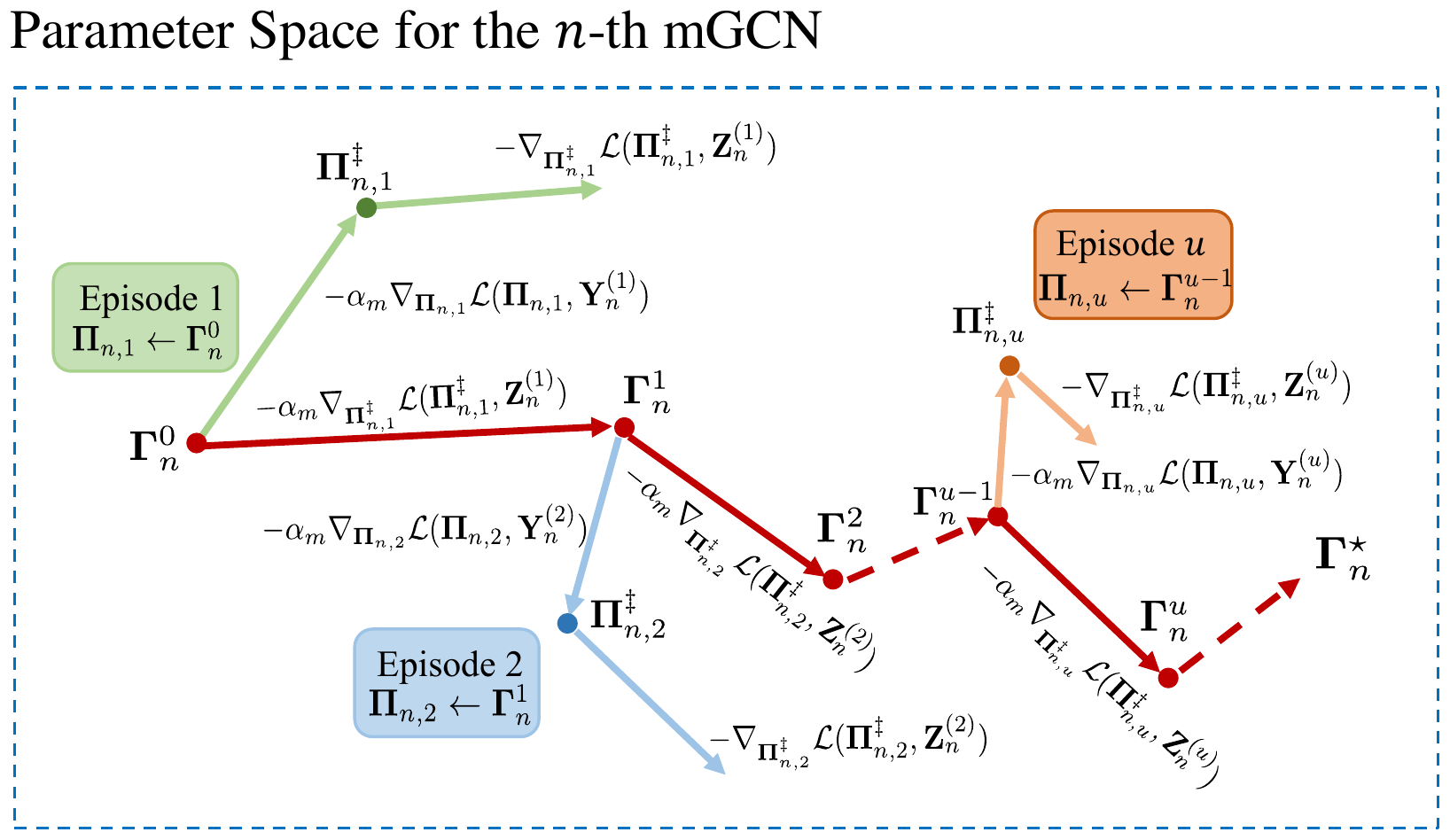}
	\caption{Meta learning procedure for the $n$-th {m}GCN.}
	\label{fig_meta}
\end{figure*}
Notice that different {case}s of UEDs will result in distinct topology matrices $\mathbf{X}_t$, leading to different loss functions $\mathcal{L}(\mathbf{\Theta},\mathbf{X}_t)$ for the GCN. Hence, the GCN should be trained again in an on-line manner when encountering new {case}s of UEDs. 
However, training the GCN from scratch is time consuming and cannot be executed in real-time. Moreover, since there are infinite topology matrices, we cannot train the GCN in advance for each topology matrix. To address these issues, we propose a meta learning scheme for the GCN. The meta learning scheme can find promising initial parameters in an off-line manner to facilitate the on-line trainings \cite{maml}. Specifically, {for a USNET with $N$ UAVs initially, the number $n$ of the RUAVs after one-off UEDs can only be in the range of $\{0,1,2,...,N\}$. We do not need to consider the cases when $n=0$ and $n=1$, since there is either no RUAVs or only one UAV that can form a CCN itself. For the other $N-1$ cases,}
we build $N-1$ GCNs with the same structures as \reffig{fig_GCN}, named \emph{meta} GCNs (mGCNs). {The $n$-th mGCN specifically deals with the case where the number of RUAVs is $n$, $n\in\{2,3,...,N\}$.}
For the $n$-th mGCN, we construct a \emph{support set} $\mathcal{S}_n=\{\mathbf{Y}^{(1)}_{n},\mathbf{Y}^{(2)}_{n},...,\mathbf{Y}^{(U_0)}_{n}\}$ with $U_0$ \emph{support data} $\mathbf{Y}^{(u)}_{n}$, where $U_0\in\mathbb{N}_+$ is the size of $\mathcal{S}_n$, and $\mathbf{Y}^{(u)}_{n}=[\mathbf{p}^{(u,1)}_{n,spt},\mathbf{p}^{(u,2)}_{n,spt},...,\mathbf{p}^{(u,n)}_{n,spt}]^T$ is a randomly generated topology matrix with size $n\times 3$ under the constraint that $\mathbf{Y}^{(u)}_{n}$ cannot make the RUAV graph $\mathcal{G}_t$ form a CCN, $u\in\{1,2,...,U_0\}$.
Meanwhile, we construct a \emph{query set} $\mathcal{W}_n=\{\mathbf{Z}^{(1)}_{n},\mathbf{Z}^{(2)}_{n},...,\mathbf{Z}^{(U_0)}_{n}\}$ with $U_0$ \emph{query data} $\mathbf{Z}^{(u)}_{n}$, where $\mathbf{Z}^{(u)}_{n}=[\mathbf{p}^{(u,1)}_{n,qur},\mathbf{p}^{(u,2)}_{n,qur},...,\mathbf{p}^{(u,n)}_{n,qur}]^T$ is also a randomly generated topology matrix
with size $n\times 3$ under the constraint that $\mathbf{Z}^{(u)}_{n}$ cannot make the RUAV graph $\mathcal{G}_t$ form a CCN. 
We carry out the meta learning in an off-line manner for the $n$-th mGCN, as shown in \reffig{fig_meta}. 
The number of episodes of the meta learning equals to the size $U_0$ of $\mathcal{S}_n$ (or $\mathcal{W}_n$).
In the $u$-th episode, we take $\mathbf{Y}^{(u)}_{n}$ in $\mathcal{S}_n$ and $\mathbf{Z}^{(u)}_{n}$ in $\mathcal{W}_n$ to update the parameter of the $n$-th mGCN at the $u$-th episode $\mathbf{\Gamma}^{u-1}_{n}$. Specifically, a temporary GCN in \reffig{fig_GCN} with parameter ${\mathbf{\Pi}}_{n,u}$ is endowed with $\mathbf{\Gamma}^{u-1}_{n}$, i.e., ${\mathbf{\Pi}}_{n,u}\leftarrow\mathbf{\Gamma}^{u-1}_{n}$. The parameter ${\mathbf{\Pi}}_{n,u}$ is updated in the direction of $\nabla_{{\mathbf{\Pi}}_{n,u}}\mathcal{L}({\mathbf{\Pi}}_{n,u}, \mathbf{Y}^{(u)}_{n})$ by $\alpha_{{meta}}>0$ step size, i.e.,
\begin{align}
\label{support_set_update}
{\mathbf{\Pi}}_{n,u}^\ddagger&={\mathbf{\Pi}}_{n,u}-\alpha_{{meta}}\nabla_{\mathbf{\Pi}_{n,u}}\mathcal{L}(\mathbf{\Pi}_{n,u}, \mathbf{Y}^{(u)}_{n})\notag\\
&=\mathbf{\Gamma}^{u-1}_{n}-\alpha_{{meta}}\nabla_{\mathbf{\Pi}_{n,u}}\bigg[\tau(C^{u,temp}_{n,spt}-1)+\notag\\&
\qquad\quad\max_{n_\beta\in\{1,2,...,n\}}\left\|\mathbf{p}^{(u,n_\beta,temp)}_{n,spt}-\mathbf{p}_{n,spt}^{(u,n_\beta)}\right\|_2\bigg],
\end{align}
where $\mathbf{\Pi}_{n,u}^\ddagger$ is the updated parameter of the temporary GCN, $\mathbf{p}^{(u,n_\beta,temp)}_{n,spt}$ is the $n_\beta$-th row in the output $\mathbf{X}^{(u,temp)}_{n,spt}$ of the temporary GCN, and $C^{u,temp}_{n,spt}$ is the number of RUAV clusters of the RUAV graph $\mathcal{G}_t$ formed by $\mathbf{X}^{(u,temp)}_{n,spt}$. 
The parameter of the $n$-th mGCN is updated in the direction of $\nabla_{\mathbf{\Pi}_{n,u}'}\mathcal{L}(\mathbf{\Pi}_{n,u}^\ddagger,\mathbf{Z}^{(u)}_{n})$ by $\alpha_{{meta}}$ step size, 
i.e.,
\begin{align}
\label{query_set_update}
\mathbf{\Gamma}^u_{n}&=\mathbf{\Gamma}^{u-1}_{n}-\alpha_{{meta}}\nabla_{\mathbf{\Pi}_{n,u}^\ddagger}\mathcal{L}(\mathbf{\Pi}_{n,u}^\ddagger,\mathbf{Z}^{(u)}_{n})\notag\\
&=\mathbf{\Gamma}^{u-1}_{n}-\alpha_{{meta}}\nabla_{\mathbf{\Pi}_{n,u}^\ddagger}\bigg[\tau(C^{u,temp}_{n,qur}-1)+\notag\\
&\qquad\qquad\max_{n_\beta\in\{1,2,...,n\}}\left\|\mathbf{p}^{(u,n_\beta,temp)}_{n,qur}-\mathbf{p}_{n,qur}^{(u,n_\beta)}\right\|_2\bigg],
\end{align}  
where $\mathbf{p}^{(u,j,temp)}_{n,qur}$ is the $u$-th row in the output $\mathbf{X}^{(u,temp)}_{n,qur}$ of the temporary GCN, and $C^{u,temp}_{n,qur}$ is the number of RUAV clusters of the RUAV graph $\mathcal{G}_t$ formed by $\mathbf{X}^{(u,temp)}_{n,qur}$. 
After $U_0$ episodes, we obtain the \emph{meta parameters} of all the $N-1$ mGCNs $\mathbf{\Gamma}^\star_{n}\triangleq\mathbf{\Gamma}^{U_0}_{n}$ that act as the initial parameters for the GCNs during on-line executions.

\subsubsection{On-line executions of the GCN}
When the USNET is {destructed} by one-off UEDs at time step $t$ and the RUAV graph $\mathcal{G}_t$ has $N_0\in\{2,3,...,N\}$ RUAVs, we build the VRG $\mathcal{G}_t^v=\{\mathcal{R}_t^v, \mathcal{E}_t^v,\mathbf{X}_t^v\}$, and calculate the Laplace matrix $\mathbf{L}^v_t$ for the GCN.
Then the GCN will load the meta parameter $\mathbf{\Gamma}_{N_0}^\star$, i.e., $\mathbf{\Theta}\leftarrow\mathbf{\Gamma}_{N_0}^\star$. Next, the GCN will be trained on-line by the gradient descent of the loss function $\mathcal{L}(\mathbf{\Theta}, \mathbf{X}_t)$, i.e.,
\begin{align}
\label{update_rule}
\mathbf{\Theta}\leftarrow\mathbf{\Theta}-\alpha_{{meta}}\nabla_{\Theta}\mathcal{L}(\mathbf{\Theta}, \mathbf{X}_t).
\end{align}
Note that the number of the on-line training episodes, denoted as $M$, is a constant positive integer. After the on-line training, we input $\mathbf{X}_t$ into the GCN, and the GCN outputs the topology matrix $\mathbf{X}^Q_{gcn,t}$ that acts as the solution to $(\mathbf{P1})$, i.e., $\widetilde{\mathbf{X}}_t\leftarrow\mathbf{X}^Q_{gcn,t}$. Each RUAV$_{i,t}$ will fly at a constant speed ${\mathbf{v}}_{i,t}=\frac{v_0}{\left\|\mathbf{p}^{Q}_{gcn,i,t}-\mathbf{p}_{i,t}\right\|_2}(\mathbf{p}^{Q}_{gcn,i,t}-\mathbf{p}_{i,t})$ until reaching point $\mathbf{p}^Q_{gcn,i,t}$.
The process of the CR-MGC algorithm is briefly summarized in Algorithm \ref{algorithm:CR_MGC}.

\begin{algorithm}[t]
	\normalsize\caption{CR-MGC Algorithm (A Brief Process Summary) }
	\label{algorithm:CR_MGC}
	\setstretch{1} 
	{\bf Inputs:} The initial RUAV graph $\mathcal{G}_0=\{\mathcal{R}_0, \mathcal{E}_0, \mathbf{X}_0\}$, and the initial index set of RUAVs $\mathcal{I}_{0}$.\\
	{\bf Outputs:} The solution $\widetilde{\mathbf{X}}_t$ to $(\mathbf{P1})$, the flying trajectories of all RUAVs.\\
	{\bf Initializations:} The parameters of mGCNs $\mathbf{\Gamma}_{n}^0$, the parameter of the GCN $\mathbf{\Theta}$, support sets $\mathcal{S}_n$ and query sets $\mathcal{W}_n$, $n\in\{2,3,...,N\}$. Conduct numerical experiments (shown in Section \ref{subsection_simulation:find_best_hyperparameter}) to determine the $\eta^\star$ and $\epsilon^\star$.\\
	{\bf Off-line Meta Training:}
	\begin{algorithmic}[1]
		\normalsize
		\For{$n=2$ to $N$}
		\For {$u=1$ to $U$} 
		\State Build the VRGs based on $\mathbf{Y}^{(u)}_{n}$ and $\mathbf{Z}^{(u)}_{n}$ separately, and derive the corresponding Laplace matrices. Train one step on parameter $\mathbf{\Gamma}_{n}^{u-1}$ using \refeq{support_set_update}, and update $\mathbf{\Gamma}_{n}^{u-1}$ using \refeq{query_set_update}. 
		\EndFor
		\EndFor
		\State Obtain all the meta parameters $\mathbf{\Gamma}_{n}^\star,n\in\{2,3,...,N\}$.
		
	\end{algorithmic}
	{\bf On-line Executions:}
	\begin{algorithmic}[1]
		\normalsize 
		\State A random one-off UED happens at time step $t$, and the USNET is {destructed} into a RUAV graph $\mathcal{G}_t=\{\mathcal{R}_t, \mathcal{E}_t,\mathbf{X}_t\}$ with $n$ RUAVs.
		\State Build the VRG $\mathcal{G}_t^v=\{\mathcal{R}_t^v, \mathcal{E}_t^v,\mathbf{X}_t^v\}$, and calculate the Laplace matrix $\mathbf{L}^v_t$ for the GCN.
		\State The GCN loads the meta parameter $\mathbf{\Gamma}^\star_{n}$, i.e., $\mathbf{\Theta}\leftarrow\mathbf{\Gamma}^\star_{n}$.
		\State  Train $\mathbf{\Theta}$ with \refeq{update_rule} $M$ episodes, and obtain the output $\mathbf{X}^Q_{gcn,t}$. 
		\State  Let $\widetilde{\mathbf{X}}_t\leftarrow\mathbf{X}^Q_{gcn,t}$. Each RUAV$_{i,t}$ flies at a constant speed ${\mathbf{v}}_{i,t}=\frac{v_0}{\left\|\mathbf{p}^{Q}_{gcn,i,t}-\mathbf{p}_{i,t}\right\|_2}(\mathbf{p}^{Q}_{gcn,i,t}-\mathbf{p}_{i,t})$, $\forall i\in\mathcal{I}_t$ until reach point $\mathbf{p}^Q_{gcn,i,t}$.
	\end{algorithmic}
\end{algorithm}

\section{SCC Algorithm for General UEDs}
\label{distributed_MGC}
In this section, let us consider the SCC problem under the general UEDs $(\mathbf{P2})$. 
To cope with the issue that RUAVs can only obtain partial information, we build an \emph{individual data base} (IDB) model for each UAV and develop a monitoring mechanism that can detect UEDs and the position changing of UAVs. We then propose a self-healing trajectory planning algorithm based on monitoring mechanisms and CR-MGC to cope with the general UEDs.
\subsection{Individual Database Model and Monitoring Mechanisms}
\label{ID_BM}
We embed an IDB $\mathcal{D}_{i,t}=\{\widehat{\mathbf{p}}_{1,t}^i,\widehat{\mathbf{p}}_{2,t}^i,...,\widehat{\mathbf{p}}_{N,t}^i\}\cup \widehat{\mathcal{I}}^i_{t}$ inside the $i$-th UAV that contains two parts, namely the \emph{individual positions} of all UAVs $\{\widehat{\mathbf{p}}_{1,t}^i,\widehat{\mathbf{p}}_{2,t}^i,...,\widehat{\mathbf{p}}_{N,t}^i\}$ and the \emph{individual index set of RUAVs} (IISR) $\widehat{\mathcal{I}}^i_{t}$. The UAVs always know their own positions. Hence, the individual position $\widehat{\mathbf{p}}^i_{i,t}$ in $\mathcal{D}_{i,t}$ of  RUAV$_{i,t}$ equals to the position of  RUAV$_{i,t}$ at each time step $t$, i.e., $\widehat{\mathbf{p}}^i_{i,t}=\mathbf{p}_{i,t}$.
During the self-healing process, the monitoring mechanism is realized through the updating of IDBs.  

\subsubsection{Monitoring the position changing of UAVs by updating the individual positions} 
At each time step $t$, RUAV$_{i,t}$ broadcasts its own position $\mathbf{p}_{i,t}$ to other RUAVs in the same RUAV cluster through MCLs. To better exhibit the SCC algorithm, we ignore the time delay of data transmissions in MCLs, and assume the broadcasting can be completed at time step $t$. If RUAV$_{i,t}$ receives $\mathbf{p}_{i',t}$ at time step $t$, it updates the individual position of the $i'$-th UAV in $D_{i,t}$; otherwise, the old individual position of the $i'$-th UAV in $D_{i,t-1}$ of RUAV$_{i,t}$ does not change, i.e., 
\begin{align}
\label{id_1}
\widehat{\mathbf{p}}^i_{i',t}\leftarrow \left\{
\begin{aligned}
\mathbf{p}_{i',t}, \quad\; &\mbox{if receives } \mathbf{p}_{i',t};\\
\widehat{\mathbf{p}}^i_{i',t-1},\;\; &\mbox{otherwise}.
\end{aligned}
\right.
\end{align} 

\subsubsection{Monitoring the UEDs by updating the IISR} When the $j$-th UAV is destructed at time step $t$, its neighbor RUAV$_{i,t}$ will notice the destruction immediately and 
drop the index $j$ from $\widehat{\mathcal{I}}^i_{t}$, i.e.,
\begin{align}
\label{id_3}
\widehat{\mathcal{I}}^i_{t}\leftarrow\widehat{\mathcal{I}}^i_{t-1}\backslash\{j\}.
\end{align}
RUAVs within the same RUAV cluster share their IISRs through broadcasting, and RUAV$_{i,t}$ updates $\widehat{\mathcal{I}}^i_{t}$ by taking the intersections of all the received IISRs, i.e.,
\begin{align}
\label{id_4}
\widehat{\mathcal{I}}^i_{t}\leftarrow\widehat{\mathcal{I}}^i_{t}\cap\widehat{\mathcal{I}}^{i_1}_{t}\cap\widehat{\mathcal{I}}^{i_2}_{t}\cap...\cap\widehat{\mathcal{I}}^{i_h}_{t}\cap...\cap\widehat{\mathcal{I}}^{i_{|\mathcal{C}_{i,t}|-1}}_{t},
\end{align}
where 
$\mathcal{C}_{i,t}$ represents the RUAV cluster containing RUAV$_{i,t}$, and $\widehat{\mathcal{I}}^{i_h}_t$ represents the received IISR, $i_h\in\mathcal{I}_t,\; h\in\{1,2,...,|\mathcal{C}_{i,t}|-1\}$.
\begin{algorithm}[t]
	\normalsize\caption{CR-MGCM for the $i$-th UAV based on its IDB}
	\label{algorithm:individual_gcn}
	\setstretch{1} 
	{\bf Input:} The IDB $\mathcal{D}_{i,0}=\{\widehat{\mathbf{p}}_{1,0}^i,\widehat{\mathbf{p}}_{2,0}^i\,...,\widehat{\mathbf{p}}_{N,0}^i\}\cup\widehat{\mathcal{I}}^i_{0}$, the $\eta^\star$, $\epsilon^\star$ and $\mathbf{\Gamma}^\star_{n},n\in\{2,3,...,N\}$.\\
	{\bf Outputs:} The speed ${\mathbf{v}}_{i,t}$ of the $i$-th UAV during $t\in\{1,2,...,T\}$.\\
	{\bf Initializations:} An inertia counter $C_I\leftarrow0$, a target position $\mathbf{\Xi}_{i}\in\mathbb{R}^3$, and the inertia $\kappa>0$.
	\begin{algorithmic}[1]
		\normalsize 
		\For{$t=1$ to $T$}
		\State Update IDBs to monitor the UEDs and position changing of UAVs with \refeq{id_1}, \refeq{id_3}, \refeq{id_4}.
		\State Calculate the Laplace matrix $\mathbf{L}_{t}$ of the RUAV graph formed by $\{\widehat{\mathbf{p}}_{1,t}^i,\widehat{\mathbf{p}}_{2,t}^i\,...,\widehat{\mathbf{p}}_{N,t}^i\}$. 
		\If{$\Omega(\lambda=0|\mathbf{L}_{t})>1$}
		\If{$C_I==0$}
		\State Calculate the $d_{min,t}^v$ by Algorithm \ref{algorithm:find_d_min} with inputs $[\widehat{\mathbf{p}}_{r_1,t}^i,\widehat{\mathbf{p}}_{r_2,t}^i\,...,\widehat{\mathbf{p}}_{r_{|\widehat{\mathcal{I}}_{t}^i|},t}^i]^T$ and $\widehat{\mathcal{I}}^i_{t}$, calculate the maximum threshold $d_{max,t}^v$ as  $d_{max,t}^v=\max_{i',i''\in\widehat{\mathcal{I}}_t^i}\{\left\|\widehat{\mathbf{p}}_{i',t}^i-\widehat{\mathbf{p}}_{i'',t}^i\right\|_2\}$, and then $d_{t}^v=\eta^\star d_{min,t}^v+(1-\eta^\star) d_{max,t}^v$;
		\State Build the VRG $\mathcal{G}^v_{t}=\{\mathcal{R}_{t}^v,\mathcal{E}^v_{t},\mathbf{X}^v_{t}\}$, where $\mathcal{R}_{t}^v=\{\text{RUAV}_{i,t}|i\in\widehat{\mathcal{I}}^i_{r,t}\}$, $\mathbf{X}^v_{t}=[\widehat{\mathbf{p}}_{r_1,t}^i,\widehat{\mathbf{p}}_{r_2,t}^i\,...,\widehat{\mathbf{p}}_{r_{|\widehat{\mathcal{I}}_{t}^i|},t}^i]^T$, and $\mathcal{E}^v_{t}=\{e^v_{i'i'',t}|i',i''\in\widehat{\mathcal{I}}_t^i,i'\neq i'', \left\|\widehat{\mathbf{p}}^i_{i',t}-\widehat{\mathbf{p}}^i_{i'',t}\right\|_2\le d_{t}^v \}$. Derive the Laplace matrix $\mathbf{L}^v_{t}$ of the VRG $\mathcal{G}^v_{t}$. 
		\State Load  $\mathbf{\Gamma}^\star_{|\widehat{\mathcal{I}}^i_{t}|}$ to the GCN, i.e., $\mathbf{\Theta}\leftarrow\mathbf{\Gamma}^\star_{|\widehat{\mathcal{I}}^i_{t}|}$, train the GCN $M$ episodes with \refeq{update_rule}.
		\State The GCN outputs $\mathbf{X}^Q_{gcn,t}=[\mathbf{p}^{Q}_{gcn,r_1,t},\mathbf{p}^{Q}_{gcn,r_2,t},...,\mathbf{p}^{Q}_{gcn,r_{|\widehat{\mathcal{I}}^i_{r,t}|},t}]^T$ with input $\mathbf{X}^v_{i,t}$.
		\State Let $\mathbf{\Xi}_{i}\leftarrow\mathbf{p}^{Q}_{gcn,i,t}$, and ${\mathbf{v}}_{i,t}\leftarrow\frac{v_0}{\left\|\mathbf{\Xi}_i-\widehat{\mathbf{p}}^i_{i,t}\right\|_2}(\mathbf{\Xi}_i-\widehat{\mathbf{p}}^i_{i,t})$. Let $C_I\leftarrow C_I+1$.
		\Else
		\State ${\mathbf{v}}_{i,t}\leftarrow\frac{v_0}{\left\|\mathbf{\Xi}_i-\widehat{\mathbf{p}}^i_{i,t}\right\|_2}(\mathbf{\Xi}_i-\widehat{\mathbf{p}}^i_{i,t})$.
		
		\EndIf
		\State Let $C_I\leftarrow0$ if $C_I==\kappa$.
		\Else
		\State ${\mathbf{v}}_{i,t}=\mathbf{0}$, and $C_I\leftarrow0$
		\EndIf
		\If {the $i$-th UAV is destructed}
		\State break
		\EndIf
		\EndFor
	\end{algorithmic}
\end{algorithm}

Define the \emph{global information} $\mathcal{D}_{G,t}$ at time step $t$ as the union of the positions of all UAVs and the index set of RUAVs, i.e., $\mathcal{D}_{G,t}=\{\mathbf{p}_{1,t},\mathbf{p}_{2,t},...,\mathbf{p}_{N,t}\}\cup\mathcal{I}_t$. Note that the monitoring mechanism tries to help RUAVs obtain the latest information about the USNET as mush as possible, but still cannot help all the RUAVs obtain the global information $\mathcal{G}_t$ at each time step $t$. This means that there may exist some certain some time step $t$ at which $\mathcal{D}_{i,t}\neq\mathcal{D}_{G,t}$ for some RUAV$_{i,t}$. Nonetheless, at the time steps when the RUAV graph $\mathcal{G}_t$ forms a CCN, all the RUAVs can obtain the global information $\mathcal{D}_{G,t}$. For example, the USNET forms a CCN at $t=0$, and then there is $\mathcal{D}_{i,0}=\mathcal{D}_{G,0}$.
\subsection{Self-healing Trajectory Planning Algorithm}
Based on the CR-MGC and the monitoring mechanisms, we propose a self-healing trajectory planning algorithm, named CR-MGCM, to cope with the the general UEDs.
The details of CR-MGCM algorithm for each UAV are stated in Algorithm \ref{algorithm:individual_gcn}. In a nutshell, each UAV first loads $\eta^\star$, $\epsilon^\star$, the meta parameters $\mathbf{\Gamma}_{n}^\star$, and the GCN with randomly initialized $\mathbf{\Theta}$. Then during on-line executions, each RUAV monitors the UEDs and position changing of UAVs by updating its IDB. RUAV$_{i,t}$ determines its flying directions by carrying out the on-line execution part of CR-MGC based on the data in $\mathcal{D}_{i,t}$.
Note that for each UAV we set an \emph{inertia} $\kappa\;(\kappa>0)$ that determines the number of time steps to maintain the flying directions before rerunning the on-line execution part of the CR-MGC. The outputs of CR-MGCM of all UAVs act as the solution to $(\mathbf{P2})$.

\subsection{Theoretical Effectiveness of CR-MGCM}
If UAVs always have the global information $\mathcal{D}_{G,t}$, then the CR-MGCM can skip the monitoring mechanism in step ``2" and simply let $\{\widehat{\mathbf{p}}_{1,t}^i,\widehat{\mathbf{p}}_{2,t}^i\,...,\widehat{\mathbf{p}}_{N,t}^i\}\leftarrow\{{\mathbf{p}}_{1,t},{\mathbf{p}}_{2,t}\,...,{\mathbf{p}}_{N,t}\}$ and $\widehat{\mathcal{I}}_{t}^i\leftarrow{\mathcal{I}}_{t}$ for each RUAV$_{i,t}$ in each time step $t$. We refer the CR-MGCM algorithm where UAVs always have the global information $\mathcal{D}_{G,t}$ as CR-MGCM$_{glob}$. Note that CR-MGCM$_{glob}$ is equivalent to the CR-MGC when coping with each single one-off UEDs. Due to the effectiveness of CR-MGC, CR-MGCM$_{glob}$ is effective under one-off UEDs. On the other hand, since the general UEDs can be viewed as the combination of several one-off UEDs at different time steps, the CR-MGCM$_{glob}$ is effective under the general UEDs.

However,
since RUAVs cannot obtain $\mathcal{D}_{G,t}$, they may fly towards wrong directions during the self-healing process, which can make SCC algorithms ineffective. Nonetheless, we prove that CR-MGCM can reach the performance of CR-MGCM$_{glob}$ under the general UEDs.

\begin{proposition}
	\label{propositions_2}
When applying the GCOs $G(\cdot)$ to the topology matrix $\mathbf{X}_t$, the positions of all RUAVs are moving towards their center $\frac{1}{|\mathcal{I}_t|}\sum_{i\in\mathcal{I}_t}\mathbf{p}_{i,t}$.
\end{proposition}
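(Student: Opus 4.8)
The plan is to establish two facts: that the global center $\mathbf{c}\triangleq\frac{1}{|\mathcal{I}_t|}\sum_{i\in\mathcal{I}_t}\mathbf{p}_{i,t}$ is left invariant by $G(\cdot)$, and that every RUAV's displacement relative to $\mathbf{c}$ cannot grow under $G(\cdot)$. First I would exploit that the VRG Laplace matrix annihilates $\mathbf{1}_{|\mathcal{I}_t|}$ on both sides: since $\mathbf{L}^v_t$ is symmetric with $\mathbf{L}^v_t\mathbf{1}_{|\mathcal{I}_t|}=\mathbf{0}$, we also have $\mathbf{1}_{|\mathcal{I}_t|}^T\mathbf{L}^v_t=\mathbf{0}^T$. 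Premultiplying the update $G(\mathbf{X}_t)=(\mathbf{I}_t-H_t\mathbf{L}^v_t)\mathbf{X}_t$ by $\frac{1}{|\mathcal{I}_t|}\mathbf{1}_{|\mathcal{I}_t|}^T$ then kills the $\mathbf{L}^v_t$ term, so the row-average of $G(\mathbf{X}_t)$ equals that of $\mathbf{X}_t$, i.e. the center $\mathbf{c}$ is fixed by each application of the GCO.

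Second, I would read off the per-node update. Row $i$ of $G(\mathbf{X}_t)$ equals $(1-H_t d^v_{i,t})\mathbf{p}_{i,t}+H_t\sum_{j\sim i}\mathbf{p}_{j,t}$, where $j\sim i$ ranges over the VRG neighbors of node $i$. Because $H_t=\frac{\epsilon}{\|\mathbf{A}^v_t\|_\infty}$ with $\epsilon\in(0,1]$ and $\|\mathbf{A}^v_t\|_\infty=\max_j d^v_{j,t}$ (the maximum VRG degree), the self-weight satisfies $1-H_t d^v_{i,t}\ge 0$, and all weights sum to one. Hence each updated position is a convex combination of the node's own position and its neighbors' positions, that is, a local averaging (consensus) step that pulls every RUAV inward toward its neighborhood centroid. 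This is the precise sense in which the RUAVs move toward $\mathbf{c}$.

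To turn this into a global, quantitative statement I would pass to the centered coordinates $\mathbf{Y}_t\triangleq\mathbf{X}_t-\mathbf{1}_{|\mathcal{I}_t|}\mathbf{c}^T$, whose columns are orthogonal to $\mathbf{1}_{|\mathcal{I}_t|}$. Since $(\mathbf{I}_t-H_t\mathbf{L}^v_t)\mathbf{1}_{|\mathcal{I}_t|}=\mathbf{1}_{|\mathcal{I}_t|}$, the displacement obeys the same recursion $\mathbf{Y}_t\mapsto(\mathbf{I}_t-H_t\mathbf{L}^v_t)\mathbf{Y}_t$. On the subspace orthogonal to $\mathbf{1}_{|\mathcal{I}_t|}$ the eigenvalues of $\mathbf{I}_t-H_t\mathbf{L}^v_t$ are $1-H_t\lambda_{j,t}$ for the positive Laplacian eigenvalues $\lambda_{j,t}$, and I would bound these into $(-1,1]$ so that the operator is nonexpansive there; consequently the aggregate distance to the center $\|\mathbf{Y}_t\|_F$ is non-increasing. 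Combined with Proposition 1, which under the same range of $H_t$ already gives that iterating $G(\cdot)$ contracts $\mathbf{X}_t$ to the all-$\mathbf{c}$ configuration, this shows the positions indeed migrate to the center.

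The main obstacle is the spectral bound $|1-H_t\lambda_{j,t}|\le 1$: it requires $\lambda_{\max}(\mathbf{L}^v_t)\le 2/H_t$, which follows from $H_t\le 1/\|\mathbf{A}^v_t\|_\infty$ together with the Gershgorin-type estimate $\lambda_{\max}(\mathbf{L}^v_t)\le 2\max_j d^v_{j,t}=2\|\mathbf{A}^v_t\|_\infty$. Since this is exactly the estimate underpinning the contraction in Proposition 1, the cleanest route is to invoke Proposition 1 for the shrinking of the displacement rather than re-deriving the spectral analysis, and to use the convex-combination observation above to justify the word ``toward''.
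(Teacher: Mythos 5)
Your recommended route is correct and is, at bottom, the paper's own proof. The paper first notes (via the closure argument of its Appendix B) that the center is invariant under $G(\cdot)$, so all iterates stay in the metric space $\{\mathbf{X}_t\mid\frac{1}{|\mathcal{I}_t|}\sum_{i\in\mathcal{I}_t}\mathbf{p}_{i,t}=\mathbf{c}\}$, and then applies Proposition 1 directly: since $G(\cdot)$ is a contraction there with Banach point $\overline{\mathbf{X}}_t=[\mathbf{c},\mathbf{c},\dots,\mathbf{c}]^T$, it concludes $\max_{i\in\mathcal{I}_t}\|\mathbf{p}^{k+1}_{i,t}-\mathbf{c}\|_1\le\delta\max_{i\in\mathcal{I}_t}\|\mathbf{p}^{k}_{i,t}-\mathbf{c}\|_1$ with $\delta\in(0,1)$ --- that is the entire proof. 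What you add differs in two respects. First, your invariance argument ($\mathbf{1}^T\mathbf{L}^v_t=\mathbf{0}^T$ by symmetry, so premultiplying by $\frac{1}{|\mathcal{I}_t|}\mathbf{1}^T$ kills the Laplacian term) is slicker than the paper's explicit column-sum computation in \eqref{pro_2}, and your convex-combination reading of each row (self-weight $1-H_td^v_{i,t}\ge 0$, weight $H_t$ per neighbor, all summing to one) supplies a local, interpretable meaning of ``toward'' that the paper does not contain. Second, your spectral strand is the one place to be careful: Gershgorin gives $\lambda_{\max}(\mathbf{L}^v_t)\le 2\|\mathbf{A}^v_t\|_\infty$, hence only $1-H_t\lambda_{j,t}\in[-1,1)$ on the subspace orthogonal to $\mathbf{1}$, not $(-1,1]$ as you wrote; the value $-1$ is actually attained for regular bipartite VRGs with $\epsilon=1$ (e.g., two RUAVs, where $G(\cdot)$ exactly swaps their positions), so that route yields only non-expansiveness of the centered displacement and, by itself, would not prove strict movement toward the center. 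Your decision to defer the actual shrinkage to Proposition 1 is therefore not merely the cleanest option --- it is the only one of your two strands that closes the argument, and it is precisely what the paper does.
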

\begin{proof}
{See Appendix \ref{appendix:proposition_2}.}
\end{proof}

Since the GCN is mainly composed of GCO{s} $G(\cdot)$, 
it {tends to} make RUAVs gather towards the center of their positions. However, CR-MGCM makes each RUAV$_{i,t}$ fly towards the \emph{incomplete center} $\frac{1}{|\widehat{\mathcal{I}}_t^i|}\sum_{i'\in\widehat{\mathcal{I}}_{t}^i}\widehat{\mathbf{p}}_{i',t}^i$ that is calculated by the data in $\mathcal{D}_{i,t}$, while the CR-MGCM$_{glob}$ makes each RUAV$_{i,t}$ fly towards the \emph{complete center} $\frac{1}{|\mathcal{I}_t|}\sum_{i'\in\mathcal{I}_t}\mathbf{p}_{i',t}$.
\begin{figure*}
	\centering
	\includegraphics[width=130mm]{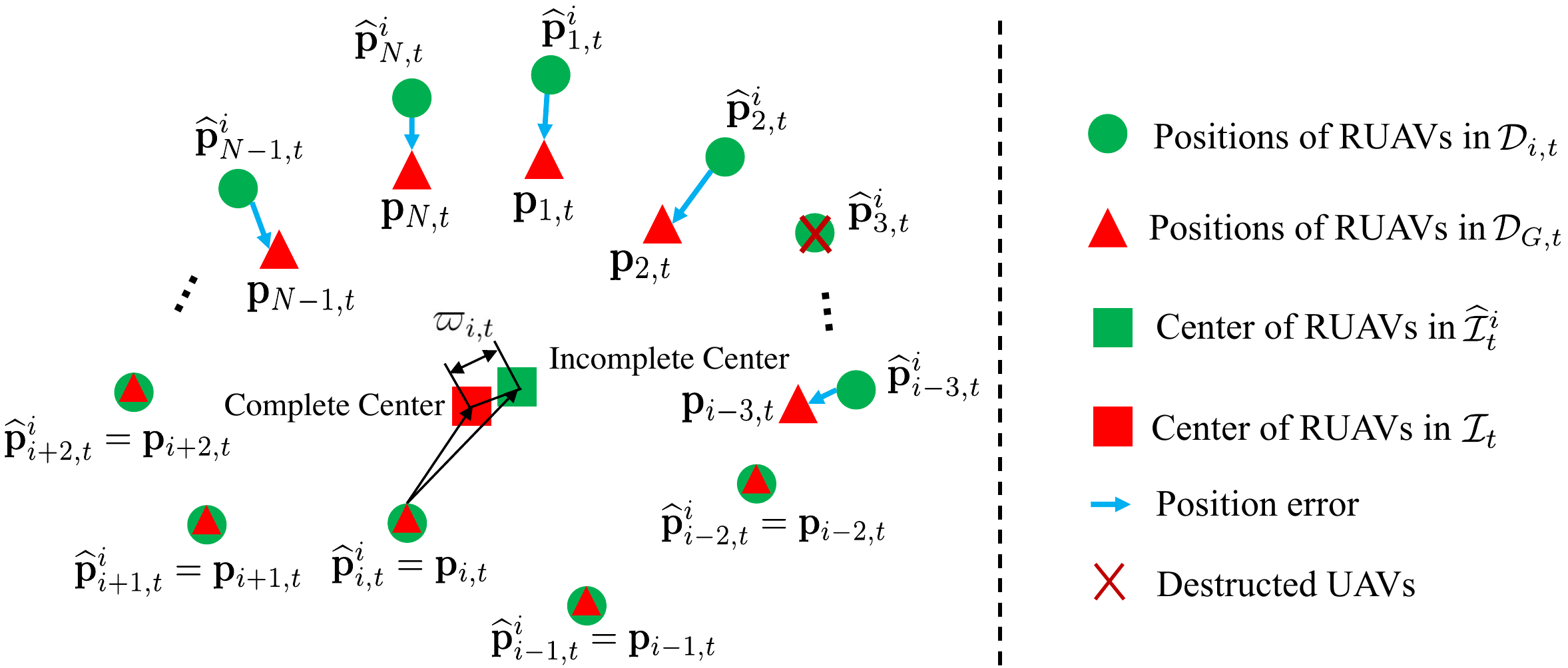}
	\caption{Individual positions in $\mathcal{D}_{i,t}$ and the RUAVs' positions in global information $\mathcal{D}_{G,t}$. }
	\label{fig:relationship_between_centers}
\end{figure*}
\begin{table*}[t]
	\centering
	\caption{Parameter settings of UAVs in the Simulations}
	\label{UAV_param}
	\begin{tabular}{ccc|ccc}
		\specialrule{0em}{1pt}{1pt}
		\hline
		\rowcolor[gray]{0.9}
		\small\textbf{Parameter}&\small \textbf{Values}&\small \textbf{Parameter description}&	\small\textbf{Parameter}&\small \textbf{Values}&\small \textbf{Parameter description}\\
		\hline
		\small\makecell[c]{$P$}&\small\makecell[c]{30 dBm\\ (=1W)}&\small\makecell[c]{Transmitting signal power}&	\small\makecell[c]{$P_0$}&\small\makecell[c]{1.38 dBm\\ (=1.37mW)}&\small\makecell[c]{Receiving signal \\power threshold}\\
		\hline
		\small\makecell[c]{$G_1$,$G_2$}&\small\makecell[c]{6 dBi}&\small\makecell[c]{Antenna gain of\\ receiving and transmitting \\signals}&
		\small\makecell[c]{$\alpha$}&\small\makecell[c]{1} &\small\makecell[c]{$\alpha$ in \refeq{CL}}\\
		\hline
		\small\makecell[c]{$f_c$}&\small\makecell[c]{2.4 GHz} &\small\makecell[c]{Carrier frequency}&
		\small\makecell[c]{$v_c$}&\small\makecell[c]{$3\times 10^8$ m/s} &\small\makecell[c]{Speed of light}\\
		\hline	
		\small\makecell[c]{$\sigma_0^2$}&\small\makecell[c]{5} &\small\makecell[c]{Strength of scattered path}&
		\small\makecell[c]{$K$}&\small\makecell[c]{10} &\small\makecell[c]{Rice factor}\\
		\hline	
		\small\makecell[c]{$v_0$}&\small\makecell[c]{1m/s} &\small\makecell[c]{Magnitude of the\\ speed of UAVs}&
		\small\makecell[c]{{$\alpha_{{meta}}$}}&\small\makecell[c]{{0.01}} &\small\makecell[c]{{learning rate in}\\{the  meta learning}}\\
		\hline	
	\end{tabular}
\end{table*}

We then analyze the difference between the incomplete center and complete center for RUAV$_{i,t}$, as shown in \reffig{fig:relationship_between_centers}. Denote the distance between two centers as $\varpi_{i,t}=\left\|\frac{1}{|\mathcal{I}_t|}\sum_{i'\in\mathcal{I}_t}\mathbf{p}_{i',t}-\frac{1}{|\widehat{\mathcal{I}}_t^i|}\sum_{i'\in\widehat{\mathcal{I}}_t^i}\widehat{\mathbf{p}}_{i',t}^i\right\|_2$, which can be expanded as \eqref{equ:center_1}.
Notice that $\mathcal{I}_t\subseteq\widehat{\mathcal{I}}^i_{t}$ always holds for $\forall i\in\mathcal{I}_t$ and $\forall t\in\{1,2,...,T\}$, since $\widehat{\mathcal{I}}^i_{t}$ has no chance to drop the elements in $\mathcal{I}_t$. Hence, there is $\mathcal{I}_t\backslash\widehat{\mathcal{I}}^i_{t}=\varnothing$, which indicates  $\sum_{i'\in\mathcal{I}_t\backslash\widehat{\mathcal{I}}^i_{t}}\mathbf{p}_{i',t}=\sum_{i'\in\varnothing}\mathbf{p}_{i',t}=0$. 
As the RUAVs initially store the global information $\mathcal{D}_{G,0}$, the incomplete center and complete center coincide at $t=0$, i.e., $\frac{1}{|\mathcal{I}_{0}|}\sum_{i'\in\mathcal{I}_{0}}\mathbf{p}_{i',0}=\frac{1}{|\widehat{\mathcal{I}}_{0}^i|}\sum_{i'\in\widehat{\mathcal{I}}^i_{0}}\widehat{\mathbf{p}}_{i',0}^i$. Moreover, the distance between $\widehat{\mathbf{p}}^i_{i',t}$ and $\mathbf{p}_{i',t}$ is bounded, since $\left\|\widehat{\mathbf{p}}^i_{i',t}-\mathbf{p}_{i',t}\right\|_2\le vt<vT$ always holds. Therefore, 
we can assume the following three mild conditions:
\begin{itemize}
	\item \textbf{Position bound:} $\left\|\widehat{\mathbf{p}}^i_{i',t}-\mathbf{p}_{i',t}\right\|_2\le b_1< vT$, $b_1>0$ is a constant;
	\item \textbf{Approximation of RUAV numbers:} $\frac{1}{|\mathcal{I}_t|}\approx\frac{1}{|\widehat{\mathcal{I}}_t^i|}$;
	\item \textbf{False RUAVs' bound:} $\left\|\frac{1}{|\widehat{\mathcal{I}}_t^i|}\sum_{i'\in\mathcal{I}^i_{r,t}\backslash\mathcal{I}_t}\widehat{\mathbf{p}}^i_{i',t}\right\|_2\le b_2$, $b_2>0$ is a constant.
\end{itemize}
Then the upper bound of the distance $\varpi_{i,t}$ between incomplete center and complete center can be calculated as \eqref{equ:center}, where $\mathcal{C}_{i,t}$ denotes the index set of RUAVs that are in the same RUAV cluster with RUAV$_{i,t}$. Hence, RUAVs using CR-MGCM nearly fly towards the same position as RUAVs using CR-MGCM$_{glob}$ at each time step. Besides, the inertia $\kappa$ in CR-MGCM can offer RUAVs the latest information of USNET to plan their trajectories. Therefore, CR-MGCM can reach the performance of CR-MGCM$_{glob}$ under the general UEDs.


\begin{strip}
	\hrule
	\begin{align}
		\label{equ:center_1}
		\varpi_{i,t}&=\left\|\sum_{i'\in\mathcal{I}_t\cap\widehat{\mathcal{I}}^i_{t}}\bigg(\frac{1}{|\mathcal{I}_t|}\mathbf{p}_{i',t}-\frac{1}{|\widehat{\mathcal{I}}_t^i|}\widehat{\mathbf{p}}_{i',t}^i\bigg)+\frac{1}{|\mathcal{I}_t|}\sum_{i'\in\mathcal{I}_t\backslash\mathcal{I}^i_{r,t}}\mathbf{p}_{i',t}-\frac{1}{|\widehat{\mathcal{I}}_t^i|}\sum_{i'\in\widehat{\mathcal{I}}^i_{t}\backslash\mathcal{I}_t}\widehat{\mathbf{p}}^i_{i',t}\right\|_2.
	\end{align}
	\hrule
	\begin{align}
		\label{equ:center}
		\varpi_{i,t}&=\left\|\frac{1}{|\mathcal{I}_t|}\sum_{i'\in\mathcal{I}_t}\mathbf{p}_{i',t}-\frac{1}{|\widehat{\mathcal{I}}_t^i|}\sum_{i'\in\widehat{\mathcal{I}}_t^i}\mathbf{p}_{i',t}^i\right\|_2=\left\|\sum_{i'\in\mathcal{I}_t\cap\widehat{\mathcal{I}}_t^i}\bigg(\frac{1}{|\mathcal{I}_t|}\mathbf{p}_{i',t}-\frac{1}{|\widehat{\mathcal{I}}_t^i|}\widehat{\mathbf{p}}_{i',t}^i\bigg)-\frac{1}{|\widehat{\mathcal{I}}_t^i|}\sum_{i'\in\widehat{\mathcal{I}}_t^i\backslash\mathcal{I}_t}\widehat{\mathbf{p}}^i_{i',t}\right\|_2\notag\\
		&\le\left\|\sum_{i'\in\mathcal{I}_t\cap\widehat{\mathcal{I}}_t^i\backslash\mathcal{C}_{i,t}}\frac{1}{|\mathcal{I}_t|}(\mathbf{p}_{i',t}-\widehat{\mathbf{p}}_{i',t}^i)\right\|_2+\left\|\frac{1}{|\widehat{\mathcal{I}}_t^i|}\sum_{i'\in\widehat{\mathcal{I}}_t^i\backslash\mathcal{I}_t}\widehat{\mathbf{p}}^i_{i',t}\right\|_2\le b_1+b_2.
	\end{align}
\hrule
\end{strip}
\section{Simulation Results}
\label{section:simulations}

In the simulation\footnote{{The source codes are available on \emph{https://github.com/nobodymx/resilient-swarm-communications-with-meta-graph-convolutional-networks}}}, the initial USNET consists of $N=200$ identical UAVs that are randomly distributed in a 1,000m$\times$1,000m$\times$100m three-dimensional space, as shown in \reffig{simulation:initial_state}. The parameters of UAVs are specified in Table \ref{UAV_param}, and the CLEC can be calculated as 
\begin{align}
&10\log_{10}\bigg(\frac{96\pi l_{ii',t}}{3}\bigg)+\frac{l_{ii',t}}{5}\exp\bigg(\frac{-l_{ii',t}^2-100}{10}\bigg)I_0(20l_{ii',t})\notag\\
&\approx 10\log_{10}\bigg(96l_{ii',t}\bigg)\le 40.62,
\end{align}
from which we can derive $l_{ii',t}\approx 120\text{m}$. Hence, the CLEC can be described as: any two distinct UAVs can establish a communication link if their distance is smaller than 120m. The period of the self-healing process is set to be 450 time steps, i.e., $T=450$. The number of GCLs in the GCN is $Q=8$.
\begin{figure*}[t]
	\begin{minipage}[t]{0.5\linewidth}
		\centering
		\includegraphics[width=90mm]{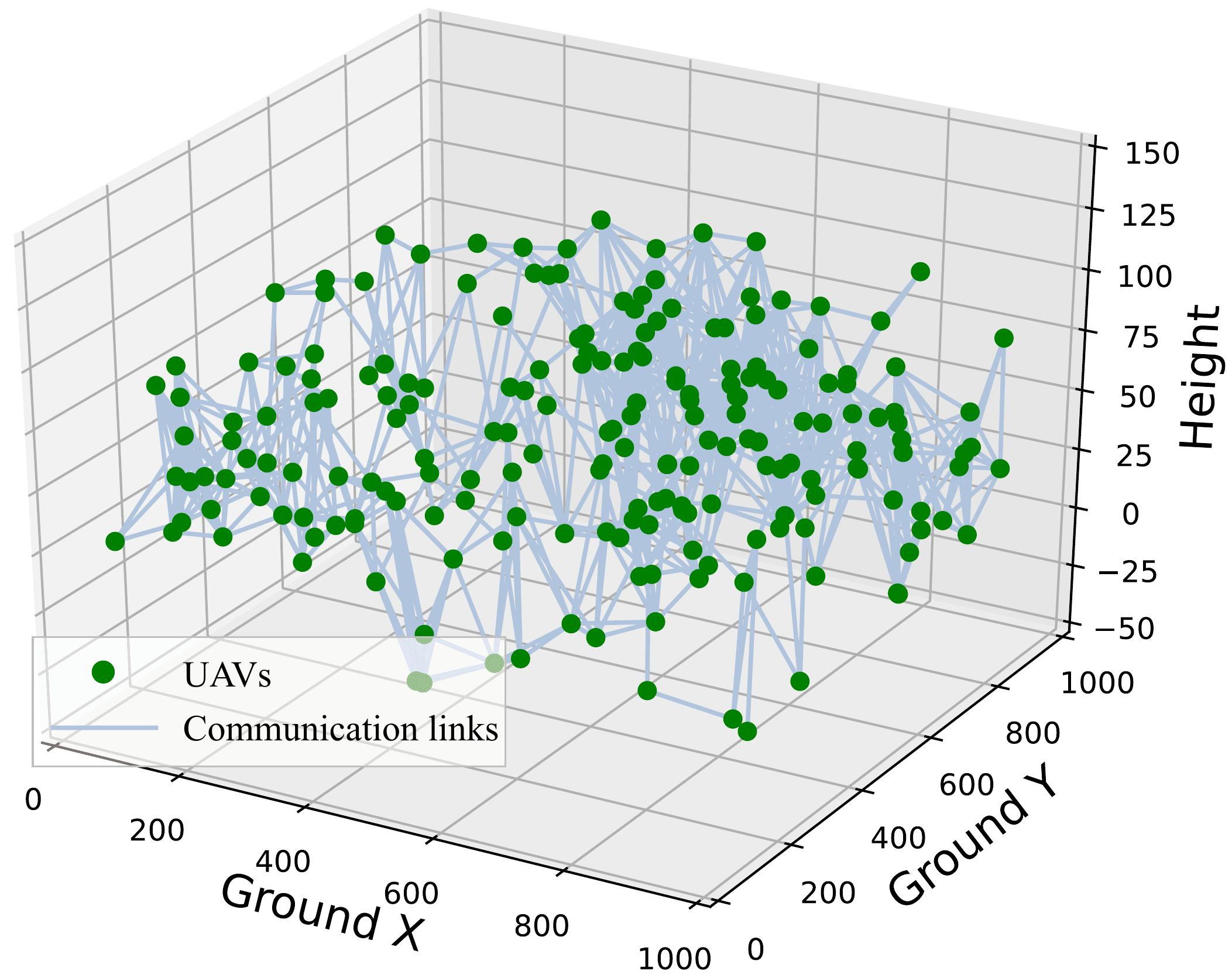} 
		\caption{{Initial distributions of the 200 identical UAVs .}}
		\label{simulation:initial_state}
	\end{minipage}
	\hspace{0ex}
	\begin{minipage}[t]{0.5\linewidth}
		\centering
		\includegraphics[width=93mm]{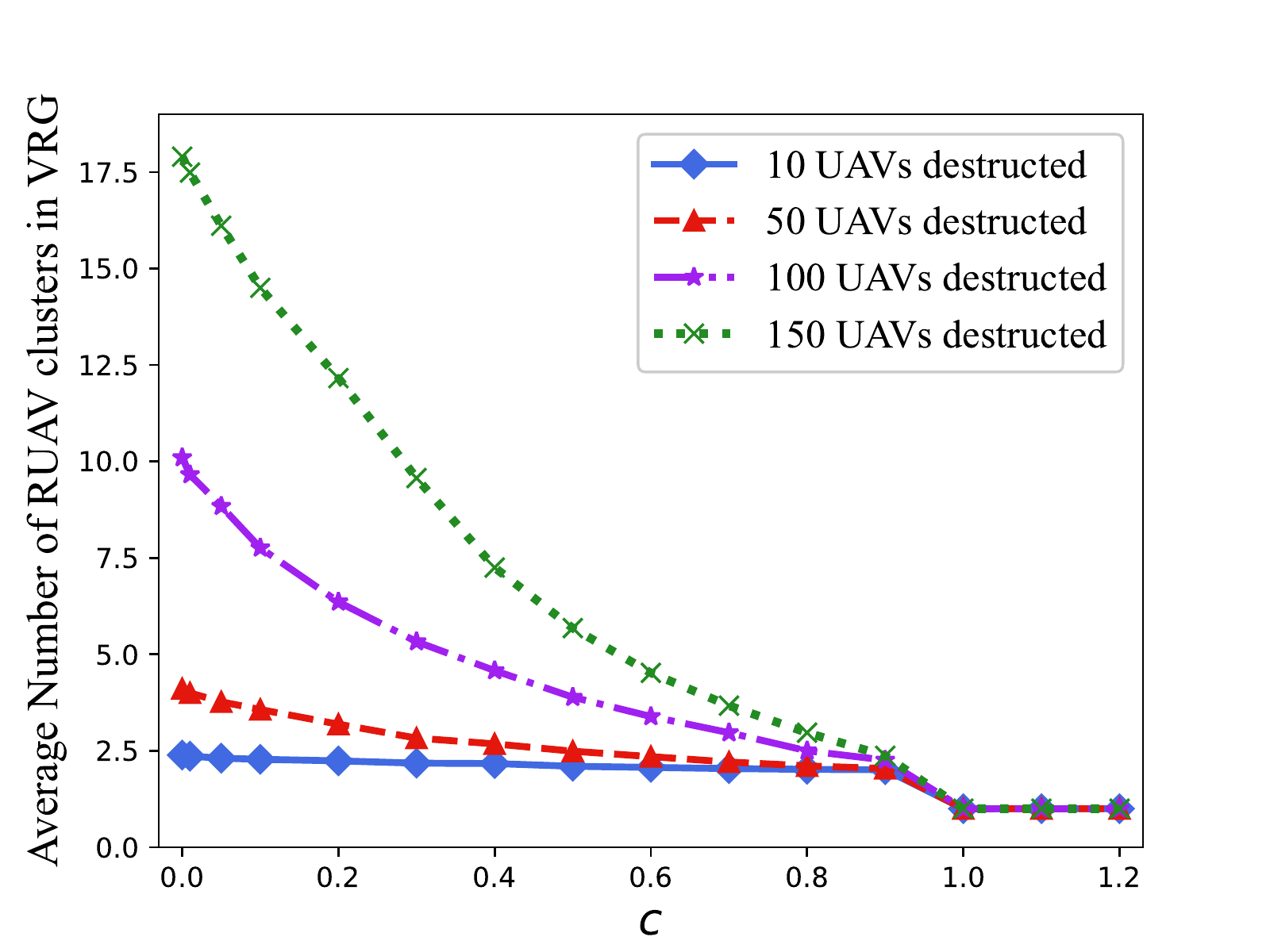} 
		\caption{The average number of RUAV clusters versus $c$.}
		\label{simulation:algorithm_1}
	\end{minipage}
\end{figure*}
\begin{figure*}[t]
	\begin{minipage}[t]{0.5\linewidth}
		\centering
		\includegraphics[width=89.7mm]{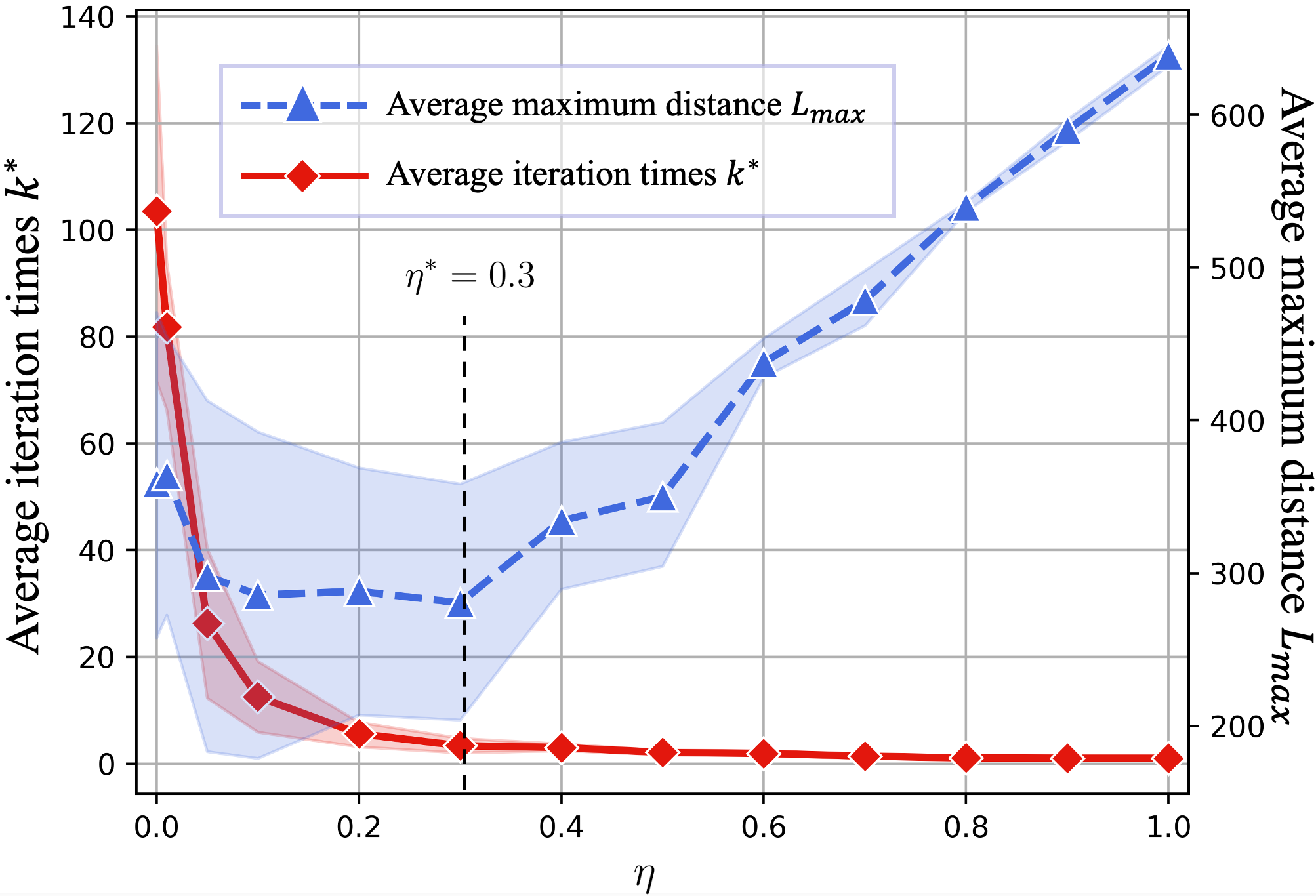} 
		\caption{The average of $k^*$ versus $\eta$ and the average of $L_{max}$ versus $\eta$. The $\epsilon$ is $1$.  }
		\label{simulation:eta*}
	\end{minipage}
	\hspace{0ex}
	\begin{minipage}[t]{0.5\linewidth}
		\centering
		\includegraphics[width=92.2mm]{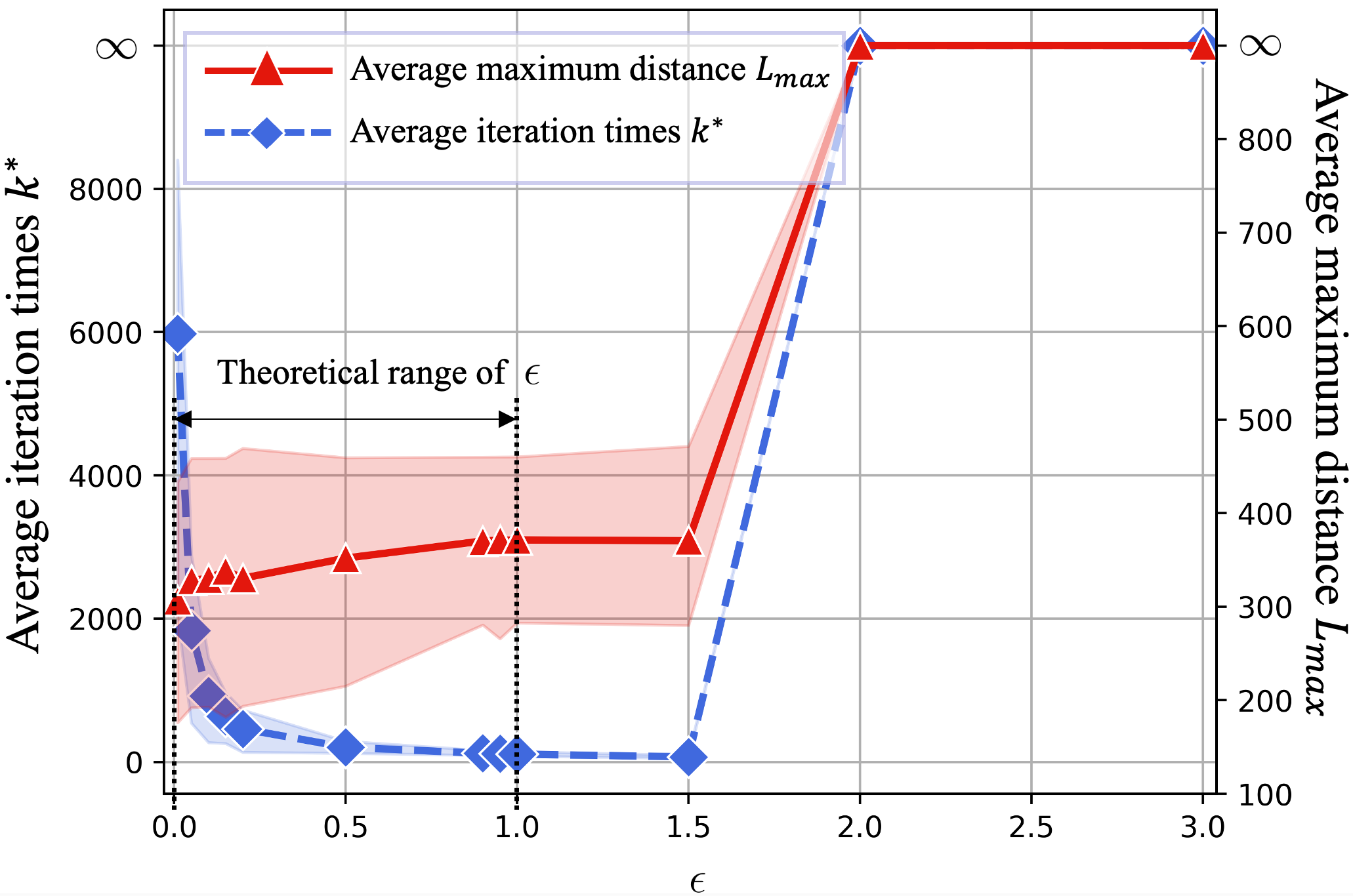} 
		\caption{The average of $k^*$ versus $\epsilon$, and the average of $L_{max}$ versus $\epsilon$. The $\eta$ is $0.3$.  }
		\label{simulation:K*}
	\end{minipage}
\end{figure*} 

\subsection{Verifications of Algorithm 1}

Express the virtual distance $d_t^v$ in the VRG as $d_t^v=120+c(d_{min,t}^v-120)$, where $d_{min,t}^v$ is obtained by Algorithm \ref{algorithm:find_d_min} and $c\ge 0$ is a coefficient. We randomly destroy 10, 50, 100 and 150 UAVs of the initial USNET 100 times each, and the average number of RUAV clusters in the VRG versus $c$ is shown in \reffig{simulation:algorithm_1}. When $c=0$ and $d_t^v=120$m, the average number of RUAV clusters is bigger than 1 and the VRG cannot form CCNs. As $c$ gets closer to 1, the virtual distance $d_t^v$ becomes larger and the average number of RUAV clusters in the VRG decreases. The VRG cannot form a CCN until $c=1$ and $d_t^v=d^v_{min,t}$. Hence, Algorithm \ref{algorithm:find_d_min} can guarantee to find the minimal virtual distance $d_{min,t}^v$ that makes the VRG a CCN.

\subsection{Finding $\eta^\star$ and $\epsilon^\star$ of the CR-MGC}
\label{subsection_simulation:find_best_hyperparameter}
We randomly {destruct} 10, 50, 100, 150 UAVs of the initial USNET 100 times each. \reffig{simulation:eta*} shows the average of the number of GCO $G(\cdot)$ iterations $k^*$ versus $\eta$. The average of $L_{max}$ versus $\eta$ is also shown in Fig.~7.
We can see that the average of $k^*$ drops with the increase of $\eta$, while the average of $L_{max}$ slightly decreases when $\eta\le0.3$ and continuously increases when $\eta>0.3$. Hence, we choose $\eta^\star=0.3$ as the best value of $\eta$ to balance both $k^*$ and $L_{max}$.

We randomly {destruct} 10, 50, 100, 150 UAVs of the initial USNET 100 times each. \reffig{simulation:K*}  shows the average of $k^*$ versus $\epsilon$. The average of $L_{max}$ versus $\epsilon$ is also shown in \reffig{simulation:K*} .
When $\epsilon\in[0,1.5]$, the average of $k^*$ drops with the increase of $\epsilon$, while the average of $L_{max}$ increases. However, when $\epsilon>1.5$, the GCO diverges and both the average of $k^*$ and average of $L_{max}$ go to infinity. Recall that $H_t=\frac{\epsilon}{\left\|\mathbf{A}^v_t\right\|_\infty}$ and the theoretical range of $K_t$ is $0< H_t\le\frac{1}{\left\|\mathbf{A}^v_t\right\|_\infty}$ (or equivalently $0<\epsilon\le 1$). Hence, the results in \reffig{simulation:K*} verify the correctness of the theoretical range of $H_t$. We can choose $\epsilon^\star=1$ as the best value of $\epsilon$ to balance both $k^*$ and $L_{max}$.
\subsection{Meta Learning of the GCN}
\begin{figure*}[t]
	\begin{minipage}[t]{0.5\linewidth}
		\centering
		\includegraphics[width=89mm]{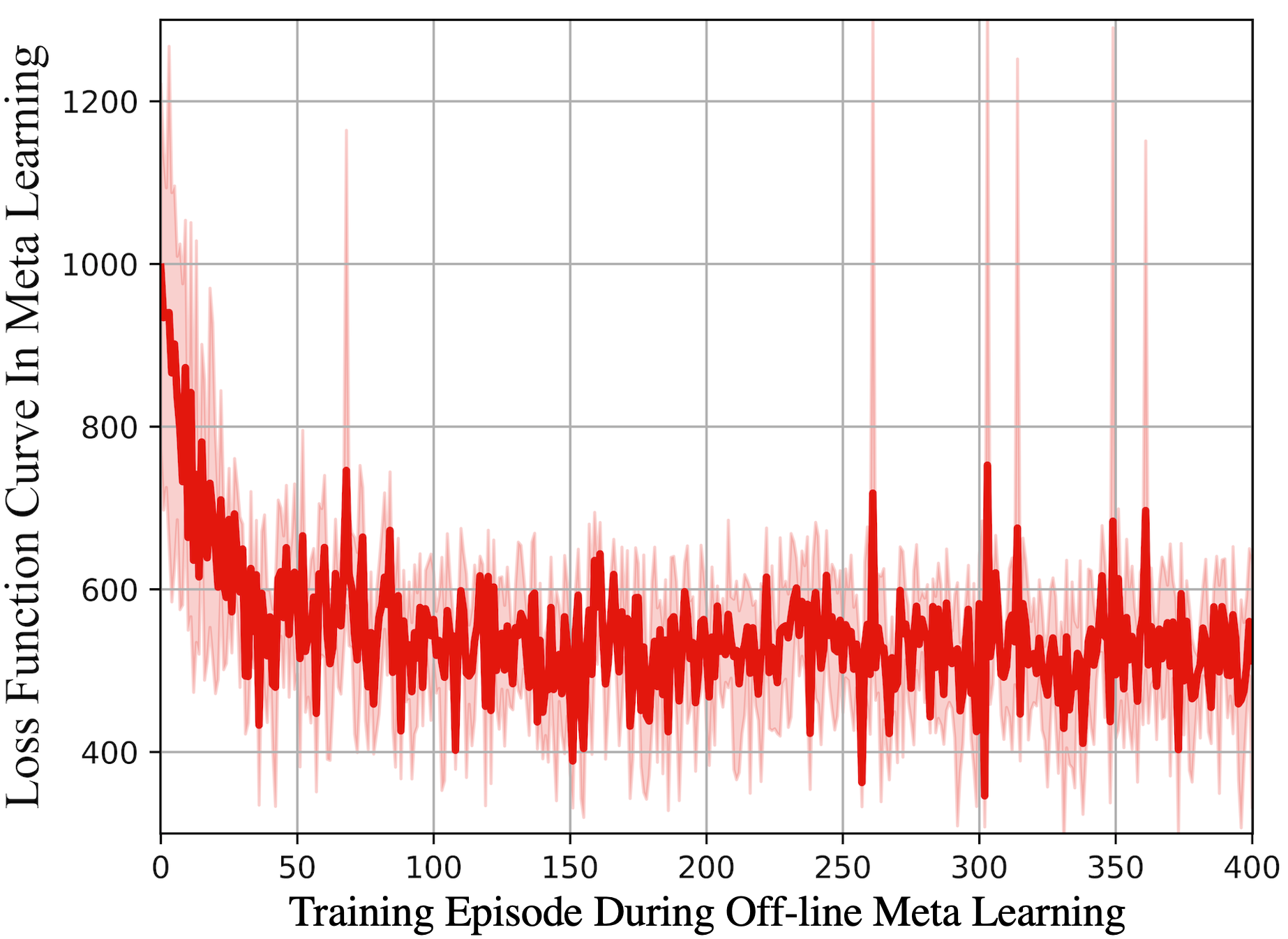} 
		\caption{Loss function curve of the mGCNs during meta learning. }
		\label{simulation:meta_train}
	\end{minipage}
	\hspace{0ex}
	\begin{minipage}[t]{0.5\linewidth}
		\centering
		\includegraphics[width=89.0mm]{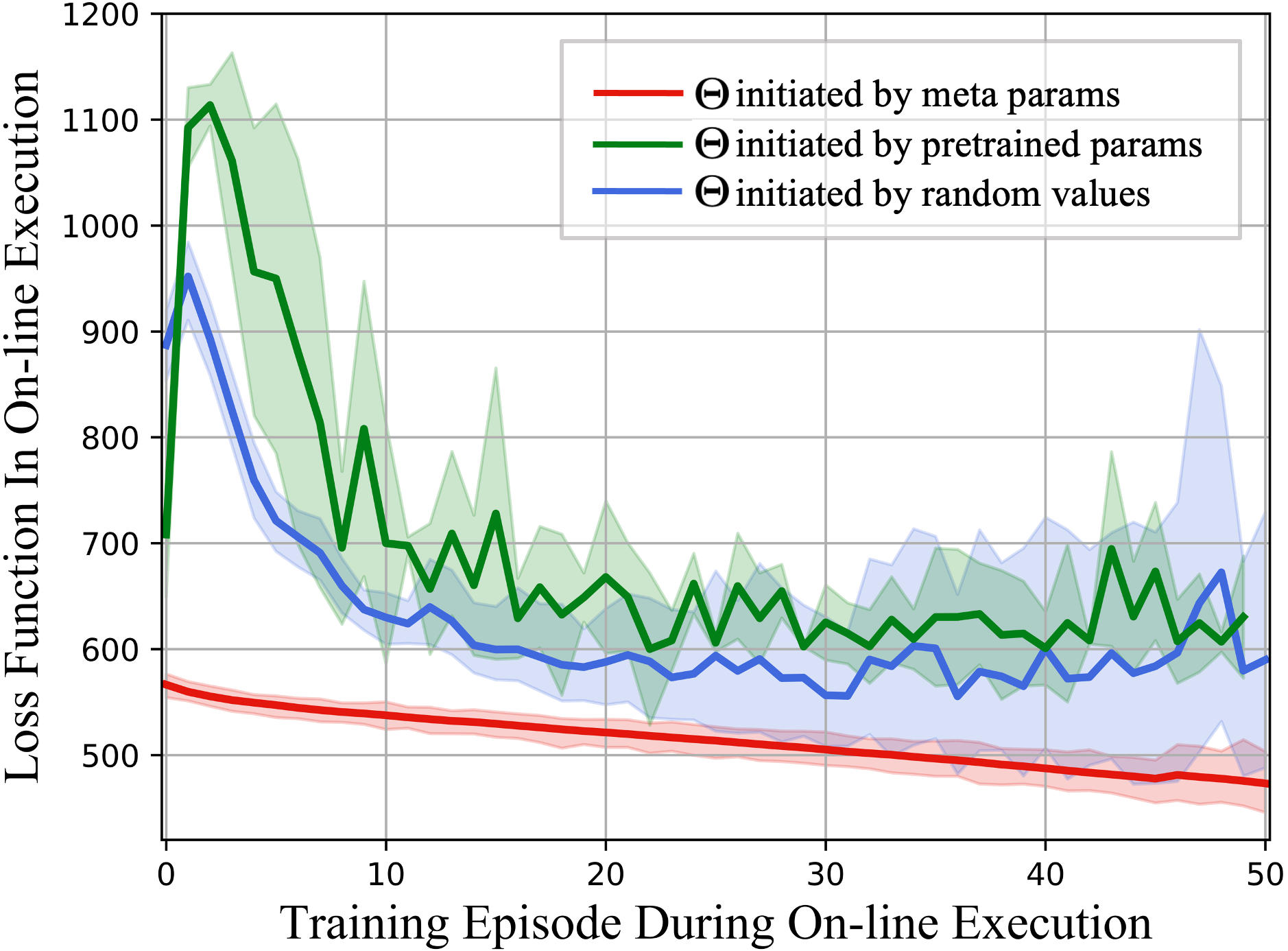} 
		\caption{Loss function during the on-line executions, $M=50$.  }
		\label{simulation:meta_test}
	\end{minipage}
\end{figure*}
We build $199$ mGCNs since the initial USNET contains $200$ UAVs. For each mGCN, we construct a support set and query set with $U_0=400$ topology matrices each. 
\reffig{simulation:meta_train} shows the average loss function curve of all mGCNs during the off-line meta learning. We can see that the loss function starts from 1000 and drops stably to 500 during the off-line meta learning. The consistent decrease of the loss function indicates that the parameters of the mGCNs are gradually moving to better values.

\reffig{simulation:meta_test} shows the loss function curves of the GCN during the training process in on-line executions, where the parameters of GCN $\mathbf{\Theta}$ are initiated by the meta parameters $\mathbf{\Gamma}^\star_{k}$, the pre-trained parameters, and random values, respectively. We set the on-line training episode to be $M=50$. On the one hand, the loss function curve of GCN initiated by meta parameters starts from 570 that is smaller than other two curves (700 and 900, respectively). This means that the meta parameters are better initialized values than both the pre-trained parameters and random parameters. On the other hand, the loss function curve of GCN initiated by meta parameters decreases continuously during the on-line training process and reaches lower values than other two curves, which implies the meta parameters have great potential in performance.  
\begin{figure}
	\centering
	\includegraphics[width=90mm]{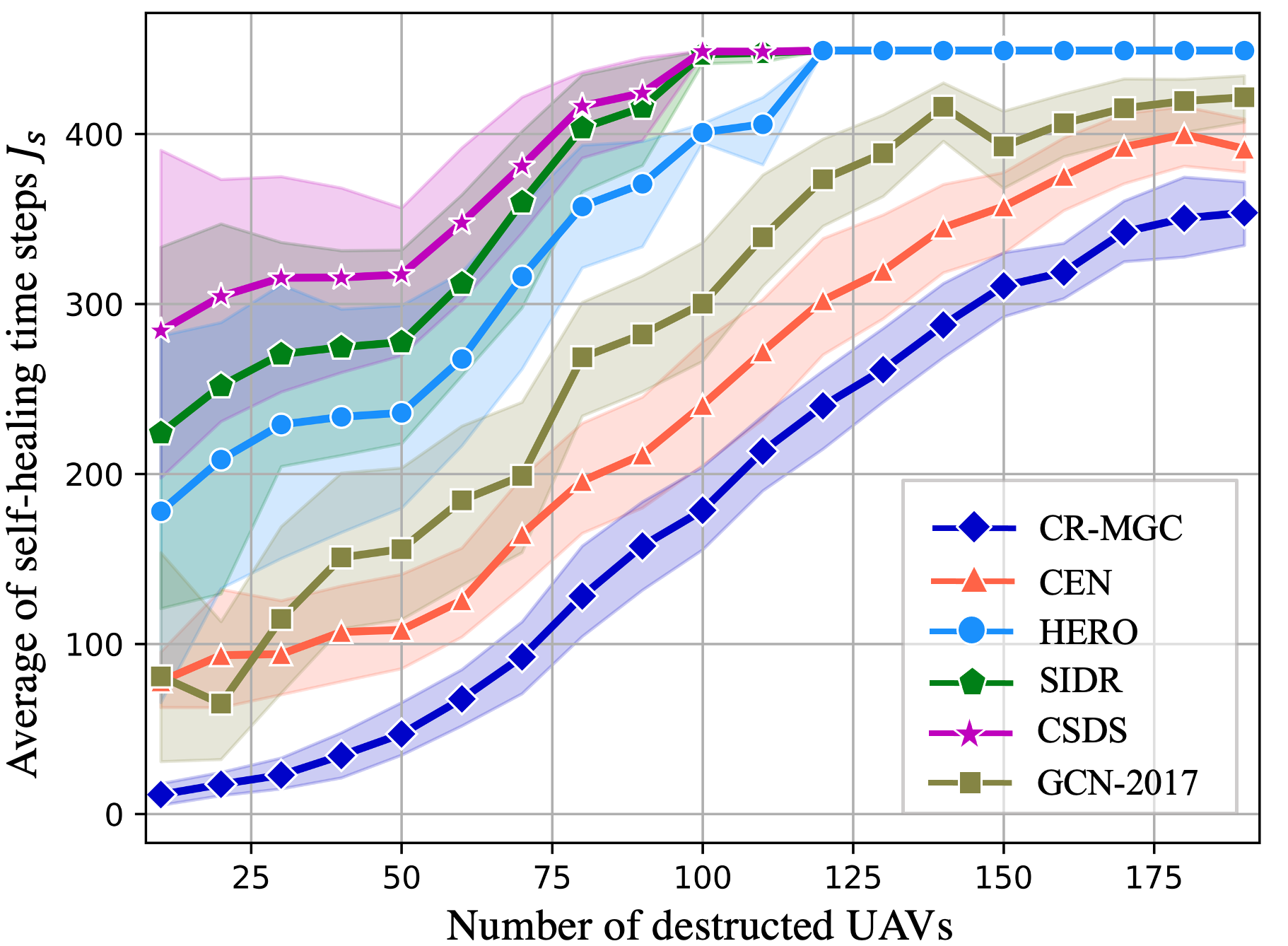}
	\caption{Average self-healing time steps ${J}_s$ under different number of destructed UAVs. }
	\label{fig:single_time_UED}
\end{figure}
\subsection{SCC of One-off UEDs in $(\mathbf{P1})$}
\reffig{fig:single_time_UED} shows the average self-healing time steps ${J}_s$ of the CR-MGC under one-off UEDs. The performances of HERO\cite{hero}, SIDR\cite{sidr}, CSDS\cite{csds}, GCN-2017\cite{gcn_2017}, and CEN\footnote{CEN represents the algorithm that makes each RUAV fly to the center of their positions directly.} are also displayed for comparisons. We randomly {destruct} 10, 20, 30, ..., 190 UAVs of the initial USNET 100 times each, and take the average value of the self-healing time to plot the curves of different algorithms. The shaded areas represent the 100\% confidential intervals of the average self-healing time. We can see that with the increase of the number of {destructed} UAVs, the self-healing time of all the algorithms increases. Moreover, the average self-healing time of the CR-MGC is smaller than those of other four algorithms under any number of {destructed} UAVs. Hence, the CR-MGC can rebuild the communication connectivity of the USNET within shorter time. 

\begin{figure*}[t]
	\setlength{\abovecaptionskip}{0.5cm}
	\centering
	\subfigure[The initial USNET is {destructed} into 11 RUAV clusters.]{
		\label{fig_simulations:3_3_1} 
		\includegraphics[width=88mm]{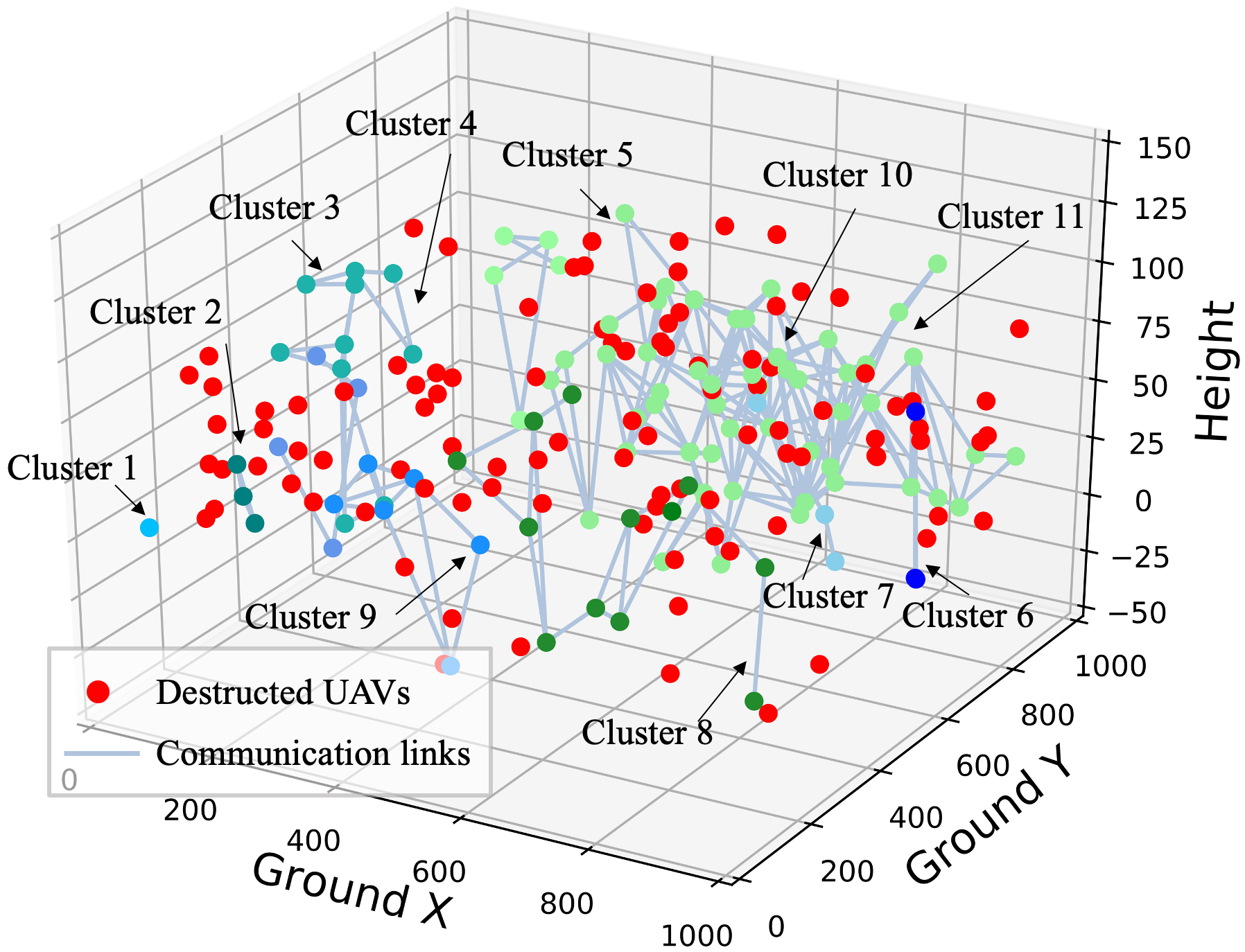}}
	\subfigure[The changing of the positions when applying the GCO $G(\cdot)$.]{
		\label{fig_simulations:3_3_2} 
		\includegraphics[width=88mm]{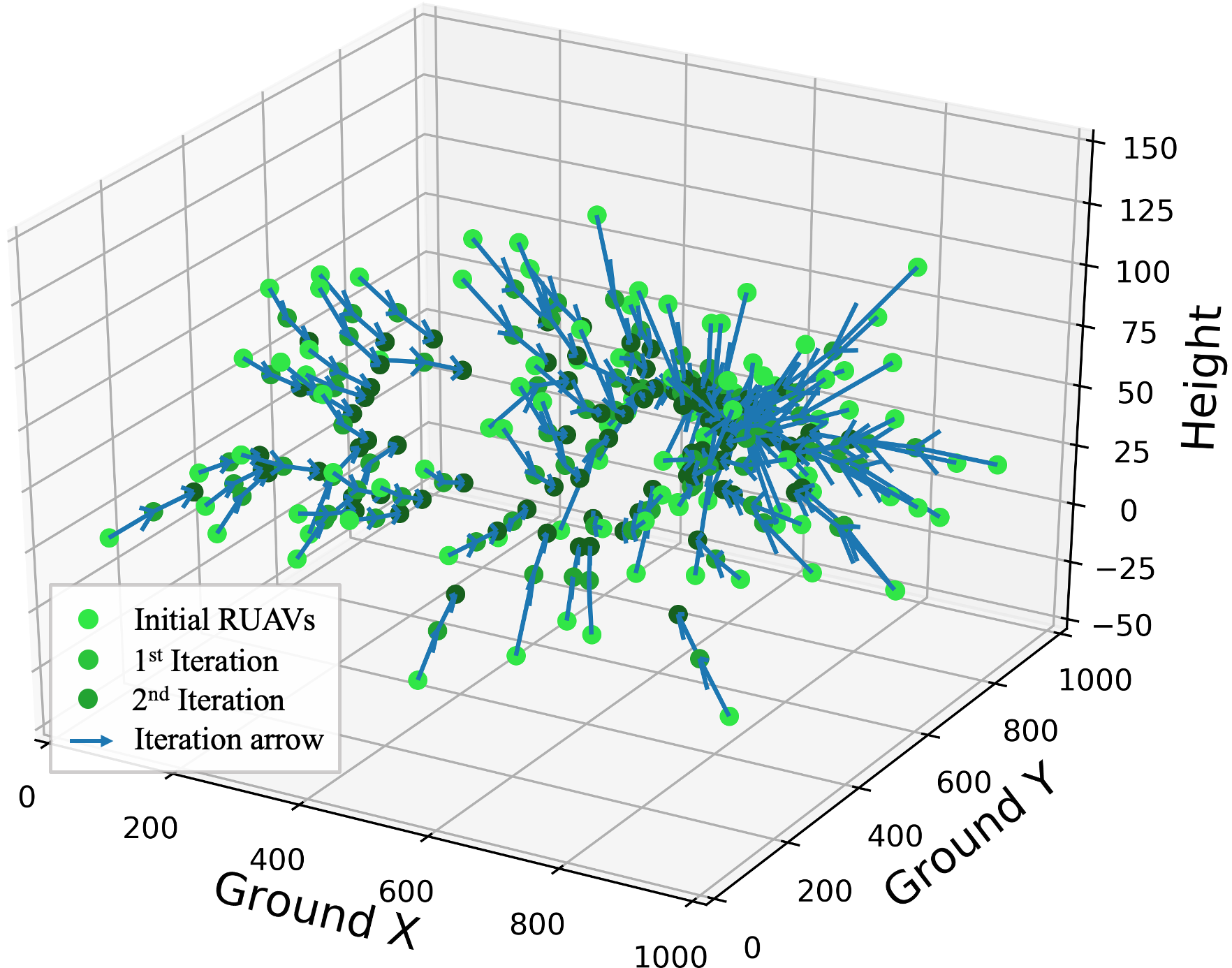}}

	\subfigure[Flying trajectories of RUAVs using CR-MGC.]{
		\label{fig_simulations:3_3_3} 
		\includegraphics[width=88mm]{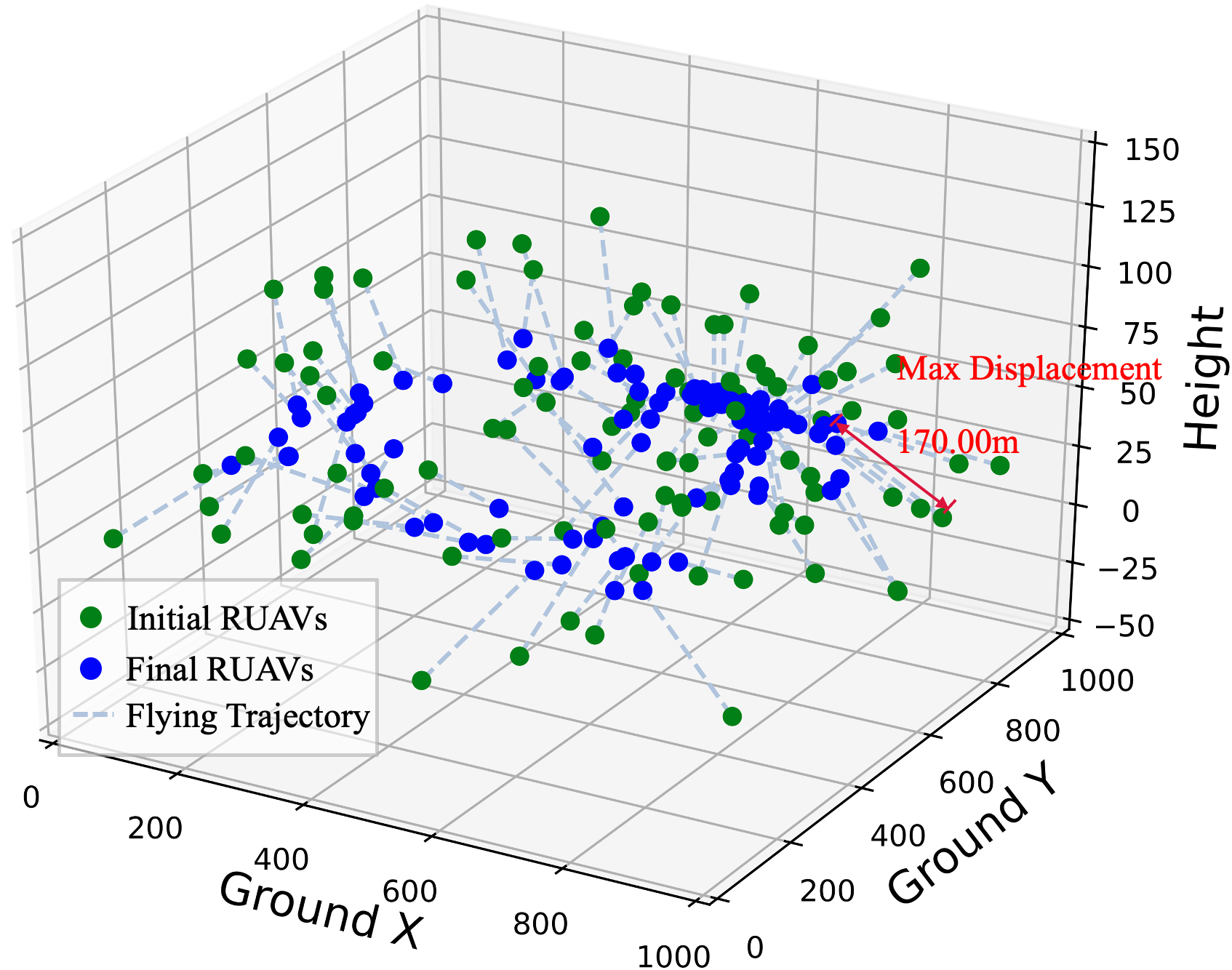}}
	\subfigure[The number of RUAV clusters versus the time steps.]{
		\label{fig_simulations:3_3_4} 
		\includegraphics[width=88mm]{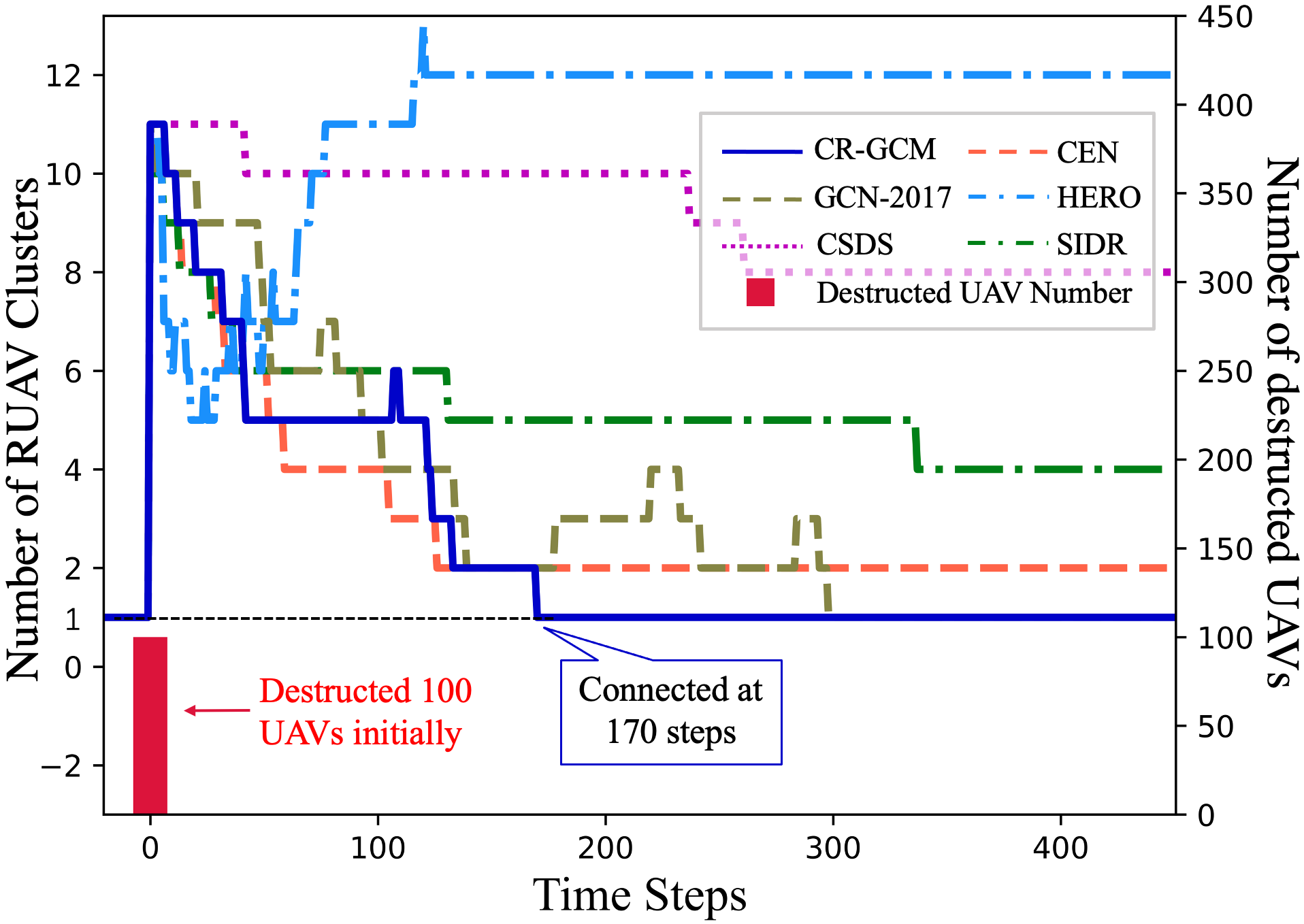}}
	
	\caption{Disruptions to the initial USNET and the self-healing process under different algorithms.}
	\label{fig:3-3}
\end{figure*}

\reffig{fig:3-3} shows the trajectories of the RUAVs during a certain self-healing process\footnote{{Note that the motion graphs of the self-healing process are available on \emph{https://github.com/nobodymx/resilient-swarm-communications-with-meta-graph-convolutional-networks}}}, where the one-off UED destroys $100$ UAVs at $t=0$. As shown in \reffig{fig_simulations:3_3_1}, the initial USNET is {destructed} into $C_0=11$ RUAV clusters, where nodes with the same color denotes the RUAVs in the same RUAV cluster. \reffig{fig_simulations:3_3_2} shows that the GCOs can make the RUAVs gather towards their center to form a CCN, which is consistent with Proposition \ref{propositions_2}. \reffig{fig_simulations:3_3_3} shows the flying trajectory of each RUAV using CR-MGC.
The maximum displacement of all RUAVs is 170m.  
\reffig{fig_simulations:3_3_4} shows that the number of RUAV clusters of the RUAV graph $\mathcal{G}_t$ decreases with CR-MGC. Moreover, the CR-MGC makes the RUAVs form a CCN within the least time steps.  

\subsection{SCC of General UEDs in $(\mathbf{P2})$}
\begin{figure*}[t]
	\begin{minipage}[t]{0.5\linewidth}
		\centering
		\includegraphics[width=90.2mm]{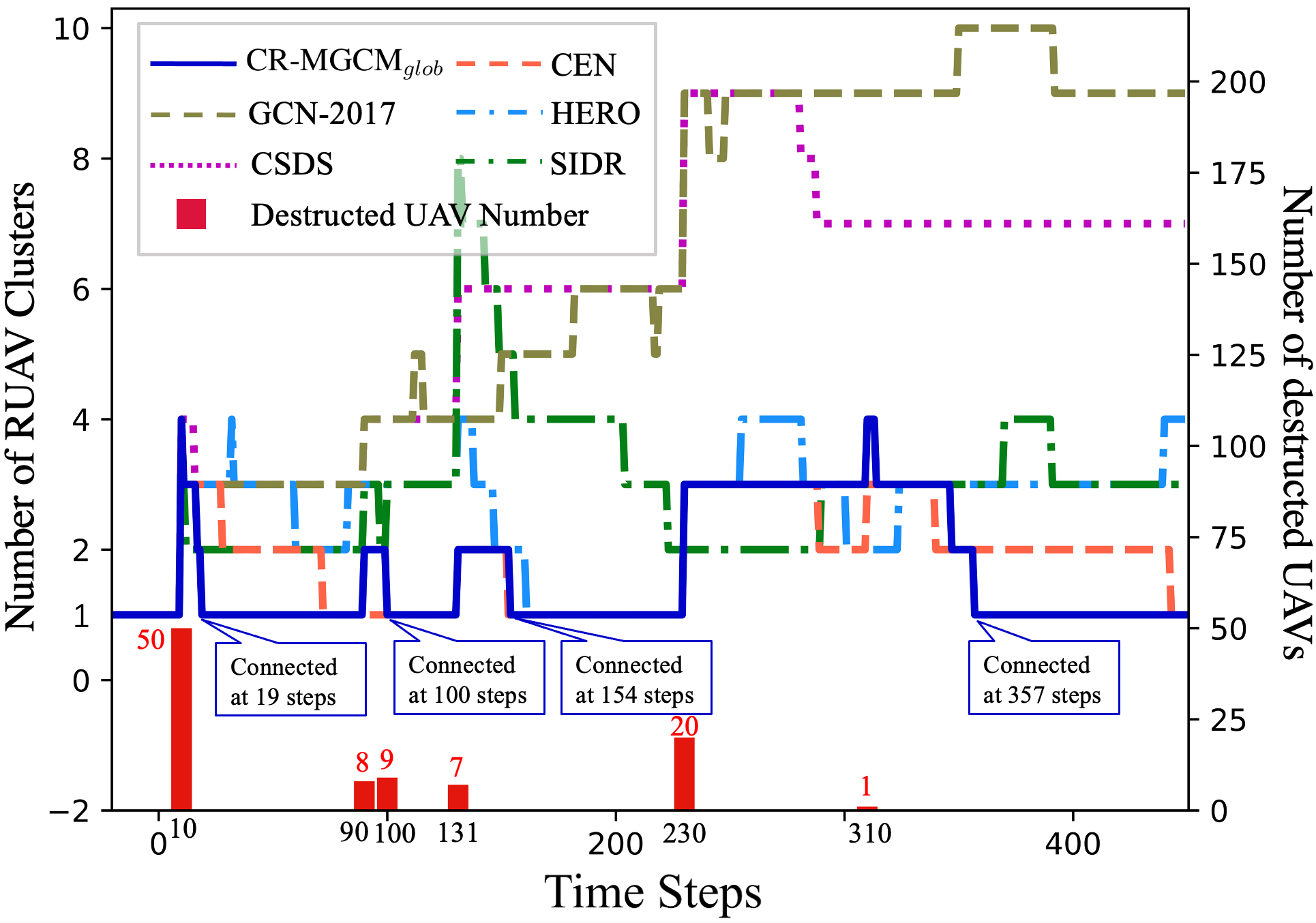} 
		\caption{The number of RUAV cluster versus time steps under the general UEDs with global information.}
		\label{simulation:4_1}
	\end{minipage}
	\hspace{0.5ex}
	\begin{minipage}[t]{0.5\linewidth}
		\centering
		\includegraphics[width=90.2mm]{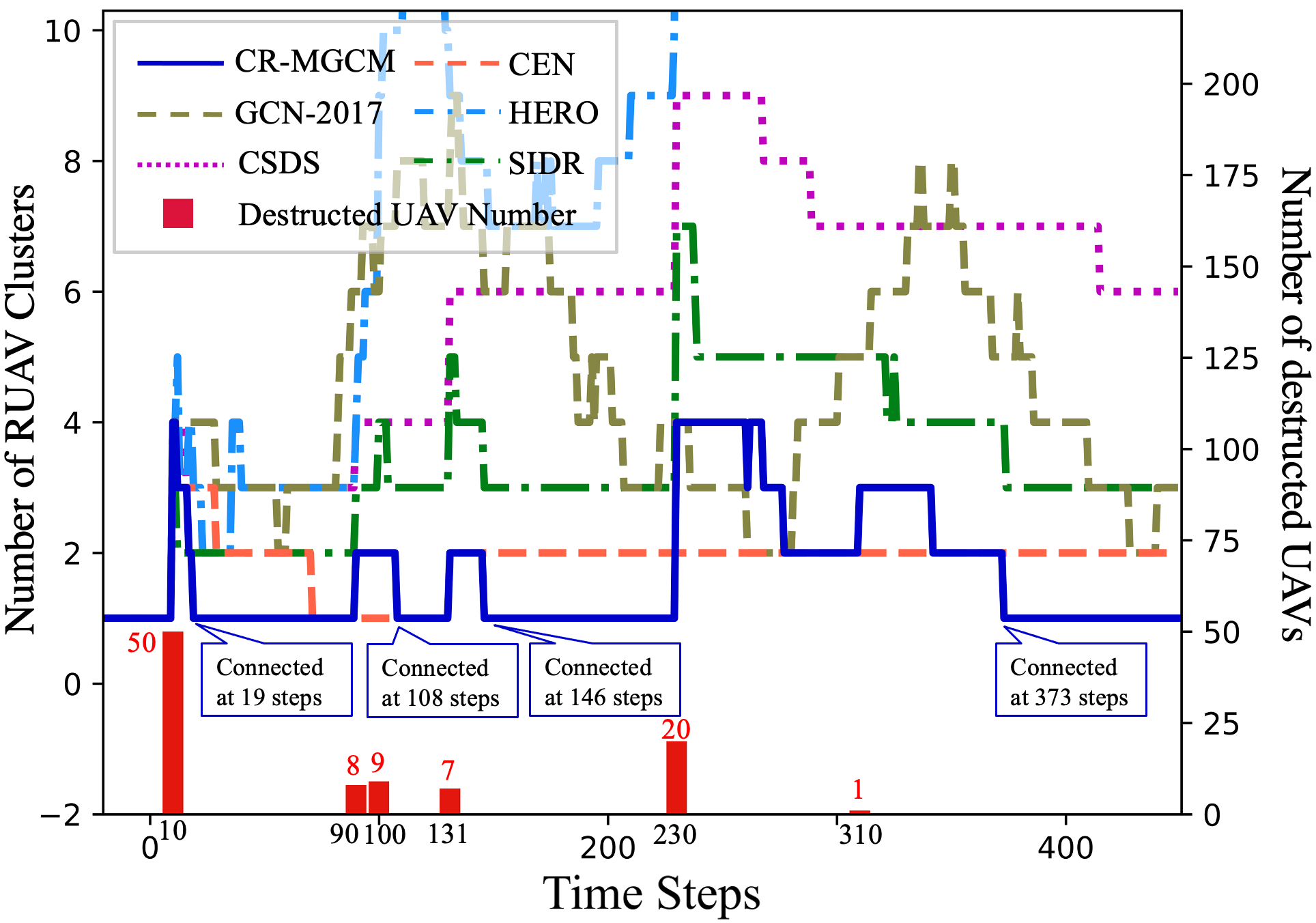} 
		\caption{The number of RUAV cluster versus time steps under the general UEDs with monitoring mechanisms.}
		\label{simulation:4_2}
	\end{minipage}
\end{figure*}

\begin{figure*}[t]
	\begin{minipage}[t]{0.5\linewidth}
		\centering
		\includegraphics[width=89.2mm]{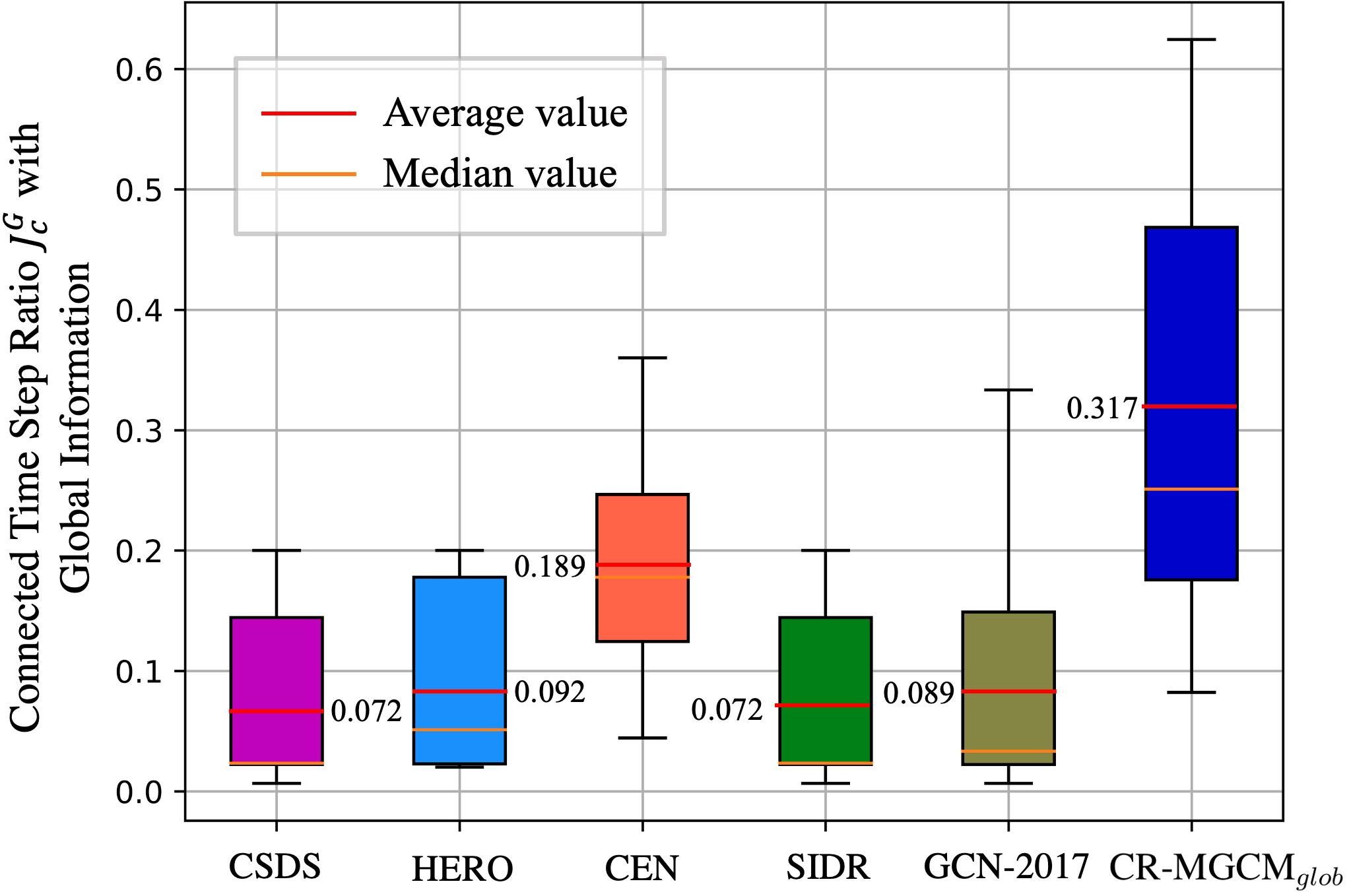} 
		\caption{The connected time step ratio ${J}_c^G$ of different algorithms with global information.  }
		\label{simulation:4_3}
	\end{minipage}
	\hspace{0.5ex}
	\begin{minipage}[t]{0.5\linewidth}
		\centering
		\includegraphics[width=89.2mm]{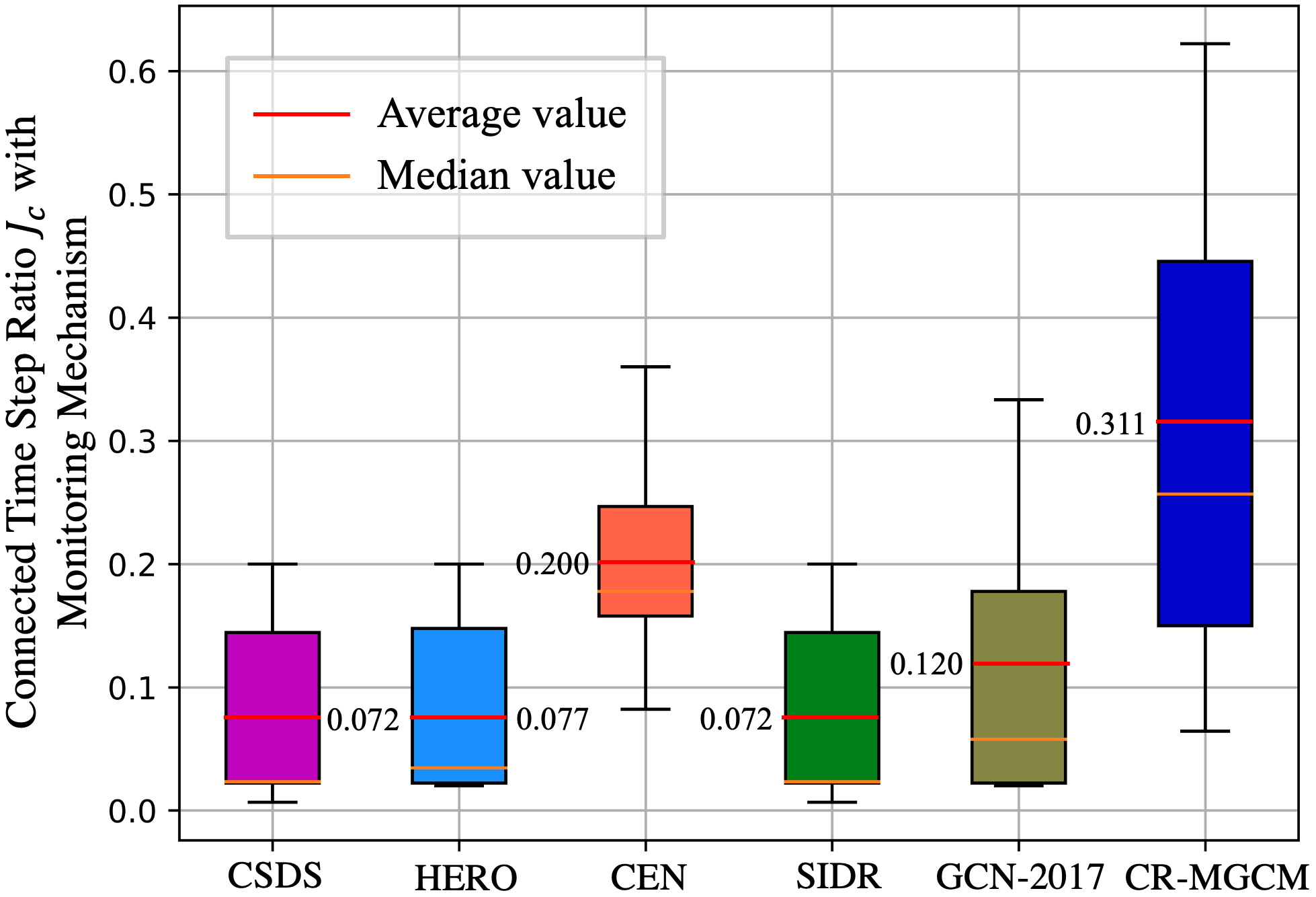} 
		\caption{The connected time step ratio ${J}_c$ of different algorithms with monitoring mechanisms.  }
		\label{simulation:4_4}
	\end{minipage}
\vspace{-0.2cm}
\end{figure*}

\reffig{simulation:4_1} and \reffig{simulation:4_2} both show the number of RUAV clusters $C_t$ using different algorithms under the same general UED. 
However, the RUAVs in the simulation of \reffig{simulation:4_1} have global information at each time step, while the RUAVs in the simulation of \reffig{simulation:4_2} do not and can only utilize the  monitoring mechanism.
The UED happens at 10, 90, 100, 131, and 230 time step and destruct 50, 8, 9, 7, and 20 UAVs, respectively. We can see that the RUAVs using CR-MGCM$_{glob}$ and CR-MGCM both quickly forms a CCN after each UED, while the RUAVs using other algorithms slowly forms a CCN after UEDs or even cannot form CCNs.  Hence, CR-MGCM$_{glob}$ and CR-MGCM can effectively rebuild the communication connectivity of the USNET within shorter time steps than the existing algorithms.


We {destructed} the USNET with 10 distinct general UEDs and depict the distribution of the connected time step ratio by boxplots shown in \reffig{simulation:4_3} and \reffig{simulation:4_4}. The RUAVs in the simulation of \reffig{simulation:4_3} have global information at each time step, while the RUAVs in the simulation of Fig.~16 only utilize the monitoring mechanism. In order to distinguish from $J_c$, we denote the connected time step ratio in \reffig{simulation:4_3} as $J_c^G$.  We can see that the average ${J}_c^G$ with CR-MGCM$_{glob}$ is larger than that of other algorithms, which indicates the effectiveness of the CR-MGCM$_{glob}$ under the general UEDs. We can also see that the average ${J}_c$ with CR-MGCM is larger than that of other algorithms, which indicates the effectiveness of the CR-MGCM under the general UEDs. Moreover, the ratio between the average $J_c$ with CR-MGCM and the average $J_c^G$ with CR-MGCM$_{glob}$ is $\frac{{J}_c}{{J}_c^G}=\frac{0.311}{0.317}=98.11\%$, which indicates that CR-MGCM can reach the performance of CR-MGCM$_{glob}$ under the general UEDs.

\begin{figure}[t]
	\centering
	\includegraphics[width=88mm]{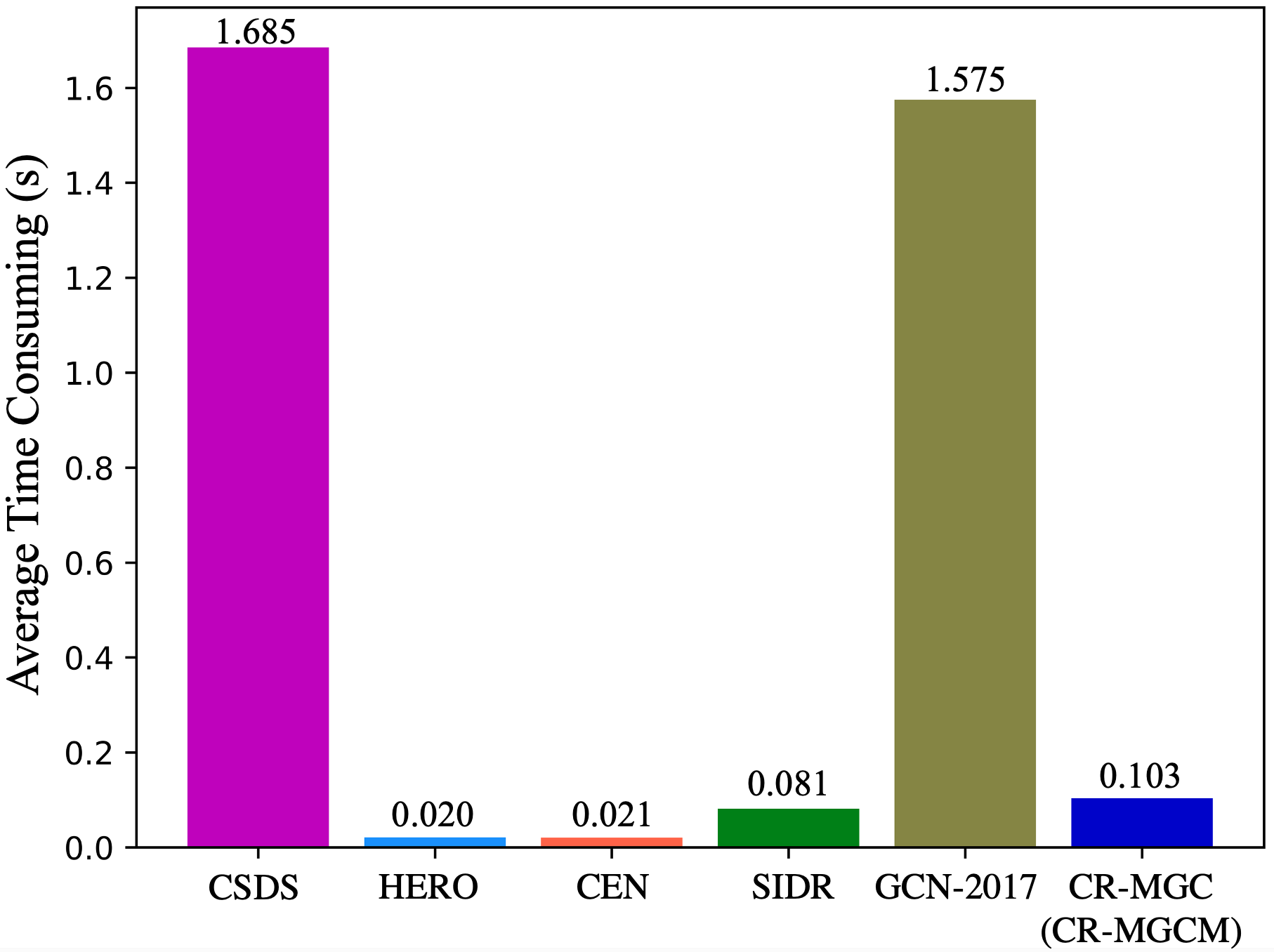}
	\caption{Average time consumptions of on-line executions {with} different algorithms. }
	\label{fig:time_consuming}
	\vspace{-0.5cm}
\end{figure}
\subsection{Time Consuming Comparisons}
\reffig{fig:time_consuming} compares the average on-line execution time cost at one time step of different algorithms. We can see that the average time cost of CR-MGC have the same magnitudes with HERO, CEN and SIDR, but is much smaller than CSDS and GCN-2017.  Note that the CR-MGCM and CR-MGCM$_{glob}$ both have the same time costs with CR-MGC, since they use the same GCN structures. This indicates that CR-MGC, CR-MGCM and CR-MGCM$_{glob}$ have acceptable on-line execution time costs.

\section{Conclusion}
\label{section:conclusions}
In this paper, we studied the SCC problem of the USNET under one-off UEDs and general UEDs. Specifically, we proposed a CR-MGC algorithm to cope with the SCC problem under one-off UEDs and verify its convergence. We also developed a meta learning scheme to improve the on-line executions of CR-MGC. For the SCC problem under the general UEDs, we designed the CR-MGCM algorithm to plan the trajectories of RUAVs. Numerical results showed that the proposed algorithms can rebuild the communication connectivity of the USNET within shorter time than the existing algorithms under both one-off UEDs and general UEDs. The experiment results also showed that the meta learning scheme could not only enhance the performance of the proposed algorithms, but also reduce the on-line execution time costs of them.
\begin{figure}[t]
	\centering
	\includegraphics[width=87mm]{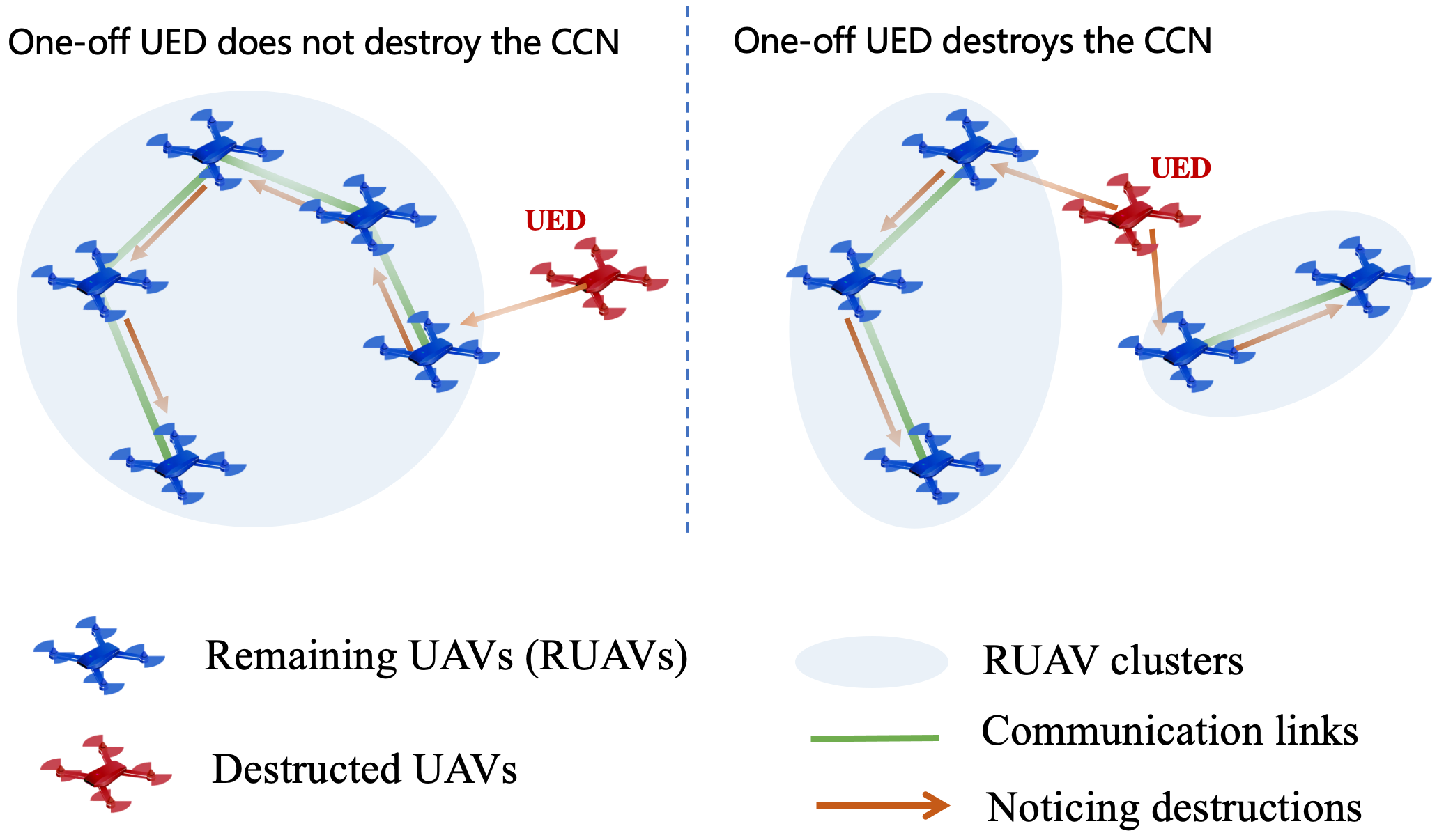}
	\caption{{Examples of one-off UEDs to the same USNET. The left one-off UED does not destroy the CCN, while the right one-off UED destroys the CCN.} }
	\label{fig:appendix_destroy_CCN}
\end{figure}
\begin{appendices}
\section{{Illustrations of  One-off UEDs cases}}
\label{appendix:type-one-off-UEDs}
{
Consider a USNET composed of $N$ UAVs with fixed initial positions $\{\mathbf{p}_{1,0},\mathbf{p}_{2,0},...,\mathbf{p}_{N,0}\}$. The one-off UED can destruct any number of UAVs with random indexes in the USNET at a certain time step. Denote the number of destructed UAVs as $\Upsilon\in\{1,2,...,N\}$. The number of {case}s of destructing $\Upsilon$ UAVs can be calculated as $C^{\Upsilon}_N=\frac{N!}{\Upsilon!(N-\Upsilon)!}$, where $C$ is the combinatorial number. Hence, the total number of one-off UED {case}s is  $\sum_{\Upsilon=1}^NC_N^\Upsilon=2^N$. }

{Note that not all {case}s of one-off UEDs can destroy the communication connectivity of the USNET. For example, as shown in \reffig{fig:appendix_destroy_CCN}, the one-off UED on the left does not destroy the CCN, while the one-off UED on the right destroys the CCN. The RUAVs can stay still if they remain a CCN after the one-off UED. Therefore, we only consider the one-off UEDs that can destroy the communication connectivity of the USNET. }

\section{{Proof of Proposition 1}}
\label{appendix:proposition_1}
{We first prove that the metric space $\{\mathbf{X}_t\mid \frac{1}{|\mathcal{I}_t|}\sum_{i\in\mathcal{I}_t}\mathbf{p}_{i,t}=\mathbf{c}\}$ is closed under the GCO $G(\cdot)$, i.e.,
\begin{align}
	G(\cdot):\{\mathbf{X}_t\mid \frac{1}{|\mathcal{I}_t|}\sum_{i\in\mathcal{I}_t}\mathbf{p}_{i,t}=\mathbf{c}\}\rightarrow\{\mathbf{X}_t\mid \frac{1}{|\mathcal{I}_t|}\sum_{i\in\mathcal{I}_t}\mathbf{p}_{i,t}=\mathbf{c}\}.
\end{align}
Then we prove that the GCO $G(\cdot)$ satisfies the contraction mapping theorem \cite{cm} when $0<H_t\le\frac{1}{\left\|\mathbf{A}^v_t\right\|_\infty}$. In addition, we prove that the positions of RUAVs in the topology matrix $\overline{\mathbf{X}}_t$ (Banach point \cite{cm}) of the GCO $G(\cdot)$ all have the same value $\mathbf{c}$, i.e., $\overline{\mathbf{X}}_t=[\mathbf{c},\mathbf{c},...,\mathbf{c}]^T$.}
\subsection{The Closure of GCO $G(\cdot)$ in $\{\mathbf{X}_t\mid \frac{1}{|\mathcal{I}_t|}\sum_{i\in\mathcal{I}_t}\mathbf{p}_{i,t}=\mathbf{c}\}$}
\label{appendix:proposition_1:closure}
{We need to prove that $\forall \mathbf{X}_t\in\{\mathbf{X}_t\mid \frac{1}{|\mathcal{I}_t|}\sum_{i\in\mathcal{I}_t}\mathbf{p}_{i,t}=\mathbf{c}\}$,  $\mathbf{X}^1_{t}= G(\mathbf{X}_t)\in\{\mathbf{X}_t\mid \frac{1}{|\mathcal{I}_t|}\sum_{i\in\mathcal{I}_t}\mathbf{p}_{i,t}=\mathbf{c}\}$ holds, i.e.,
\begin{align}
	\label{mean_position}
	\frac{1}{|\mathcal{I}_t|}\sum_{i\in\mathcal{I}_t}\mathbf{p}_{i,t}^1=\frac{1}{|\mathcal{I}_t|}\sum_{i\in\mathcal{I}_t}\mathbf{p}_{i,t}=\mathbf{c},
\end{align} 
where ${\mathbf{p}^1_{r_j,t}}^T$ is the $j$-th row of $\mathbf{X}^1_t$.}
Let $\mathbf{p}^1_{i,t}\triangleq[x^1_{i,t},y^1_{i,t},z^1_{i,t}]^T$, where $x^1_{i,t}$, $y^1_{i,t}$ and $z^1_{i,t}$ denote the $X$, $Y$ and $Z$ axis components of $\mathbf{p}^1_{i,t}$. Then \refeq{mean_position} is equivalent to
\begin{align}
	\label{center_proof}
	\sum_{i\in\mathcal{I}_t}x^1_{i,t}&=\sum_{i\in\mathcal{I}_t}x_{i,t},\;\text{and}\;
	\sum_{i\in\mathcal{I}_t}y^1_{i,t}=\sum_{i\in\mathcal{I}_t}y_{i,t},\notag\\
	&\;\text{and}\;
	\sum_{i\in\mathcal{I}_t}z^1_{i,t}=\sum_{i\in\mathcal{I}_t}z_{i,t}.
\end{align}
Let us prove $\sum_{i\in\mathcal{I}_t}x^1_{i,t}=\sum_{i\in\mathcal{I}_t}x_{i,t}$ in \eqref{center_proof}  as an example. 
Since $\mathbf{X}^1_t=(\mathbf{I}_t-H_t\mathbf{L}_t^v)\mathbf{X}_t$, we have \eqref{equ:expansion} as shown below in this page,
where $l^1_{jj'}$ is the element in the $j$-th row and the $j'$-th column of matrix $\mathbf{I}_t-K_t\mathbf{L}_t^v$, we have
\begin{align}
	\label{pro_2}
	\sum_{i\in\mathcal{I}_t}x^1_{i,t}=\sum_{j=1}^{|\mathcal{I}_t|}x^1_{r_j,t}=\sum_{j=1}^{|\mathcal{I}_t|}\sum_{j'=1}^{|\mathcal{I}_t|}l^1_{jj'}x_{r_{j'},t}=\sum_{j'=1}^{|\mathcal{I}_t|}x_{r_{j'},t}\bigg(\sum_{j=1}^{|\mathcal{I}_t|}l^1_{jj'}\bigg).
\end{align}
Since
\begin{align}
	\sum_{j=1}^{|\mathcal{I}_t|}l^1_{jj'}=1+H_td_{j,t}-H_t\sum_{j'=1}^{|\mathcal{I}_t|}a_{jj',t}=1,
\end{align}
we have 
\begin{align}
	\sum_{i\in\mathcal{I}_t}x^1_{i,t}=\sum_{i\in\mathcal{I}_t}x_{i,t}\bigg(\sum_{j=1}^{|\mathcal{I}_t|}l^1_{jj'}\bigg)=\sum_{i\in\mathcal{I}_t}x_{i,t}.
\end{align}

\stripsep=0pt
\begin{strip}
	\hrule
	\begin{align}
		\label{equ:expansion}
		\setlength{\arraycolsep}{1.2pt}
		\renewcommand{\arraystretch}{1.4}
		\begin{bmatrix}x_{r_1,t}^1&y_{r_1,t}^1&z_{r_1,t}^1\\x_{r_2,t}^1&y_{r_2,t}^1&z_{r_2,t}^1\\\vdots&\vdots&\vdots\\x^1_{r_{|\mathcal{I}_t|},t}&y^k_{r_{|\mathcal{I}_t|},t}&z^k_{r_{|\mathcal{I}_t|},t}\end{bmatrix}=\begin{bmatrix}l^1_{11}&l^1_{12}&\cdots&l^1_{1{|\mathcal{I}_t|}}\\l^1_{21}&l^1_{22}&\cdots&l^1_{2{|\mathcal{I}_t|}}\\\vdots&\vdots&\ddots&\vdots\\l^1_{{|\mathcal{I}_t|}1}&l^1_{|\mathcal{I}_t|2}&\cdots&l^1_{{|\mathcal{I}_t|}{|\mathcal{I}_t|}}\end{bmatrix}\begin{bmatrix}x_{r_1,t}&y_{r_1,t}&z_{r_1,t}\\x_{r_2,t}&y_{r_2,t}&z_{r_2,t}\\\vdots&\vdots&\vdots\\x_{r_{|\mathcal{I}_t|},t}&y_{r_{|\mathcal{I}_t|},t}&z_{r_{|\mathcal{I}_t|},t}\end{bmatrix},
	\end{align}
\end{strip}
The equalities  $\sum_{i\in\mathcal{I}_t}y^1_{i,t}=\sum_{i\in\mathcal{I}_t}y_{i,t}$ and $\sum_{i\in\mathcal{I}_t}z^1_{i,t}=\sum_{i\in\mathcal{I}_t}z_{i,t}$ can be proved in the same manner. Therefore, \eqref{mean_position} holds.

\subsection{Satisfaction of Contraction Mapping Theorem}

In the metric space $\{\mathbf{X}_t\mid \frac{1}{|\mathcal{I}_t|}\sum_{i\in\mathcal{I}_t}\mathbf{p}_{i,t}=\mathbf{c}\}$, {we define} the distance between {any two topology matrices} $\mathbf{X}_t'$ and $\mathbf{X}_t''$ as 
\begin{align}
	d(\mathbf{X}_t',\mathbf{X}_t'')&=\left\|\mathbf{X}_t'-\mathbf{X}_t''\right\|_\infty\notag\\
	&=\max_{j\in\{1,...,|\mathcal{I}_t|\}}\{\sum_{s=1}^{3}|(\mathbf{X}_t'-\mathbf{X}_t'')_{js}|\}.
\end{align}
The distance between the GCO $G(\cdot)$ of $\mathbf{X}_t'$ and $\mathbf{X}_t''$ can be calculated as 
\begin{align}
	d(G(\mathbf{X}_t'),G(\mathbf{X}_t''))&=\left\|G(\mathbf{X}_t')-G(\mathbf{X}_t'')\right\|_\infty
	\notag\\
	&=\left\|(\mathbf{I}_t-H_t\mathbf{L}_t^v)(\mathbf{X}_t'-\mathbf{X}_t'')\right\|_\infty.
\end{align}
{Since the matrix infinity norm $\left\|\cdot\right\|_\infty$ has the sub-multiplicity property \footnote{{We prove the sub-multiplicity of $\left\|\cdot\right\|_\infty$ in Appendix \ref{appendix:submultiplicity}.}}, we have }
\begin{align}
	\label{equ:submul_contraction}
	\left\|(\mathbf{I}_t-H_t\mathbf{L}_t^v)(\mathbf{X}_t'-\mathbf{X}_t'')\right\|_\infty\le\left\|\mathbf{I}_t-H_t\mathbf{L}_t^v\right\|_\infty\left\|\mathbf{X}_t'-\mathbf{X}_t''\right\|_\infty,
\end{align}
{Thus, we can get}
\begin{align}
	&\quad\;\; d(G(\mathbf{X}_t'),G(\mathbf{X}_t''))\le\left\|\mathbf{I}_t-H_t\mathbf{L}_t^v\right\|_\infty\left\|\mathbf{X}_t'-\mathbf{X}_t''\right\|_\infty\notag\\
	&=\left\|\mathbf{I}_t-H_t(\mathbf{D}_t^v-\mathbf{A}_t^v)\right\|_\infty\left\|\mathbf{X}_t'-\mathbf{X}_t''\right\|_\infty\notag\\
	&=\max_{i\in\mathcal{I}_t}\bigg[|1-H_td^v_{i,t}|+\sum_{i'\in\mathcal{I}_t}|H_ta^v_{ii',t}|\bigg]\left\|\mathbf{X}_t'-\mathbf{X}_t''\right\|_\infty\notag\\
	&=\max_{i\in\mathcal{I}_t}\bigg[|1-H_t\sum_{i'\in\mathcal{I}_t}a^v_{ii',t}|+\sum_{i'\in\mathcal{I}_t}H_ta^v_{ii',t}\bigg]\left\|\mathbf{X}_t'-\mathbf{X}_t''\right\|_\infty.
\end{align}
When $H_t\le\frac{1}{\left\|\mathbf{A}^v_t\right\|_\infty}$, there is 
\begin{align}
	\label{K_2}
	1-H_t\sum_{i'\in\mathcal{I}_t}a^v_{ii',t}&\ge 1-\frac{1}{\left\|\mathbf{A}^v_t\right\|_\infty}\sum_{i'\in\mathcal{I}_t}a^v_{ii',t}\notag\\&\ge 1-\frac{1}{\left\|\mathbf{A}^v_t\right\|_\infty}\left\|\mathbf{A}^v_t\right\|_\infty=0,
\end{align}
and we have
\begin{align}
	\label{equ:contraction}
	&\quad\;\; d(G(\mathbf{X}_t'),G(\mathbf{X}_t''))\notag\\
	&\le\max_{i\in\mathcal{I}_t}\bigg[|1-H_t\sum_{i'\in\mathcal{I}_t}a^v_{ii',t}|+\sum_{i'\in\mathcal{I}_t}H_ta^v_{ii',t}\bigg]\left\|\mathbf{X}_t'-\mathbf{X}_t''\right\|_\infty\notag\\
	&=\max_{i\in\mathcal{I}_t}\bigg[1-H_t\sum_{i'\in\mathcal{I}_t}a^v_{ii',t}+H_t\sum_{i'\in\mathcal{I}_t}a^v_{ii',t}\bigg]\left\|\mathbf{X}_t'-\mathbf{X}_t''\right\|_\infty\notag\\
	&=\max_{i\in\mathcal{I}_t}[1]\left\|\mathbf{X}_t'-\mathbf{X}_t''\right\|_\infty\notag\\
	&=d(\mathbf{X}_t',\mathbf{X}_t'').
\end{align} 
{The condition for \eqref{equ:contraction} to be equal is that \eqref{equ:submul_contraction} takes the equal sign, i.e., 
\begin{align}
	\label{equ:equal_}
		\left\|(\mathbf{I}_t-H_t\mathbf{L}_t^v)(\mathbf{X}_t'-\mathbf{X}_t'')\right\|_\infty=\left\|\mathbf{I}_t-H_t\mathbf{L}_t^v\right\|_\infty\left\|\mathbf{X}_t'-\mathbf{X}_t''\right\|_\infty.
\end{align}
As shown in Appendix \ref{appendix:submultiplicity}, when \eqref{equ:equal_} holds, we can draw two inferences:
\begin{enumerate}
\item \emph{inference 1:} $\forall j'\in\{1,2,...,|\mathcal{I}_t|\},s\in\{1,2,3\}$, when $j=\arg\max_{j}\sum_{s=1}^3\left|\sum_{j'=1}^{|\mathcal{I}_t|}(\mathbf{I}_t-H_t\mathbf{L}_t^v)_{jj'}(\mathbf{X}_t'-\mathbf{X}_t'')_{j's}\right|$, we have $(\mathbf{I}_t-H_t\mathbf{L}_t^v)_{jj'}(\mathbf{X}_t'-\mathbf{X}_t'')_{j's}\ge 0$;
\item \emph{inference 2:}  $\sum_{s=1}^3|(\mathbf{X}_t'-\mathbf{X}_t'')_{j's}|=C',\;\forall j'\in\{1,2,...,|\mathcal{I}_t|\}$, where $C'\in\mathbb{R}$ is a constant.
\end{enumerate} 
When $H_t\le\frac{1}{\left\|\mathbf{A}^v_t\right\|_\infty}$, each element in $\mathbf{I}_t-H_t\mathbf{L}_t^v$ is not smaller than $0$, i.e., $(\mathbf{I}_t-H_t\mathbf{L}_t^v)_{jj'}\ge 0,\forall j,j'$. Hence, from \emph{inference 1}, we can derive $(\mathbf{X}_t'-\mathbf{X}_t'')_{j's}\ge 0, \forall j',s$. With \emph{inference 2}, we have
\begin{align}
	\sum_{j'}^{|\mathcal{I}_t|}\sum_{s=1}^3|(\mathbf{X}_t'-\mathbf{X}_t'')_{j's}|=\sum_{j'}^{|\mathcal{I}_t|}\sum_{s=1}^3(\mathbf{X}_t'-\mathbf{X}_t'')_{j's}=|\mathcal{I}_t|C'.
\end{align} 
Since $\mathbf{X}_t',\mathbf{X}_t''\in\{\mathbf{X}_t\mid \frac{1}{|\mathcal{I}_t|}\sum_{i\in\mathcal{I}_t}\mathbf{p}_{i,t}=\mathbf{c}\}$, we can derive
\begin{align}
C'&=\frac{1}{|\mathcal{I}_t|}\sum_{s=1}^3\bigg[\sum_{j'}^{|\mathcal{I}_t|}(\mathbf{X}_t')_{j's}-(\mathbf{X}_t'')_{j's}\bigg]\notag\\
&=\frac{1}{|\mathcal{I}_t|}|\mathcal{I}_t|(\text{sum}(\mathbf{c})-\text{sum}(\mathbf{c}))=0,
\end{align}
where $\text{sum}(\cdot)$ represents the summation of all the elements in vectors.
This indicates that $(\mathbf{X}_t'-\mathbf{X}_t'')_{j's}=0,\forall j',s$. Hence, when \eqref{equ:contraction} takes the equal sign, we have
\begin{align}
	d(G(\mathbf{X}_t'),G(\mathbf{X}_t''))=d(\mathbf{X}_t',\mathbf{X}_t'')=\left\|\mathbf{X}_t'-\mathbf{X}_t''\right\|_\infty=0.
\end{align}
Thereby, we have proved that $\forall \mathbf{X}_t',\mathbf{X}_t''\in\{\mathbf{X}_t\mid \frac{1}{|\mathcal{I}_t|}\sum_{i\in\mathcal{I}_t}\mathbf{p}_{i,t}=\mathbf{c}\}$,
\begin{align}
	d(G(\mathbf{X}_t'),G(\mathbf{X}_t''))\le\delta d(\mathbf{X}_t',\mathbf{X}_t''),
\end{align}
where $\delta\in(0,1)$.}
Hence, the GCO $G(\cdot)$ is a {contraction} mapping when $0<H_t\le\frac{1}{\left\|\mathbf{A}^v_t\right\|_\infty}$. There exists and only exists one topology matrix $\overline{\mathbf{X}}_t$ {(the Banach point of the GCO $G(\cdot)$)} such that 
\begin{align}
	\label{contraction}
	\overline{\mathbf{X}}_t=G(\overline{\mathbf{X}}_t)=\lim_{k\rightarrow\infty}G^k(\mathbf{X}_t).
\end{align}
\subsection{Property of $\overline{\mathbf{X}}_t=[\mathbf{c},\mathbf{c},...,\mathbf{c}]^T$}
Since $	\overline{\mathbf{X}}_t=G(\overline{\mathbf{X}}_t)$, we have
\begin{align}
\overline{\mathbf{X}}_t=(\mathbf{I}_t-H_t\mathbf{L}_t^v)\overline{\mathbf{X}}_t.
\end{align}
Eliminating $\overline{\mathbf{X}}_t$ on both sides of \refeq{contraction}, we have
\begin{align}
	-H_t\mathbf{L}_t^v\overline{\mathbf{X}}_t=0\;
	\Rightarrow\;\mathbf{L}_t^v\overline{\mathbf{X}}_t=0\;\Rightarrow\;\mathbf{L}_t^v[\overline{\mathbf{x}}_{1,t},\overline{\mathbf{x}}_{2,t},\overline{\mathbf{x}}_{3,t}]=0,
\end{align}
where $\overline{\mathbf{x}}_{s,t},s\in\{1,2,3\}$ is the $s$-th column vector of $\overline{\mathbf{X}}_t$. Furthermore, $\overline{\mathbf{x}}_{s,t}$ is the eigenvector of $\mathbf{L}^v_t$ corresponding to zero eigenvalue, since $\mathbf{L}^v_t\overline{\mathbf{x}}_{s,t}=0=0\overline{\mathbf{x}}_{s,t}$. Note that the VRG is a CCN, and the algebraic multiplicity of the zero  eigenvalue of $\mathbf{L}^v_t$ 
equals to 1. Hence, the eigenvectors can only be the multiple of $\mathbf{1}_{|\mathcal{R}_t|}$, i.e., $\overline{\mathbf{x}}_{s,t}=\alpha_{s}\mathbf{1}_{|\mathcal{I}_t|}$, where $\alpha_{s}\in\mathbb{R}$ {is a constant}, and $\alpha_{s}\ne 0$. Then we have
\begin{align}
	\label{gather_same}
	\overline{\mathbf{X}}_t=[\alpha_{1}\mathbf{1}_{|\mathcal{I}_t|}, \alpha_{2}\mathbf{1}_{|\mathcal{I}_t|},\alpha_{3}\mathbf{1}_{|\mathcal{I}_t|}]=[\overline{\mathbf{p}}_{r_1,t},\overline{\mathbf{p}}_{r_2,t},...,\overline{\mathbf{p}}_{r_{|\mathcal{I}_t|},t}]^T,
\end{align}
where $\overline{\mathbf{p}}_{r_j,t}=[\alpha_{1},\alpha_{2},\alpha_{3}]^T$. Equation \refeq{gather_same} indicates that iteratively applying the GCO $G(\cdot)$ to the $\mathbf{X}_t$ will gather all RUAVs to a same position $[\alpha_{1},\alpha_{2},\alpha_{3}]^T$. Since $\overline{\mathbf{X}}_t\in\{\mathbf{X}_t\mid \frac{1}{|\mathcal{I}_t|}\sum_{i\in\mathcal{I}_t}\mathbf{p}_{i,t}=\mathbf{c}\}$, we have 
\begin{align}
	\frac{1}{|\mathcal{I}_t|}\sum_{i\in\mathcal{I}_t}\overline{\mathbf{p}}_{i,t}=\frac{1}{|\mathcal{I}_t|}\sum_{i\in\mathcal{I}_t}[\alpha_{1},\alpha_{2},\alpha_{3}]^T=[\alpha_{1},\alpha_{2},\alpha_{3}]^T=\mathbf{c}.
\end{align}
Hence, we have $\overline{\mathbf{X}}_t=[\overline{\mathbf{p}}_{r_1,t},\overline{\mathbf{p}}_{r_2,t},...,\overline{\mathbf{p}}_{r_{|\mathcal{I}_t|},t}]^T=[\mathbf{c},\mathbf{c},...,\mathbf{c}]^T$.

\section{{Proof of Proposition 2}}
\label{appendix:proposition_2}
Consider the GCO $G(\cdot)$ in metric space $\{\mathbf{X}_t\mid \frac{1}{|\mathcal{I}_t|}\sum_{i\in\mathcal{I}_t}\mathbf{p}_{i,t}=\mathbf{c}\}$, where $\mathbf{c}$ is the center of all RUAVs.
From Appendix \ref{appendix:proposition_1:closure}, we know that $\mathbf{X}^k_{t}\in\{\mathbf{X}_t\mid \frac{1}{|\mathcal{I}_t|}\sum_{i\in\mathcal{I}_t}\mathbf{p}_{i,t}=\mathbf{c}\},\forall k\in\mathbb{N}_+$. As the GCO $G(\cdot)$ is a contraction mapping, we have
\begin{align}
	d(\mathbf{X}^{k+1}_{t},\overline{\mathbf{X}}_t)=d(G(\mathbf{X}^{k}_{t}),G(\overline{\mathbf{X}}_t))\le\delta d(\mathbf{X}^{k}_{t},\overline{\mathbf{X}}_t),\;\forall k\in\mathbb{N}_+,
\end{align}
which means
\begin{align}
	\max_{i\in\mathcal{I}_t}\left\|\mathbf{p}^{k+1}_{i,t}-\mathbf{c}\right\|_1\le\delta	\max_{i\in\mathcal{I}_t}\left\|\mathbf{p}^{k}_{i,t}-\mathbf{c}\right\|_1,
\end{align}
where $\delta\in(0,1)$, and $\left\|\cdot\right\|_1$ represents the 1-norm operator of vectors. Hence, the positions of RUAVs are moving towards the center of their positions $\frac{1}{|\mathcal{I}_t|}\sum_{i\in\mathcal{I}_t}\mathbf{p}_{i,t}=\mathbf{c}$.
\section{{Proof of the sub-multiplicity of $\left\|\cdot\right\|_\infty$}}
\label{appendix:submultiplicity}
Consider two arbitrary matrices $\mathbf{A}=(a_{ij})\in\mathbb{R}^{m\times n}$ and $\mathbf{B}=(b_{jk})\in\mathbb{R}^{n\times r}$, where $m,n,r\in\mathbb{R}$. We have
\begin{align}
	\label{equ:submul}
	\left\|\mathbf{A}\mathbf{B}\right\|_\infty&=\max_{i\in\{1,...,m\}}\sum_{j=1}^r\left|\sum_{k=1}^na_{ik}b_{kj}\right|\notag\\
	&\le \max_{i\in\{1,...,m\}}\sum_{j=1}^r\sum_{k=1}^n|a_{ik}||b_{kj}|\notag\\
	&=\max_{i\in\{1,...,m\}}\sum_{k=1}^n|a_{ik}|\bigg(\sum_{j=1}^r|b_{kj}|\bigg)\notag\\
	&\le \max_{i\in\{1,...,m\}}\sum_{k=1}^n|a_{ik}|\bigg(\max_{k\in\{1,...,n\}}\sum_{j=1}^r|b_{kj}|\bigg)\notag\\
	&=\left\|\mathbf{A}\right\|_\infty\left\|\mathbf{B}\right\|_\infty.
\end{align}
Hence, the sub-multiplicity of $\left\|\cdot\right\|_\infty$ holds. The equality condition for \eqref{equ:submul} is that
\begin{enumerate}
	\item for $i=\arg\max_{i\in\{1,...,m\}}\sum_{j=1}^r\left|\sum_{k=1}^na_{ik}b_{kj}\right|$, $a_{ik}b_{kj}\ge 0,\; \forall k\in\{1,2,...,n\},j\in\{1,2,...,r\}$,
	\item $\sum_{j=1}^r|b_{kj}|=C',\;\forall k\in\{1,2,...,n\}$, where $C'\in\mathbb{R}$ is a constant
\end{enumerate}
hold at the same time.

\end{appendices}



\end{document}